\documentclass[12pt]{article}

\usepackage[dvipsnames]{xcolor}
\usepackage{amsmath,amsthm,amssymb,natbib,verbatim,cancel, bbm, tikz, tkz-euclide, graphicx,subcaption, float,titling}

\usepackage[scr=boondoxo  
            ]   
           {mathalpha}

\usepackage{caption}
\usetikzlibrary{bending,tikzmark}
\usetikzlibrary{shapes.misc, positioning, arrows,decorations.markings}
 \usetikzlibrary{calc}

 \usepackage{tikz-3dplot}
\usepackage{pgfplots}
\usetikzlibrary{shapes}
\tdplotsetmaincoords{60}{110}

\usepgfplotslibrary{fillbetween}
\usetikzlibrary{intersections}

 \usepackage{hyperref}
\hypersetup{
    colorlinks =true,
    allcolors =NavyBlue,
    allbordercolors =white,
}

\usepackage[margin=1in]{geometry} 
\usepackage{setspace}

\tdplotsetmaincoords{70}{120} 

\theoremstyle{definition}
\newtheorem{theorem}{Theorem}
\newtheorem*{theorem*}{Theorem}
\newtheorem*{lemma*}{Lemma}

\newtheorem{proposition}{Proposition}
\newtheorem{corollary}{Corollary}

\newtheorem{lemma}{Lemma}

\newtheorem{example}{Example}

\definecolor{amber}{rgb}{1.0,0.75,0.0}
\definecolor{aqua}{rgb}{0,1,1}
\definecolor{amaranth}{rgb}{1,.6,.62}

\newcommand\aug{\fboxsep=-\fboxrule\!\!\!\fbox{\strut}\!\!\!}

\title{Identification in Stochastic Choice\thanks{We are grateful to Roy Allen, Jose Apesteguia, Pierpaolo Battigalli, Andr\'{e}s Carvajal, Simone Cerreia-Vioglio, Christopher Chambers, Emel Filiz-Ozbay, Satoshi Fukuda, Christopher Kops, SangMok Lee, Jay Lu, Yusufcan Masatlioglu, Paulo Natenzon, Axel Niemeyer, Luciano Pomatto, Justus Preusser, Collin Raymond, Kota Saito, Fedor Sandomirskiy, Bumin Yenmez, Kemal Yildiz, and Xinhan Zhang as well as seminar participants at Bilkent, BRIC 11, BSE Summer Forum Workshop on Choice and Decision, EC'25, Italian Junior Workshop on Economic Theory 2025, Liverpool, RUD 2025, UC Davis, and WUSTL for discussions and helpful comments. This paper subsumes ``Identifying Restrictions on the Random Utility Model".}}
\author{Peter Caradonna\thanks{Division of the Humanities and Social Sciences, Caltech.  Email: \href{mailto:ppc@caltech.edu}{\nolinkurl{ppc@caltech.edu}}.} \hspace{.1cm} and Christopher Turansick\thanks{Department of Decision Sciences and IGIER, Bocconi University. Email: \href{mailto:christopher.turansick@unibocconi.it}{\nolinkurl{christopher.turansick@unibocconi.it}}.
}}
\date{\today}

\begin{document}
\onehalfspacing

\maketitle

\begin{abstract}
    We characterize the identified sets of a wide range of stochastic choice models, including random utility, various models of boundedly-rational behavior, and dynamic discrete choice. In each of these settings, we show two distributions over choice rules are observationally equivalent if and only if they can be obtained from one another via a finite sequence of simple swapping transforms. We leverage this to obtain complete descriptions of both the defining inequalities and extreme points of these identified sets. In cases where choice frequencies vary smoothly with some parameters, we provide a novel global-inverse result for practically testing identification.
\end{abstract}

\section{Introduction}\label{sec:Intro}

A fundamental question for any model of behavior is to what degree its underlying primitives can be uniquely determined from individual or aggregate-level data. Even when exact identification fails, knowledge of the identified set can provide crucial information for the design of policy, counterfactual simulations, and welfare analysis.\footnote{E.g.\ \cite{manski2007partial, agarwal2018demand, kalouptsidi2021counterfactual}. See also  \cite{marschak1953economic} and  \cite{allen2019identification}.}\medskip

Despite their practical importance, the identification properties of many workhorse models of stochastic choice, including the random utility model, have remained poorly understood.\footnote{See, e.g., \cite{fishburn1998stochastic, mcclellon2015unique, strzalecki2025stochastic}.} In response, an extensive literature has emerged, studying additional restrictions under which analysts can uniquely recover an underlying distribution of parameters governing behavior from choice share data.\footnote{E.g.\ \cite{luce1959individual, mcfadden1972conditional, gul2006random, apesteguia2017single, manzini2018dual, filiz2023progressive, yang2023random, suleymanov2024branching}. Similar questions arise in the context of demand invertibility, see \cite{berry2024nonparametric}.}\medskip

In this paper, we provide complete characterizations of the identified sets of a wide range of stochastic choice theories, under arbitrary supplemental restrictions.  Our primary observation is that, in each of these settings, each set of empirically indistinguishable distributions (e.g.\ over preferences) can be straightforwardly described in terms of a simple family of transformations. We term these \emph{Ryser swaps}, in light of a connection to the discrete tomography literature.\footnote{E.g.\ \cite{ryser1957combinatorial, fishburn1991sets, Kong1999}. These tools have found recent application in economic theory, see  \cite{he2025private}.}  In particular, we show two distributions generate identical choice shares over any menu if, and only if, they are related by a finite sequence of such swaps.\medskip

For expositional clarity, we focus our discussion on the random utility model. However our findings apply straightforwardly to many other settings, including to models of incomplete or boundedly-rational behavior, correlated decision-making, and dynamic stochastic choice.\footnote{We discuss these extensions in \autoref{sec:Extensions}. See also \hyperlink{bru}{Online Appendix B} for further details.}

\begin{example}\label{introex}

A population of voters are faced with four potential urban planning policies.  Policy $a$ calls for regional high-speed rail expansion, and $b$ for large-scale investment in local bus and subway systems, whereas $c$ for moderate congestion-pricing, and $d$ an expansion of bike lanes and pedestrian zones.\medskip

The population is comprised of four types of voters. The first are `big picture' activists, who prefer large-scale investment in public transit and a centralized policies ($a \succ b \succ c \succ d$), whereas incrementalists prefer more funding to less, but prize more immediate, practical transit options, and a ground-up approach to policy ($b \succ a \succ d \succ c$). Similarly, suburban commuters prefer rail to buses and dislike taxes on their vehicles ($a \succ b \succ d \succ c$), whereas urban commuters primarily use public transit and prioritize less congested streets ($b \succ a \succ c \succ d$). We summarize these four preferences below.\footnote{The observation that a distribution with full support on these four preferences is unidentified from choice frequency data is due to \cite{fishburn1998stochastic}. Moreover, such examples are minimal: the random utility model is known to be identified whenever there are three or fewer alternatives, e.g.\ \cite{block1959random}.}\medskip

    \begin{table}[h!]
\centering
 \begin{tabular}{c c c c} 
 
 $\succ_1$ & $\succ_2$ & $\succ_3$ & $\succ_4 $\\ [0.25ex] 
 \hline\hline
  a & b & a & b\\ 
  b & a & b & a\\
  c & d & d & c\\
  d & c & c & d\\ 
 \end{tabular}
\end{table}

Suppose straw-poll data over various subsets of these policies is consistent with the societal distribution of preferences being $\mu = \big(\frac{1}{4}, \frac{1}{4}, \frac{3}{8}, \frac{1}{8}\big)$, and consider now a new policy, likely to benefit suburban commuters.\footnote{That is, voters who hold $\succ_3$.} What can be said about the size of this share of the population?\medskip

Our results show suburban commuters constitute anywhere between $\frac{1}{4}$ and $\frac{5}{8}$ of the population. Moreover, these bounds are tight: for any fraction $\frac{1}{4} \le \alpha \le \frac{5}{8}$, there is a distribution over these four preferences where the population share of suburban commuters is precisely $\alpha$, and which yields identical vote shares to $\mu$ over \emph{any} menu of policies.\medskip

The key feature in this example is that all four preferences rank $a$ and $b$ above $c$ and $d$, but disagree about the relative rankings of the alternatives within these groupings.  In particular, $\succ_3$ and $\succ_4$ can be obtained by taking $\succ_1$ and $\succ_2$, and exchanging their respective rankings between $c$ and $d$, while preserving all other comparisons.\medskip

Given any distribution over these four preferences with full support, by subtracting some small, but equal amount of probability mass from either $\succ_1$ and $\succ_2$, or $\succ_3$ and $\succ_4$, and re-assigning it equally to the opposing pair (here, obtained by exchanging only the rankings of $c$ and $d$) one obtains a modified distribution with identical choice probabilities on all menus. Intuitively, such transformations affect only the \emph{correlation} between preferring $ a$ to  $b$ and $c$ to  $d$, but not the likelihood of these events. \medskip

In this setting, it turns out these are the \emph{only} transformations which preserve all choice probabilities. As such, the minimal and maximal possible population frequencies for $\succ_3$ can be obtained by simply maximizing the amount of mass uniformly transferred from the pair $\{\succ_3,\succ_4\}$ to ~$\{\succ_1, \succ_2\}$~ or vice-versa, yielding the extremal, empirically indistinguishable distributions $\underbar{$\mu$} = \big(\frac{3}{8}, \frac{3}{8}, \frac{1}{4}, 0\big)$ and $\bar{\mu} = \big(0, 0, \frac{5}{8}, \frac{3}{8}\big)$.\hfill $\blacksquare$
\end{example}

As illustrated by \autoref{introex}, failures of identification arise for a straightforward reason: choice probabilities do not, in general, uniquely determine the \emph{correlation} between choices.\footnote{For instance, in \autoref{introex}, the correlation between choices on menus $\{a,b\}$ and $\{c,d\}$.}  Indeed, the identified sets of the random utility model are spanned precisely by the possible correlation patterns that cannot be ruled out by an empiricist with access only to aggregate, rather than individual, level data.\medskip

Our first main result characterizes the full collection of preference distributions consistent with a given dataset. We show any transformation of a distribution which preserves all choice probabilities is equivalent to applying a finite sequence of simple swaps of the form considered in \autoref{introex}. Using this representation, we provide a straightforward, geometric description of the random utility model's identified sets, both in terms of their defining inequalities, and their extreme points. As consumer welfare measures typically rely on preference information (\citealt{deaton1980measurement, mckenzie1983measuring}), this can be used, e.g., to obtain tight bounds on the estimated aggregate or distributional effects of policy interventions.\footnote{See, e.g., \citet{deb2023revealed}. \cite{tebaldi2023nonparametric} and \cite{kamat2025estimating} consider similar questions in the context of attribute variation. Our results may be seen as providing complementary tools for settings with menu variation, e.g.\ \cite{mcfadden2005revealed,   kitamura2018nonparametric}, as well as to dynamic, or imperfectly rational, models of behavior; see \autoref{sec:Extensions}.}\medskip

We then investigate how our results specialize under common, structured classes of supplemental restrictions. For example, in many instances an analyst may wish to a priori restrict the \emph{set} of preferences assumed held by a population, but not their relative frequencies.\footnote{
For example, expected utility preferences on domains of lotteries (\citealt{gul2006random}, \citealt{apesteguia2018monotone}), exponentially discounted or quasi-hyperbolic preferences over streams (\citealt{lu2018random}), subjective expected utility over acts (\citealt{lu2016random, duraj2018dynamic}), or single-peaked preferences on ordered domains (\citealt{apesteguia2017single}). Such restrictions are also implicit in more general, dynamic settings, e.g.\ \cite{rust1987,lu2024did}; see also \cite{frick2019dynamic}.}  We characterize identified sets subject to arbitrary support restrictions, as well as those families of preferences $\mathcal{S}$ with the property that any data is consistent with at most a single measure whose support belongs to $\mathcal{S}$. In practice, $\mathcal{S}$ often consists of restrictions of some parametric family of utilities.\footnote{E.g.\ \cite{apesteguia2018monotone} consider the case of random CARA or CRRA expected utility.} In this case, our results provide an exact test for practitioners to determine when choice probabilities uniquely identify a population distribution over parameters.\medskip  

We also consider cases in which choice probabilities are specified directly, as smooth functions of some vector of parameters.  Using topological techniques, we provide a pair of necessary and sufficient conditions for the global invertibility of general, non-linear systems. In contrast to many existing results, we require no monotonicity properties.\footnote{C.f.\ \cite{gale1965jacobian, beckert2004invertibility, simsek2005uniqueness, berry2013connected, lindenlaub2017sorting, wang2021blp, allen2022injectivity}. See also `dominant diagonal' conditions, e.g.\ \cite{mckenzie1960matrices, rosen1965existence}.} Thus, for example, applied to the problem of determining the invertibility of a smooth demand, our results are applicable even when goods may fail to be substitutes, and in settings where the law of demand may fail. We demonstrate the practical applicability of our conditions by proving a novel result, that the dynamic, logit habit formation model of \cite{TuransickIntertemporal2025} is not identified from unconditional choice data.\medskip

Finally, when the set of alternatives is ordered, an elegant approach due to \cite{apesteguia2017single} shows that requiring single-crossing structure on the support of a rationalization yields identification.\footnote{See also \cite{filiz2023progressive, apesteguia2023random}.} However, such rationalizations need not always exist, even for data compatible with the random utility model. Using our notion of a Ryser swap, we provide a principled generalization we term \emph{swap progressivity}.  Whenever the data admit a single-crossing representation, this will always be the unique swap-progressive rationalization. However, we show a unique swap-progressive representation exists for \emph{any} data compatible with the random utility model. This provides a novel, practical identification strategy that incorporates the natural ordering over alternatives.\medskip

The paper proceeds as follows. In \autoref{sec:Model}, we formally introduce the random utility model. In \autoref{sec:GeometryResults}, we formalize our core notion of a Ryser swap, and present our main geometric characterization. \autoref{sec:Support} provides a dual characterization of identified sets in terms of their extreme points, and characterizes identifying support restrictions. In \autoref{sec:Ordered}, we introduce our notion of swap progressivity and establish its properties, and \autoref{sec:Parm} considers the identification of general, parametric stochastic choice models. Finally, \autoref{sec:Extensions} discusses extensions to other environments.  To more concretely relate our contributions to existing work, we defer our discussion of related literature to \autoref{sec:RelatedLit}. \autoref{sec:Conc} concludes.

\section{The Random Utility Model}\label{sec:Model}
Let $X$ denote a fixed, finite set of alternatives over which an individual chooses.  A {\bf preference} is a linear order on $X$; we denote the set of all preferences by $\mathcal{L}$.\footnote{A linear order is a complete, transitive, and antisymmetric binary relation on $X$.} To conserve on notation, we will often write preferences as strings of alternatives, arranged in descending order, i.e.\ `$abcd$' denotes the ranking $a \succ b \succ c \succ d$. For any preference $\succ$ and any $0 \le k \le \vert X \vert$, the $k$-{\bf initial} and $k$-{\bf terminal} segments of $\succ$ are the ordered strings consisting of the $k$ most-preferred alternatives and the $\vert X \vert - k$ least-preferred.\footnote{For $k=0$ (resp.\ $k = \vert X \vert$) we define $s^\uparrow_k(\succ)$ (resp.\ $s^k_\downarrow(\succ)$) as the empty string.} We will denote these by $s^\uparrow_k(\succ)$ and $s^k_\downarrow(\succ)$ respectively. For example, if $\succ$ corresponds to the ordering $abcde$, then:
\[
    s^\uparrow_2(\succ) = ab \quad \textrm{ and } \quad s^2_\downarrow(\succ) = cde.
\]

A function $\rho: X \times 2^X \setminus \{\varnothing\} \to [0,1]$ defines a {\bf random choice rule} if, for all non-empty subsets $A \subseteq X$,
\[
    \sum_{x \in A} \rho(x,A) = 1.
\]
Random choice rules are assumed observable; they constitute the basic data in our identification problem.  For any finite set $A$, we use $\Delta(A)$ to denote the set of probability measures over $A$. A {\bf restriction} of the random utility model is any subset $\mathcal{M} \subseteq \Delta(\mathcal{L})$. We say a random choice rule is rationalizable subject to a restriction $\mathcal{M}$ if there exists some probability distribution $\mu \in \mathcal{M}$ such that, for all $x \in A \subseteq X$, we have:
\[
    \rho(x,A) = \mu\{\succ \, \in \mathcal{L}\;  \vert \; x \textrm{ is maximal in } A \} = \sum_{\succ \in \mathcal{L}} \mu(\succ) \mathbbm{1}_{\{x \; \succ \,A \setminus x\}}.
\]
Under the interpretation of a restriction $\mathcal{M}$ as describing  a set of possible makeups of a heterogeneous society, a collection of observed choice frequencies $\rho$ are $\mathcal{M}$-rationalizable if and only if they arise as the distribution of constrained-optimal outcomes according to some population composition $\mu \in \mathcal{M}$.\medskip

We say two measures $\mu, \nu \in \Delta(\mathcal{L})$ are {\bf observationally equivalent} if, for all $x \in A \subseteq X$,
\begin{equation}\label{behavioralequiv}
    \mu\{\succ \, \in \mathcal{L}\;  \vert \; x \textrm{ is maximal in } A \} = \nu\{\succ \, \in \mathcal{L}\;  \vert \; x \textrm{ is maximal in } A \},
\end{equation}
i.e. they generate identical choice frequencies on every choice set $\varnothing \subsetneq A \subseteq X$.  Finally, a restriction $\mathcal{M}$ is  {\bf identifying} if it contains no pair of distinct, observationally equivalent measures.

\section{The Geometry of Identification}\label{sec:GeometryResults}

The unrestricted random utility model (i.e.\ $\mathcal{M} = \Delta(\mathcal{L})$) has long been known to be unidentified.\footnote{See \cite{fishburn1998stochastic}.} Thus, to each distribution $\mu \in \Delta(\mathcal{L})$, there corresponds some generally non-singleton collection of observationally equivalent distributions. \medskip

Let $\mathcal{P}$ denote the set of random choice rules on $X$, and define the mapping $\Phi: \Delta(\mathcal{L}) \to \mathcal{P}$ via:
\begin{equation}\label{measuretochoiceprobmap}
    \Phi(\mu)_{(x,A)} = \begin{cases}
         \mu\{\succ \, \in \mathcal{L}\;  \vert \; x \textrm{ is maximal in } A \} & \textrm{ if } x \in A\\
         0 & \textrm{ if } x \not \in A.
    \end{cases}
\end{equation}
By \eqref{behavioralequiv} two distributions $\mu,\nu \in \Delta(\mathcal{L})$ are observationally equivalent if and only if $\Phi(\mu) = \Phi(\nu)$.  However, as $\Phi$ is linear in $\mu$, there exists some subspace $\mathcal{R} \subseteq \mathbb{R}^\mathcal{L}$ such that, whenever $\Phi(\mu) = \rho$, the identified set $\Phi^{-1}(\rho)$ is precisely $\big(\mu + \mathcal{R} \big) \cap \Delta(\mathcal{L})$.\footnote{Here we identify $\mathbb{R}^\mathcal{L}$ with the space of all signed measures over linear orders on $X$.}  In particular, every identified set is fully characterized by this subspace.

\subsection{Compatible Pairs and the Ryser Subspace}

We say that a pair of preferences are $k$-{\bf compatible}, $0 \le k \le \vert X \vert$, if both preferences agree on the set of $k$-best alternatives.\footnote{In particular, any pair of preferences is vacuously $0$-compatible.} In other words, $\succ_1$ and $\succ_2$ are $k$-compatible if $s^\uparrow_k(\succ_1)$ can be obtained by permuting the terms of the sequence $s^\uparrow_k(\succ_2)$, and analogously $s^k_\downarrow(\succ_1)$ from $s^k_\downarrow(\succ_2)$. This ensures the concatenated sequences:
\begin{equation} \label{swap}
    \succ_1' \,=\, s^\uparrow_k(\succ_1) \, s^k_\downarrow(\succ_2) \quad \textrm{ and } \quad \succ_2' \,= \,s^\uparrow_k(\succ_2) \, s^k_\downarrow(\succ_1)
\end{equation}
define valid preferences. We refer to the pair of preferences obtained in this manner as the $k$-{\bf conjugates} of $\succ_1$ and $\succ_2$. If $\succ_1$ and $\succ_2$ disagree on both (i) the ranking of the $k$-best alternatives and (ii) the ranking of the $(\vert X \vert-k)$-worst, $\succ_1'$ and $\succ_2'$ will be distinct from either $\succ_1$ or $\succ_2$; in this case we refer to $\succ_1$ and $\succ_2$ as {\bf non-trivially} $k$-compatible.\medskip

Our interest in compatible pairs of preferences is motivated by the following proposition, which says that any uniform distribution supported on a compatible pair of preferences is behaviorally equivalent to the uniform distribution over the pair obtained via \eqref{swap}.

\begin{proposition}\label{conjugatelemma}
    Let $\succ_1, \succ_2 \in \mathcal{L}$ be $k$-compatible, with $k$-conjugates $\succ_1'$ and $\succ_2'$. Then the uniform distributions on $\{\succ_1, \succ_2\}$ and $\{\succ_1', \succ_2'\}$ are observationally equivalent.
\end{proposition}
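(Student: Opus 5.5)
The plan is to verify \eqref{behavioralequiv} directly, one pair $(x,A)$ at a time. It suffices to show that for every menu $A$ and every $x \in A$, the number of preferences in the multiset $\{\succ_1,\succ_2\}$ that rank $x$ best in $A$ equals the corresponding number in $\{\succ_1',\succ_2'\}$. Both distributions put mass $\tfrac12$ on each element, so choice probabilities are just these counts divided by two. Write $T$ for the common set of the $k$ best alternatives, which is the underlying set of both $s^\uparrow_k(\succ_1)$ and $s^\uparrow_k(\succ_2)$ by $k$-compatibility, and write $B = X\setminus T$. By construction \eqref{swap}, each of the four preferences ranks every element of $T$ above every element of $B$.

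The first step is a simple observation about maximal elements. Fix $A$. If $A\cap T \neq \varnothing$, then under any of the four preferences the maximal element of $A$ lies in $A\cap T$. It is determined solely by the initial segment restricted to $A \cap T$. If instead $A \subseteq B$, the maximal element is determined solely by the terminal segment restricted to $A$.

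The second step is the case split.

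\emph{Case $A\cap T\neq\varnothing$.} The maximizer under $\succ_1'$ coincides with that under $\succ_1$, since both share the initial segment $s^\uparrow_k(\succ_1)$. Likewise the maximizer under $\succ_2'$ coincides with that under $\succ_2$. So the two multisets of maximizers agree.

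\emph{Case $A\subseteq B$.} Here $\succ_1'$ shares its terminal segment with $\succ_2$, and $\succ_2'$ shares its terminal segment with $\succ_1$. The multiset of maximizers $\{m_{\succ_1'}(A), m_{\succ_2'}(A)\}$ therefore equals $\{m_{\succ_2}(A), m_{\succ_1}(A)\}$, with only the labels swapped.

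In either case the counts agree for every $x$, which gives $\Phi(\tfrac12\delta_{\succ_1}+\tfrac12\delta_{\succ_2}) = \Phi(\tfrac12\delta_{\succ_1'}+\tfrac12\delta_{\succ_2'})$.

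There is no substantial obstacle. The only point needing care is stating precisely why $\succ_1'$ and $\succ_2'$ are valid linear orders with the prescribed segments. This follows because $k$-compatibility means the two initial segments have the same underlying set $T$, so the concatenations in \eqref{swap} are permutations of $X$. Once that is in place, the dichotomy ``$A$ meets $T$ or $A \subseteq B$'' handles all menus. The degenerate cases $k=0$ and $k=\vert X\vert$ and trivial compatibility are covered automatically: the conjugates are then just $\{\succ_2,\succ_1\}$ or $\{\succ_1,\succ_2\}$.
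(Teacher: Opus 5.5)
Your proof is correct, but it takes a different route from the paper. The paper never checks choice probabilities menu by menu. It obtains \autoref{conjugatelemma} as an immediate corollary of \autoref{thm:flowryserspacetheorem} applied to the random utility graph of \cite{fiorini2004short}:
\begin{itemize}
\item There, $\succ_1$ and $\succ_2$ are paths that both pass through the node $X\setminus T$.
\item The $k$-conjugates are obtained by exchanging the two tails at that node, so the multiset of traversed edges, and hence the induced flow, is unchanged. This is the easy direction of \autoref{MainLemmaflow}.
\item Fiorini's theorem says the rationalizations of $\rho$ are exactly the path decompositions of its Block--Marschak flow. That converts ``same flow'' into ``same choice probabilities.''
\end{itemize}

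Your dichotomy ($A\cap T\neq\varnothing$ versus $A\subseteq X\setminus T$) is the same idea unpacked. The maximizer of $A$ is read off from the first edge of the path that deletes an element of $A$, and that edge lies entirely in the prefix before $X\setminus T$ or entirely in the suffix.

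What your argument buys is a self-contained verification. It needs neither the M\"obius-inversion correspondence between flows and choice rules nor any cited result, and it makes the intuition explicit: each menu's choice is decided either within the top block or within the bottom block.

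What the paper's route buys is uniformity. The single observation that tail swaps preserve edge multisets gives the ``if'' direction for every model in \autoref{sec:Extensions} at once. It also sits inside the flow framework that the much harder converse in \autoref{ryserspacetheorem} requires.
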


As the following example illustrates, the observational indeterminacy highlighted by \autoref{conjugatelemma} is a consequence of the fact that a joint distribution is not, in general, uniquely determined by the requirement that it possess uniform marginals.

\begin{example}\label{fishex}
Consider again \autoref{introex}, where $X = \{a,b,c,d\}$, and there are four preferences: $\succ_1: abcd$, $\succ_2: badc$, $\succ_3: abdc$, and $\succ_4: bacd$. Note that the pair $\succ_1$ and $\succ_2$ are $2$-compatible, with $2$-conjugates $\succ_3$ and $\succ_4$.\footnote{That is, $\succ_3 \, =\,  s^\uparrow_2(\succ_1)\, s^2_\downarrow(\succ_2)$, and $\succ_4 \, = \, s^\uparrow_2(\succ_2)\, s^2_\downarrow(\succ_1)$.} We may view these four preferences as elements of a product set $\{ab, ba\} \times \{cd, dc\}$, by associating an ordered pair of strings with their concatenation, e.g.:
\[
    (ab, cd) \mapsto abcd.
\]
A joint distribution over $\{ab, ba\} \times \{cd, dc\}$ generates the same choice probabilities on every menu as the uniform distribution on $\{\succ_1, \succ_2\}$ if and only if both of its marginals are themselves uniform.\footnote{This can be verified by direct computation or, e.g., follows from Theorem 5 of \cite{falmagne1978representation}.} Thus, in this setting, \autoref{conjugatelemma} asserts the observational equivalence of the two extremal joint measures with uniform marginals: $\mu_{12}$, the uniform on $\{\succ_1,\succ_2\}$, which perfectly correlates the choice of $a$ from $\{a,b\}$ with $c$ from $\{c,d\}$, and $\mu_{34}$, the uniform on $\{\succ_3,\succ_4\}$ which perfectly anti-correlates these choices. See \autoref{fig:ProdAndSimp}.\hfill $\blacksquare$
\end{example}

\begin{figure}
\centering
\begin{subfigure}[t]{.45\textwidth}
\begin{tikzpicture}[scale = .65]
        \draw[thick] (0,0)--(0,6)--(6,6)--(6,0)--cycle;
        \draw[thick] (0,0)--(6,0)--(6,3)--(0,3)--cycle;
        \draw[thick] (0,0)--(0,6)--(3,6)--(3,0)--cycle;

        \node[left] at (-.5,1.5) {  $ba$};
        \node[left] at (-.5,4.5) {  $ab$};

        \node[left] at (-1.75,3) { \large $\,$};

        \node[above] at (1.5, 6.5) {  $cd$};
        \node[above] at (4.5, 6.5) {  $dc$};

        \node[above] at (3, 7.5) { \large $\,$};

        \node at (1.5,4.5) { $\succ_1$};
        \node at (1.5,1.5) { $\succ_4$};
        \node at (4.5,4.5) { $\succ_3$};
        \node at (4.5,1.5) { $\succ_2$};

    \end{tikzpicture}
        \subcaption{The preferences of \autoref{introex}, viewed as a product set. The uniforms $\mu_{12}$ and $\mu_{34}$ differ only in their correlation structure.}
\end{subfigure}
\begin{subfigure}[t]{.45\textwidth}
\begin{tikzpicture}[line join = round, line cap = round, scale =2.5]

\node at (-1.5,0) {};
\coordinate [label=above:$\succ_4$] (3) at (0,{sqrt(2)},0);
\coordinate [label=left:$\succ_3$] (2) at ({-.5*sqrt(3)},0,-.5);
\coordinate [label=below:$\succ_2$] (1) at (0,0,1);
\coordinate [label=below right:$\succ_1$] (0) at ({.5*sqrt(3)},0,-.5);

\coordinate [label= above left: \footnotesize $\mu_{34}$] (4) at ({-.25*sqrt(3)}, {.5*sqrt(2)}, -0.25);
\coordinate [label = below right: \footnotesize $\mu_{12}$] (5) at ({.25*sqrt(3)}, 0, 0.25);
\coordinate (6) at (0, {.25*sqrt(2)}, 0);

\draw[dashed, thick] (0)--(2);
\draw[very thick, Orange] (4) -- (5);
\draw[very thick, dashed, Orange] ($(4)!0.6!(0)$)--($(5)!0.6!(0)$);
\draw[very thick,dashed, Orange] ($(4)!0.6!(1)$)--($(5)!0.6!(1)$);

\draw[thick] (1)--(0)--(3)--cycle;
\draw[thick] (2)--(1)--(3)--cycle;
\draw[thick] (1)--(0);
\draw[thick] (1)--(2);
\draw[thick] (2)--(3);
\draw[thick] (1)--(3);
\draw[thick] (0)--(3);

\draw[dotted] (4)--(1);
\draw[dotted] (4)--(0);

\draw[fill, fill opacity = .05, dotted] (4)--(1)--(0)--cycle;

\fill (0) circle [black!90, radius=.5pt];
\fill (1) circle [black!90, radius=.5pt];
\fill (2) circle [black!90, radius=.5pt];
\fill (3) circle [black!90, radius=.5pt];
\fill (4) circle [black!90, radius=.5pt];
\fill (5) circle [black!90, radius=.5pt];

\draw[fill, black!90, radius=.35pt] ($(4)!0.6!(0)$) circle;
\draw[fill, black!90, radius=.35pt] ($(5)!0.6!(0)$) circle;
\draw[fill, black!90, radius=.35pt] ($(4)!0.6!(1)$) circle;
\draw[fill, black!90, radius=.35pt] ($(5)!0.6!(1)$) circle;

\filldraw[Orange!90,radius=.6pt] (4) circle;
\filldraw[Orange!90,radius=.6pt] (5) circle;


\end{tikzpicture}
\subcaption{\autoref{conjugatelemma} asserts the observational equivalence of the uniform distributions $\mu_{12}$ and $\mu_{34}$.}
\end{subfigure}
\caption{By \autoref{conjugatelemma}, $\mu_{12}$ and $\mu_{34}$ are observationally equivalent. By the linearity of \eqref{measuretochoiceprobmap}, both are also equivalent to any distribution in their convex hull, and any pair of distributions belonging to some common translation of this set (e.g.\ the dashed segments) are observationally equivalent.}
\label{fig:ProdAndSimp}
\end{figure}
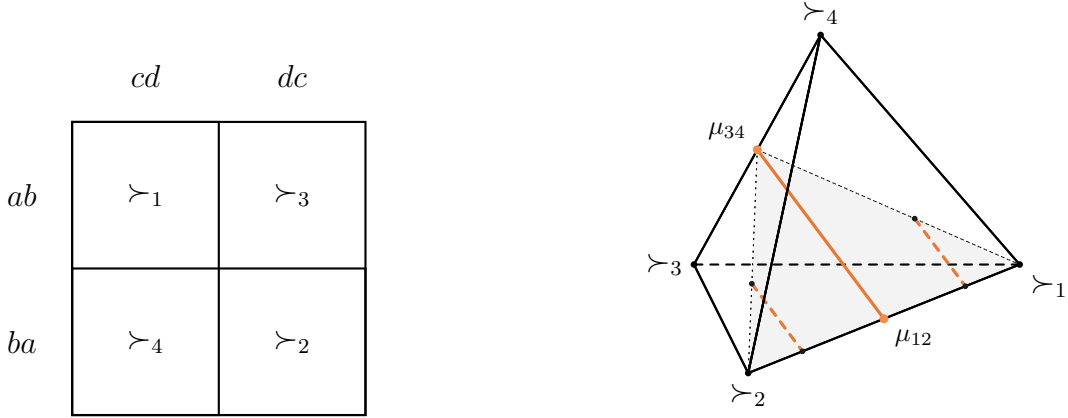

 We say a signed measure $R \in \mathbb{R}^\mathcal{L}$ defines a {\bf Ryser swap} if:
\[
    R = \mathbbm{1}_{\{\succ_1',\succ_2'\}} - \mathbbm{1}_{\{\succ_1,\succ_2\}},
\]
where $\succ_1$ and $\succ_2$ are $k$-compatible, and $\succ_1'$ and $\succ_2'$ are $k$-conjugate for some $k$.\footnote{Note $\succ_1$ and $\succ_2$ are trivially $k$-compatible if and only if $R = 0$.} A signed measure is a {\bf weighted Ryser swap} if it is proportional to a Ryser swap. Adding a weighted Ryser swap to some initial distribution $\mu$ may be regarded as a compound transformation, which first separates out from $\mu$ a product measure, applies to it a marginal-preserving transformation, then recombines the result with the remaining mass from $\mu$ to obtain a modified measure $\mu'$.  As in \autoref{introex}, whenever $\mu'$ is a probability distribution, it differs from $\mu$ only in the \emph{correlation} between choices, not their frequencies.\medskip

Finally, we define the {\bf Ryser subspace} $\mathcal{R}\subseteq \mathbb{R}^\mathcal{L}$ as the span of the Ryser swaps:
\[
    \mathcal{R} = \textrm{span}\big\{R \in \mathbb{R}^\mathcal{L} \; \vert \; R \textrm{ is a Ryser swap}\big\}.
\]
Every signed measure in $\mathcal{R}$ corresponds to a transformation which applies a finite sequence of weighted swaps.  By linearity of $\Phi$, given $\bar{R} \in \mathcal{R}$ and $\mu \in \Delta(\mathcal{L})$, if $\mu + \bar{R}$ defines a probability distribution, \autoref{conjugatelemma} implies it is observationally equivalent to $\mu$.\footnote{\autoref{conjugatelemma} asserts that, given compatible $\succ_1,\succ_2$ with conjugates $\succ_1', \succ_2'$, the uniforms $\mu_{12}$ and $\mu_{1'2'}$ are observationally equivalent, and hence $\Phi(\mu_{12}) = \Phi(\mu_{1'2'})$. By linearity:
\[
    \Phi\big(\mu + \varepsilon(\mu_{1'2'}-\mu_{12})\big) = \Phi(\mu) + \varepsilon \, ( 0) = \Phi(\mu).
\]
Since every $\bar{R} \in \mathcal{R}$ is a weighted sum of Ryser swaps, $\mu$ and $\mu + \bar{R}$ are observationally equivalent.} Our next theorem shows that, in fact, \emph{every} pair of observationally equivalent distributions are related in this manner.

\begin{theorem}\label{ryserspacetheorem}
    Two distributions $\mu, \mu' \in \Delta(\mathcal{L})$ are observationally equivalent if and only if $\mu - \mu' \in \mathcal{R}$.  In particular, for any $\mu \in \mathcal{M} \subseteq \Delta(\mathcal{L})$, the set of observationally equivalent distributions in $\mathcal{M}$ is precisely $\big(\mu + \mathcal{R}\big) \cap \mathcal{M}$.
\end{theorem}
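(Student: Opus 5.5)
The ``if'' direction has already been argued: \autoref{conjugatelemma} together with the linearity of $\Phi$ shows $\mathcal{R} \subseteq \ker \Phi$. So the whole task is the reverse inclusion $\ker \Phi \subseteq \mathcal{R}$, where $\Phi$ is extended linearly to all of $\mathbb{R}^\mathcal{L}$. The final sentence of the theorem then follows at once by intersecting with $\mathcal{M}$. My plan is to prove the inclusion by a dimension count: compute $\dim \ker \Phi$ exactly, and show that $\mathcal{R}$ has at least that dimension.

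\textbf{Kernel of $\Phi$.} I would pass from choice probabilities to the Block--Marschak / Falmagne coordinates. For $x \notin A$, the Möbius inversion of $\rho(x,\cdot)$ gives $q(x,A)$, the $\mu$-mass of preferences whose upper contour set of $x$ is exactly $A$. This is a linear bijection between the image of $\Phi$ and the vector of these ``upper-set'' masses. Hence $\mu \in \ker\Phi$ if and only if, for every $A \subsetneq X$ and $x \notin A$,
\[
\sum_{\succ:\ s^\uparrow_{|A|}(\succ)\text{ has set } A,\ \text{next element } x} \mu(\succ) = 0 .
\]
Equivalently, $\Phi(\mu)$ is determined by the flow that $\mu$ induces on the Boolean lattice $2^X$, where each preference is a maximal chain $\varnothing \subset \{x_1\} \subset \cdots \subset X$. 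So $\ker\Phi$ is the space of signed measures on maximal chains of $2^X$ whose induced edge flows all vanish. Its dimension is $|X|!$ minus the rank of the chain--edge incidence map. That rank is the number of edges minus the flow-conservation constraints, namely $\sum_{k}\binom{n}{k}(n-k) - (2^n - 2)$, a count I would take from Fiorini or from Falmagne's theorem.

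\textbf{Ryser swaps generate the kernel.} Rather than compute $\dim\mathcal{R}$ directly, I would show $\ker\Phi \subseteq \mathcal{R}$ by reduction, using induction on the support. Take $\nu \in \ker\Phi$, $\nu \neq 0$. For two maximal chains passing through a common set $S$ with $|S|=k$, the Ryser swap at $S$ exchanges their lower halves, i.e.\ their $k$-initial segments. I would define a canonical ``normal form'' chain for each prefix, for instance the lexicographically smallest completion. The goal is to show that modulo $\mathcal{R}$ every preference $\succ$ reduces to a combination determined only by edge data.

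Concretely, I would prove by induction on $k$ that for any preference $\succ$,
\[
\mathbbm{1}_\succ \equiv \sum_{j} \big(\text{terms depending only on the edges of } \succ\big) \pmod{\mathcal{R}} .
\]
This would be done by repeatedly swapping $\succ$ at level $k$ with a fixed reference chain through $s^\uparrow_k(\succ)$. Each swap relates $\mathbbm{1}_\succ + \mathbbm{1}_{\text{ref}}$ to two chains that agree with canonical objects on longer stretches. The resulting relation expresses $\mathbbm{1}_\succ$ as an alternating sum of ``canonical'' chains indexed by its edges. Once that is done, any $\nu$ whose edge flows vanish is congruent to $0$ modulo $\mathcal{R}$, so $\nu \in \mathcal{R}$.

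\textbf{Main obstacle.} I expect the hard step to be making this telescoping rewrite well defined. Each swap must be between genuinely $k$-compatible preferences, and the leftover reference terms must cancel in aggregate precisely because the flows vanish, including the flow-conservation bookkeeping at each node. If that bookkeeping becomes unwieldy, my fallback is the dimension route. I would exhibit $|X|! - \operatorname{rank}$ linearly independent Ryser swaps, for example swaps of each non-canonical chain against its canonical partner at the first node where they diverge. I would then check their independence by a triangularity argument with respect to a lexicographic order on chains.
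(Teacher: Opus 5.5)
Your primary route is correct, and it differs genuinely from the paper's. The paper never works with $\ker\Phi$ directly. On Fiorini's graph it takes two \emph{nonnegative} decompositions $\pi,\pi'$ of the same flow and uses the Zipper Lemmas (\autoref{localswapflow}, \autoref{swapsupportflow}) to apply weighted swaps to $\pi'$. These route mass onto paths agreeing with a fixed $P\in\textrm{supp}(\pi)$ on ever longer initial segments, until $\pi'$ carries at least $\pi(P)$ on $P$; that mass is then stripped from both and the process repeats (\autoref{MainLemmaflow}).

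Your normal-form argument is purely linear. The step you flag as the main obstacle closes at once, provided reference chains are indexed by nodes (sets) rather than ordered prefixes. Fix, for each $S\subseteq X$, orderings $A_S$ of $S$ and $B_S$ of $X\setminus S$. Put $\Gamma_S=A_SB_S$, and $\Gamma_e=A_S\,x\,B_{S\cup\{x\}}$ for the edge $e=(S,S\cup\{x\})$. For ${\succ}=x_1\cdots x_n$, let $S_i=\{x_1,\dots,x_i\}$, $e_i=(S_{i-1},S_i)$ and $Q_i=A_{S_i}\,x_{i+1}\cdots x_n$. Then the pair $Q_i,\Gamma_{S_{i+1}}$ is $(i+1)$-compatible, with conjugates $\Gamma_{e_{i+1}},Q_{i+1}$. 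Since $Q_0={\succ}$ and $Q_{n-1}=\Gamma_{e_n}$, telescoping gives
\[
\mathbbm{1}_{\succ}\equiv\sum_{i=1}^{n}\mathbbm{1}_{\Gamma_{e_i}}-\sum_{i=1}^{n-1}\mathbbm{1}_{\Gamma_{S_i}}\pmod{\mathcal{R}}.
\]
Summed against $\nu$, the coefficient on $\Gamma_e$ is the edge flow of $\nu$ at $e$. The coefficient on $\Gamma_S$ is the $\nu$-mass through $S$, which is itself a sum of edge flows into $S$. Both vanish on $\ker\Phi$.

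So no conservation bookkeeping and no dimension count are needed; your rank formula is right but plays no role. Your fallback is therefore moot, which is just as well: two chains share their entire prefix at the node where they first diverge, so swapping them there is trivial. Finally, the paper derives \autoref{conjugatelemma} from this very theorem, so justify $\mathcal{R}\subseteq\ker\Phi$ directly instead of citing it: a compatible pair and its conjugates use the same multiset of edges.

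What each approach buys: yours is shorter, and because every coefficient is $\pm1$, it shows directly that integer vectors in $\ker\Phi$ are integer combinations of unweighted swaps. That is exactly the form used when \autoref{thm:GraphSupp} is invoked in the proof of \autoref{thm:GraphSupportRestrictions}. The paper's zipper construction instead keeps every intermediate measure nonnegative, so it connects any two rationalizations by swaps that never leave the identified set; a purely linear argument does not give this. Both arguments adapt directly to an arbitrary DAG (take canonical source-to-node and node-to-sink paths), and hence to the models of \autoref{sec:Extensions}.
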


\autoref{ryserspacetheorem} shows that not only do Ryser swaps preserve choice probabilities, they in fact generate the set of \emph{all} transformations which do so. In other words, two distributions over preferences are observationally equivalent if and only if the first can be generated from the second (or vice-versa) by a finite sequence of such swaps. Geometrically, this implies the set of distributions observationally equivalent to some $\mu \in \Delta(\mathcal{L})$ is simply the intersection of the affine subspace $\mu + \mathcal{R}$ with $\Delta(\mathcal{L})$.\medskip

This provides a practical means of computing features of the identified set. Since $\mathcal{R}$ does not depend on the data, once a choice of $X$ is specified, a basis $R_1, \ldots, R_N$ for $\mathcal{R}$ can be obtained from the set of Ryser swaps. Then, given any rationalization $\mu \in \Delta(\mathcal{L})$, optimal bounds on, e.g., the sizes of different shares of the population across all rationalizations, can be computed by evaluating a straightforward linear program over the weights $\alpha_1,\ldots,\alpha_N$ assigned to these basis elements, subject only to the constraint $\mu + \sum_i \alpha_i R_i \ge \mathbf{0}$.\footnote{One could evaluate a similar linear program over $\Delta(\mathcal{L})$ using $\Phi(\cdot) = \rho$ as a constraint, without appealing to \autoref{ryserspacetheorem} (e.g.\ \citealt{deb2023revealed}). However, doing so requires requires $\vert X \vert !-1$ variables and hence may become computationally difficult in larger environments. In contrast, identified sets are generally far lower dimensional; in \autoref{introex}, $\Delta(\mathcal{L})$ is 23-dimensional whereas the identified set was 1-dimensional. Thus the explicit description provided by \autoref{ryserspacetheorem} may allow for significant computational savings in practice.}\medskip

To illustrate, consider again \autoref{introex}. As there are only four alternatives, any non-trivially compatible pair must agree on the best (and worst) two alternatives, but not their ordering. As a consequence, if the measure $\mu + R$ is a probability distribution for some $R \in \mathcal{R}$, $R$ must subtract mass only from $\succ_1,\ldots, \succ_4$, and reassign it to preferences formed by matching terminal segments $s^2_\downarrow \in \{cd, dc\}$ to initial segments $s_2^\uparrow \in \{ab,ba\}$. But $\succ_1,\ldots, \succ_4$ are the \emph{only} preferences formed this way. It follows $R$ must be proportional to the sole (up to sign) non-trivial Ryser swap supported on these preferences, $\bar{R} = \mathbbm{1}_{\{\succ_1,\succ_2\}} - \mathbbm{1}_{\{\succ_3,\succ_4\}}$. In particular, this implies the orange line segments in \autoref{fig:ProdAndSimp} are precisely the identified sets.\footnote{Indeed, given some rationalization $\mu$, tight bounds on \emph{any} property of the identified set can be computed simply by optimizing $\mu + \alpha \bar{R}$ over the scalar $\alpha$, subject to the constraint this vector belong to the simplex.}

\begin{example}\label{sixaltexample}
In more complex environments, elements of $\mathcal{R}$ may have richer form. Suppose $X = \{a,b,c,d,e,f\}$, and consider the following six preferences.

\begin{table}[h!]
\small
\centering
 \begin{tabular}{c c c | c c c} 
 
 $\succ_1$ & $\succ_2$ & $\succ_3$ & $\succ_1' $ & $\succ_2'$ & $\succ_3'$\\ [0.25ex] 
 \hline\hline
  a & b & c & a & b & c\\ 
  b & a & d & b & a & d\\
  c & e & b & e & c & b\\
  d & f & a & f & d & a\\ 
  e & c & f & c & f & e\\
  f & d & e & d & e & f
 \end{tabular}
\end{table}

\noindent Here, $\mu_{123}$, the uniform distribution on $\{\succ_1,\succ_2, \succ_3\}$, is observationally equivalent to $\mu_{123}'$, the uniform on $\{\succ_1', \succ_2', \succ_3'\}$. However, unlike in \autoref{introex}, these distributions do not differ by any single weighted Ryser swap.\footnote{To see this, note any Ryser swap between any pair of these six preferences is either trivial or places non-zero mass on some preference outside the set.} Nevertheless, they do so by a \emph{sequence} of such swaps.

\[\small
    \begin{array}{ccc}
    \succ_1 & \succ_2 & \succ_3\\
    \hline \hline
    \textcolor{NavyBlue}{\textrm{a}} & \textcolor{NavyBlue}{\textrm{b}} & \textrm{c}\\
    \textcolor{NavyBlue}{\textrm{b}} & \textcolor{NavyBlue}{\textrm{a}} & \textrm{d}\\ \cline{1-2}
    \textcolor{NavyBlue}{\textrm{c}} & \textcolor{NavyBlue}{\textrm{e}} & \textrm{b}\\
    \textcolor{NavyBlue}{\textrm{d}} & \textcolor{NavyBlue}{\textrm{f}} & \textrm{a}\\
    \textcolor{NavyBlue}{\textrm{e}} & \textcolor{NavyBlue}{\textrm{c}} & \textrm{f}\\
    \textcolor{NavyBlue}{\textrm{f}} & \textcolor{NavyBlue}{\textrm{d}} & \textrm{e}\\
    \end{array}
    \longrightarrow 
    \begin{array}{ccc}
    \succ_1' & \succ & \succ_3\\
    \hline \hline
    \textrm{a} & \textcolor{Orange}{\textrm{b}} & \textcolor{Orange}{\textrm{c}}\\
    \textrm{b} & \textcolor{Orange}{\textrm{a}} & \textcolor{Orange}{\textrm{d}}\\
    \textrm{e} & \textcolor{Orange}{\textrm{c}} & \textcolor{Orange}{\textrm{b}}\\
    \textrm{f} & \textcolor{Orange}{\textrm{d}} & \textcolor{Orange}{\textrm{a}}\\\cline{2-3}
    \textrm{c} & \textcolor{Orange}{\textrm{e}} & \textcolor{Orange}{\textrm{f}}\\
    \textrm{d} & \textcolor{Orange}{\textrm{f}} & \textcolor{Orange}{\textrm{e}}\\
    \end{array}
    \longrightarrow
    \begin{array}{ccc}
    \succ_1' & \succ_2' & \succ_3'\\
    \hline \hline
    \textrm{a} & \textrm{b} & \textrm{c}\\
    \textrm{b} & \textrm{a} & \textrm{d}\\
    \textrm{e} & \textrm{c} & \textrm{b}\\
    \textrm{f} & \textrm{d} & \textrm{a}\\
    \textrm{c} & \textrm{f} & \textrm{e}\\
    \textrm{d} & \textrm{e} & \textrm{f}\\
    \end{array}
\]

\noindent To construct such a sequence, first transfer all mass from $\succ_1$ and $\succ_2$ under $\mu_{123}$ to their $2$-conjugates (the pair formed by swapping their terminal segments $cdef$ and $efcd$). Label these resulting preferences $\succ_1'$ and $\succ$. Then, transfer all mass on $\succ$ and $\succ_3$ to their $4$-conjugates (the pair obtained from exchanging $ef$ and $fe$) to yield $\mu_{123}'$.\hfill $\blacksquare$ 
\end{example}

\section{Support Restrictions and Extreme Points}\label{sec:Support}

Perhaps the simplest class of restrictions are those in which an analyst constrains the set of preferences held in a population. We say $\mathcal{M} \subseteq \Delta(\mathcal{L})$ defines a {\bf support restriction} if:
\[
    \mathcal{M} = \big\{ \mu \in \Delta(\mathcal{L}) : \textrm{supp}(\mu) \subseteq \mathcal{S}\big\},
\]
for some set of preferences $\mathcal{S} \subseteq \mathcal{L}$. Such restrictions are natural when a modeler wishes to a priori restrict the set of preferences, but not their relative frequencies.\medskip

\begin{example}\label{ex:MonotoneEU}
Let $X$ consist of the following four monetary lotteries:
    \begin{equation*}
        \begin{split}
            a =\frac{1}{2}\big[\$10\big]+\frac{1}{2}\big[\$20\big] &\text{, } \quad b= \frac{1}{2}\big[\$5\big] + \frac{1}{2}\big[\$25\big]  \\
            c= \frac{1}{4}\big[\$10\big] + \frac{3}{4}\big[\$20\big] &\text{, } \quad d = \frac{1}{4}\big[\$5\big] + \frac{3}{4}\big[\$25\big].
        \end{split}
    \end{equation*}
    Suppose we wish to consider only those preferences $\mathcal{S} \subseteq \mathcal{L}$ which are consistent with some (monotone) expected utility representation. To determine this subset, note $c \succsim_{FOSD} a$ and $d \succsim_{FOSD} b$, hence any preference in $\mathcal{S}$ ranks $c \succ a$ and $d \succ b$.\footnote{Here, $\succsim_{FOSD}$ corresponds to first order stochastic dominance.} Moreover, 
    \[
        c = \frac{1}{2} \big[a \big] + \frac{1}{2} \big[\$20\big] \quad \textrm{ and } \quad d= \frac{1}{2}\big[b\big] + \frac{1}{2}\big[\$25\big],
    \]
    thus every preference in $\mathcal{S}$ which ranks $b \succ a$, also obeys $d \succ c$. These are all the restrictions imposed by monotonicity and the expected utility axioms, hence $\mathcal{S}$ consists of the five orders:
     \begin{table}[H]
\centering
 \begin{tabular}{c c c c c} 
 
 $\succ_1$ & $\succ_2$ & $\succ_3$ & $\succ_4 $ & $\succ_5$ \\ [0.25ex] 
 \hline\hline
    d & d  & d & c & c \\
    c & c  & b & d & a \\
    b & a  & c & a & d \\
    a & b  & a & b & b
 \end{tabular}
\end{table}
    \noindent listed here.\hfill $\blacksquare$
\end{example}

While \autoref{ex:MonotoneEU} highlights the case of expected utility preferences, one could likewise consider other restrictions, from parametric sub-families such as CARA or CRRA, to broader classes of risk preferences, such as betweenness or disappointment aversion. Similarly, many other environments suggest natural restrictions, such as exponentially discounted preferences over consumption streams, or subjective expected utility preferences over acts.\medskip  

The question we now consider is how such restrictions affect the set of rationalizing distributions, and when the probability weights on these and other families of orders can be uniquely recovered from choice frequency data alone.\medskip

Let $\succ_1,\ldots, \succ_N$ be a (finite) sequence of preferences, allowing repetition. For each $1 \le k \le \vert X \vert$, let $\Pi_k$ denote the partition of $\{1,\ldots, N\}$ such that $i$ and $j$ belong to the same partition element if and only if $\succ_i$ and $\succ_j$ are $(k-1)$-compatible.\footnote{In particular, as every pair of preferences are trivially 0-compatible, we always have $\Pi_1 = \big\{ \{1,\ldots, N \} \big\}$.} A {\bf rearrangement} is a collection of permutations $\{\sigma_k\}_{k=1}^{\vert X \vert}$ of $\{1,\ldots, N\}$, such that $\sigma_k$ fixes each partition element of $\Pi_{k}$. In other words, a rearrangement is a choice of permutations, where each $\sigma_k$ exchanges the labels of preferences in $\succ_1,\ldots, \succ_N$, but only among those which agree on the set of $(k-1)$ best alternatives.\medskip

Let $x^k_n$ denote the $k$-th most preferred alternative of $\succ_n$. We say two sequences $\succ_1,\ldots, \succ_N$ and $\succ_1',\ldots, \succ_N'$ are {\bf rearrangement equivalent} if there exists a rearrangement $\{\sigma_k\}_{k=1}^{\vert X \vert}$ such that, for all $1 \le n \le N$:
\[
    \succ_n' \,=\, x^1_{\sigma_1^{-1}(n)}\, x^2_{(\sigma_2\, \circ \,\sigma_1)^{-1}(n)} \cdots \, x^{\vert X \vert}_{(\sigma_{\vert X \vert} \, \circ \, \cdots \, \circ \,\sigma_{1})^{-1}(n)},
\]
Rearrangement equivalence describes when one sequence of preferences can be transformed into the other by iteratively replacing compatible pairs with their conjugates. As the following example illustrates, this has a simple visual intuition: two sequences are rearrangement equivalent precisely when the strings of one can be `braided together' to form the other.

\begin{example}\label{rearrangementex}

Suppose $X = \{a,b,c,d,e,f\}$, and consider again the sequence of preferences $\succ_1, \succ_2, \succ_3$ from \autoref{sixaltexample} given by:
\[
    \begin{array}{ccc}
    \succ_1 & \succ_2 & \succ_3\\ [0.25ex] 
 \hline\hline
    \textcolor{RoyalBlue}{\textrm{a}} & \textcolor{Orange}{\textrm{b}} & \textcolor{BrickRed}{\textrm{c}}\\ 
    \textcolor{RoyalBlue}{\textrm{b}} & \textcolor{Orange}{\textrm{a}} & \textcolor{BrickRed}{\textrm{d}}\\
    \textcolor{RoyalBlue}{\textrm{c}} & \textcolor{Orange}{\textrm{e}} & \textcolor{BrickRed}{\textrm{b}}\\
    \textcolor{RoyalBlue}{\textrm{d}} & \textcolor{Orange}{\textrm{f}} & \textcolor{BrickRed}{\textrm{a}}\\
    \textcolor{RoyalBlue}{\textrm{e}} & \textcolor{Orange}{\textrm{c}} & \textcolor{BrickRed}{\textrm{f}}\\
    \textcolor{RoyalBlue}{\textrm{f}} & \textcolor{Orange}{\textrm{d}} & \textcolor{BrickRed}{\textrm{e}}\\
    \end{array}
    \quad 
    \begin{array}{ccc}
     & \Pi & \\ [0.25ex] 
 
    \tikzmark{startupp} \textcolor{white}{\ast} & \textcolor{white}{\ast} & \textcolor{white}{\ast} \tikzmark{endupp}  \\
    \ast & \ast & \ast \\
    \tikzmark{startup} \textcolor{white}{\ast} &  \textcolor{white}{\ast} \tikzmark{endup} & \ast \\
    \ast & \ast & \ast\\
    \tikzmark{startmiddle} \textcolor{white}{\ast} \tikzmark{endmiddle} & \ast  & \tikzmark{startdown} \textcolor{white}{\ast} \tikzmark{enddown}\\
    \ast & \ast & \ast \\
    \end{array}
    \quad
    \overset{\sigma}{\longrightarrow}
    \quad
    \begin{array}{ccc}
    \succ_1' & \succ_2' & \succ_3'\\ [0.25ex] 
 \hline\hline
    \textcolor{RoyalBlue}{\textrm{a}} & \textcolor{Orange}{\textrm{b}} & \textcolor{BrickRed}{\textrm{c}}\\
    \textcolor{RoyalBlue}{\textrm{b}} & \textcolor{Orange}{\textrm{a}} & \textcolor{BrickRed}{\textrm{d}}\\
    \textcolor{Orange}{\textrm{e}} & \textcolor{RoyalBlue}{\textrm{c}} & \textcolor{BrickRed}{\textrm{b}}\\
    \textcolor{Orange}{\textrm{f}} & \textcolor{RoyalBlue}{\textrm{d}} & \textcolor{BrickRed}{\textrm{a}}\\
    \textcolor{Orange}{\textrm{c}} & \textcolor{BrickRed}{\textrm{f}} & \textcolor{RoyalBlue}{\textrm{e}}\\
    \textcolor{Orange}{\textrm{d}} & \textcolor{BrickRed}{\textrm{e}} & \textcolor{RoyalBlue}{\textrm{f}}\\
    \end{array}
\]
\begin{tikzpicture}[remember picture,overlay]
\foreach \Val in {up,middle,down}
{
\draw[rounded corners,thick]
  ([shift={(-0.5\tabcolsep,-0.5ex)}]pic cs:start\Val) 
    rectangle 
  ([shift={(0.5\tabcolsep,2ex)}]pic cs:end\Val);
}
\foreach \Val in {upp}
{
\draw[rounded corners,thick]
  ([shift={(-0.5\tabcolsep,-0.5ex)}]pic cs:start\Val) 
    rectangle 
  ([shift={(0.5\tabcolsep,2ex)}]pic cs:end\Val);
}
\end{tikzpicture}

\noindent We also plot the partitions $\Pi_1,\ldots, \Pi_6$ of $\{1,2,3\}$ defined above, where $\, \ast \,$ denotes any singleton partition element. The sequence $\succ_1', \succ_2', \succ_3'$ is rearrangement equivalent to $\succ_1,\succ_2,\succ_3$, under the collection of permutations where $\sigma_1, \sigma_2, \sigma_4$, and $\sigma_6$ are the identity, and $\sigma_3$ (resp.\ $\sigma_5$) swaps the two entries of the unique non-singleton partition element of $\Pi_3$ (resp.\ $\Pi_5$). To see this, note that $\sigma_1$ and $(\sigma_2 \, \circ \, \sigma_1)$ are the identity permutation $(1,2,3)$, while $(\sigma_3 \, \circ \, \sigma _2\, \circ\, \sigma_1) = (\sigma_4 \, \circ \cdots \circ\, \sigma_1) = (2,1,3)$.\footnote{We use one-line notation for permutations, e.g., $(2,3,1)$ represents the permutation $\sigma(1) = 2$, $\sigma(2) = 3$, and $\sigma(3) = 1$. In particular, this means $\sigma^{-1}(n)$ corresponds to which component of the vector has value $n$.} Finally, $(\sigma_5 \, \circ \cdots \circ\, \sigma_1) = (\sigma_6 \, \circ \cdots \circ\, \sigma_1) = (3,1,2)$. Thus, for example, the fifth-best alternative under $\succ_2'$ is the same as for $\succ_3$, as $(\sigma_5 \, \circ \cdots \circ\, \sigma_1) = (3,1,2)$ and therefore $(\sigma_5 \, \circ \cdots \circ\, \sigma_1)^{-1} = (2,3,1)$.\hfill $\blacksquare$
\end{example}

When $\mathcal{M}$ is a support restriction, \eqref{measuretochoiceprobmap} implies the set of measures in $\mathcal{M}$ observationally equivalent to some $\mu \in \Delta(\mathcal{L})$ forms a polytope. Our next result shows that an observationally equivalent $\mu' \in \mathcal{M}$ is an extreme point of this set if and only if its support contains no pair of distinct, rearrangement-equivalent sequences.\footnote{That is, one cannot construct two such sequences by drawing (with replacement) from $\textrm{supp}(\mu')$.}  In particular, such a pair of sequences exist if and only if there are a pair of distinct rationalizations in $\mathcal{M}$ which average to $\mu'$.

\begin{theorem}\label{thm:extremepoints}
    Suppose $\mathcal{M}$ is a support restriction. Then $\mu \in \mathcal{M}$ is an extreme point of the identified set $(\mu + \mathcal{R}) \cap \mathcal{M}$ if and only if $\textrm{supp}(\mu)$ contains no pair of distinct, rearrangement-equivalent sequences.
\end{theorem}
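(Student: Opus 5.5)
The plan is to reduce extremality to a statement about directions in $\mathcal{R}$, and then translate those directions into pairs of integer multisets of preferences. Since $\mathcal{M}$ is a support restriction, the identified set $(\mu+\mathcal{R})\cap\mathcal{M}$ is the polytope $\{\nu \ge 0 : \nu - \mu \in \mathcal{R},\ \textrm{supp}(\nu)\subseteq \mathcal{S}\}$; the total mass is automatic, since each $R\in\mathcal{R}$ has total mass zero. By the standard argument, $\mu$ fails to be extreme if and only if some nonzero $d\in\mathcal{R}$ has $\textrm{supp}(d)\subseteq\textrm{supp}(\mu)$. Indeed, such a $d$ gives $\mu\pm\varepsilon d$ in the set for small $\varepsilon>0$. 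Conversely, if $\mu=\frac12(\nu+\nu')$ with $\nu\neq\nu'$, then $d=\nu-\nu'$ works, because nonnegativity forces $\textrm{supp}(\nu),\textrm{supp}(\nu')\subseteq\textrm{supp}(\mu)$. So the theorem reduces to the following claim: a nonzero $d\in\mathcal{R}$ supported in $\textrm{supp}(\mu)$ exists if and only if $\textrm{supp}(\mu)$ contains two distinct rearrangement-equivalent sequences. Here ``distinct'' means distinct as multisets; a mere relabelling yields $d=0$.

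\emph{Sequences give directions.} Let $\succ_1,\ldots,\succ_N$ and $\succ_1',\ldots,\succ_N'$ be rearrangement equivalent, with entries drawn from $\textrm{supp}(\mu)$.
\begin{itemize}
\item Decompose each $\sigma_k$ into transpositions within blocks of $\Pi_k$, and apply them level by level, $k=1,\ldots,\vert X\vert$.
\item A transposition at level $k$ exchanges the tails, from position $k$ onward, of two strings whose top $k-1$ alternatives form the same set. This is exactly a replacement of a $(k-1)$-compatible pair by its $(k-1)$-conjugates.
\item The blocks of $\Pi_k$ are unaffected by earlier swaps, because a swap at level $j<k$ preserves each string's top-$(k-1)$ set. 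This is consistent with the definition of rearrangement, in which the $\Pi_k$ are fixed by the original sequence.
\end{itemize}
Hence the difference of the counting measures, $\sum_n \mathbbm{1}_{\succ_n'}-\sum_n\mathbbm{1}_{\succ_n}$, is a sum of Ryser swaps and lies in $\mathcal{R}$. It is supported in $\textrm{supp}(\mu)$ and is nonzero when the multisets differ.

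\emph{Directions give sequences.} The subspace $\{d\in\mathcal{R}:\textrm{supp}(d)\subseteq\textrm{supp}(\mu)\}$ is cut out by linear equations with rational coefficients, since $\mathcal{R}$ is spanned by integer vectors. If it is nonzero, it therefore contains a nonzero integer vector $d$. Write $d=d^+-d^-$; both parts have the same total mass, because $d\in\mathcal{R}$ has zero sum. These parts define two distinct multisets $S,S'$ of size $N$ drawn from $\textrm{supp}(\mu)$, and by \autoref{ryserspacetheorem} (linearity of $\Phi$) they produce identical choice counts on every menu. It remains to show that two integer multisets with identical choice counts are rearrangement equivalent. I expect this to be the main step.

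For this I would use the block--Marschak (Möbius inversion) identity. For any $B\subseteq X$ and $x\notin B$, the number of strings whose top-$\vert B\vert$ set is exactly $B$ and whose next alternative is $x$ is an alternating sum of the counts of choosing $x$ from supersets of $X\setminus B$. Call this number $N(B,x)$; it therefore coincides for $S$ and $S'$. I would then braid $S$ into $S'$ by induction on $k$, maintaining the invariant that after $\sigma_1,\ldots,\sigma_{k-1}$ the $n$-th current string agrees with $\succ_n'$ in its first $k-1$ positions.
\begin{itemize}
\item At level $k$, the indices are grouped by their common top-$(k-1)$ set $B$. These are precisely the blocks of $\Pi_k$, and both the current sequence and $S'$ have the same index set for each block.
\item Within a block, the multiset of $k$-th entries equals $\{x$ with multiplicity $N(B,x)\}$ for both sequences.
\item So a permutation $\sigma_k$ of that block can move tails so that the $k$-th entries match $S'$ index by index, which extends the invariant to $k$ positions.
\end{itemize}
At $k=\vert X\vert$ the braided sequence equals $S'$. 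The delicate points are checking that the indexing conventions ($\sigma^{-1}$ and composition order) in the definition of rearrangement match this construction, and that the $N(B,x)$ counts really are pooled over all strings with top set $B$, regardless of the order within $B$.
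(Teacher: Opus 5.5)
Your proposal is correct, and it reaches the result by a genuinely different route from the paper.

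\textbf{The paper's route.} The paper obtains \autoref{thm:extremepoints} from the graph-theoretic \autoref{thm:Graphextremepoints}, in two stages.
First, it invokes Winkler's theorem on moment sets to equate extremality with linear independence of the flow (edge-incidence) vectors of the preferences in $\textrm{supp}(\mu)$. This step implicitly uses \autoref{ryserspacetheorem} to identify $(\mu+\mathcal{R})\cap\mathcal{M}$ with the decompositions of a fixed flow.
Second, it passes through \autoref{cor:LinInd} and \autoref{thm:GraphSupportRestrictions}. The hard implication there chains together \autoref{rationalLem}; the zipper lemmas (\autoref{MainLemmaflow}), which write the difference of two integer decompositions as a sum of Ryser swaps; and the converse half of \autoref{thm:GraphSupp}, an induction on the number of swaps that must cope with intermediate measures turning negative.

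\textbf{Your route.} You replace the first stage by the elementary criterion that $\mu$ is non-extreme iff some nonzero $d\in\mathcal{R}$ is supported in $\textrm{supp}(\mu)$. You replace the entire swaps-to-rearrangement chain by a direct greedy braid driven by the integer Block--Marschak counts $N(B,x)$, which are just $N$ times the edge flows of the random utility graph.

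\textbf{What each buys.} Your argument uses only the easy direction of \autoref{ryserspacetheorem} (swaps preserve choice probabilities). It never decomposes a difference into swaps. It also sidesteps the unaddressed issue of making the swap weights from \autoref{MainLemmaflow} rational before rescaling. Combined with your first half, the same braid and rationality argument would even reprove the hard direction of \autoref{ryserspacetheorem}.
The paper's route buys generality and by-products. It is phrased for path decompositions on an arbitrary DAG, so it covers the models of \autoref{sec:Extensions} at once, and it yields \autoref{cor:LinInd}. Your level-by-level braid instead exploits the layered structure of the random utility graph, where every path meets exactly one node per level.
Your reading of ``distinct'' as distinct multisets is the right one; otherwise any support with two elements would fail the condition.

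\textbf{Two bookkeeping points to fix when writing it up.}

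First, a level-$j$ swap does not preserve the top-$(k-1)$ set attached to an \emph{index}; that set travels with the tail. So with the composition order $\sigma_k\circ\cdots\circ\sigma_1$ as written, $\sigma_k$ must respect the partition of the current, partially braided sequence, not the original $\Pi_k$. Equivalently, you can keep the original $\Pi_k$ but compose in the opposite order, so that $\sigma_k$ acts on tail labels.
The literal combination of the original $\Pi_k$ with the written composition order is inconsistent. In \autoref{rearrangementex}, obtaining $(3,1,2)$ from $\sigma_5\circ\cdots\circ\sigma_1$ forces $\sigma_5$ to swap indices $2$ and $3$. But the non-singleton block of the original $\Pi_5$ is $\{1,3\}$, and swapping $1$ and $3$ would produce the invalid string $abeffe$.
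Your braid is valid under either corrected reading, but state it that way rather than asserting that the blocks are unchanged.

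Second, say explicitly why the partially braided sequence still has counts $N(B,x)$ at level $k$. Each earlier tail exchange moves the pair (top-$(k-1)$ set, $k$-th entry) together with the tail, so these counts are invariant.
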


Whereas \autoref{ryserspacetheorem} characterizes the identified sets of the random utility model in terms of Ryser swaps, in the case of support restrictions, \autoref{thm:extremepoints} provides a complementary description via their extreme points.\footnote{When $\mathcal{M}$ is a support restriction, \autoref{ryserspacetheorem} and \autoref{thm:extremepoints} provide dual characterizations of the identified polytopes, in terms of their faces and extreme points, respectively. We note, however, that \autoref{ryserspacetheorem} remains valid without assumptions on $\mathcal{M}$.}  These extremal distributions are of natural interest, as they constitute the primitive, irreducible rationalizations from which any others may be constructed. In practice, they often correspond to the simplest, or most parsimonious, population models which explain the data. For example, single-crossing (\citealt{apesteguia2017single}), progressive (\citealt{filiz2023progressive}), and swap-progressive (\autoref{sec:Ordered}) random utility representations are always extreme points of their identified sets.\footnote{See \autoref{prop:swapProgisExt}. However, there exist identified sets whose extreme points do not all arise in this manner; see \autoref{prop:ExtnotOrdered}.}\medskip

\autoref{thm:extremepoints} also resolves an open question in the theory of rational \emph{joint} choice behavior.\footnote{E.g.\ \cite{chambers2024correlated}.} \cite{kashaev2024entangled} show that, while the revealed preference implications of such models are generally intractable, they can be simply axiomatized under any restriction to  sets of preferences with linearly independent choice functions.\footnote{Here, we are identifying choice functions with their representation as $0\!-\!1$ vectors in $\mathcal{P}$.} To date, no characterization of these sets has been known. However, a key ingredient in our proof is showing they are precisely the sets which contain no pair of distinct, rearrangement-equivalent sequences. Thus our results also describe when testing for stochastically rational joint behavior is easy.\medskip

In addition, this allows us to completely describe identifying support restrictions.

\begin{corollary}\label{thm:braids}
    A support restriction defined by $\mathcal{S} \subseteq \mathcal{L}$ is identifying if and only if $\mathcal{S}$ contains no pair of distinct, rearrangement-equivalent sequences.
\end{corollary}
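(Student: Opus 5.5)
The plan is to derive the corollary from \autoref{thm:extremepoints}, using the fact that a polytope is a singleton exactly when each of its points is an extreme point. A support restriction $\mathcal{M}$ is identifying precisely when, for every $\mu \in \mathcal{M}$, the identified set $(\mu + \mathcal{R}) \cap \mathcal{M}$ equals $\{\mu\}$. Here \autoref{ryserspacetheorem} identifies this set with the observationally equivalent distributions in $\mathcal{M}$. Each such set is a polytope, namely an affine subspace intersected with a face of the simplex. Such a polytope is a singleton if and only if every point of it is an extreme point, and in fact it is enough that every point $\mu \in \mathcal{M}$ be extreme in its own identified set. Indeed, if some identified set contained two distinct points, their midpoint would be a point of $\mathcal{M}$ that is not extreme in its identified set, because the identified set of the midpoint is the same set.

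For the direction ($\Leftarrow$), suppose $\mathcal{S}$ contains no pair of distinct rearrangement-equivalent sequences. Any $\mu \in \mathcal{M}$ has $\textrm{supp}(\mu) \subseteq \mathcal{S}$. Sequences drawn with replacement from $\textrm{supp}(\mu)$ are also sequences drawn from $\mathcal{S}$, so $\textrm{supp}(\mu)$ contains no such pair either. By \autoref{thm:extremepoints}, $\mu$ is then an extreme point of its identified set. By the observation above, every identified set is therefore a singleton, and so $\mathcal{M}$ is identifying.

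For the direction ($\Rightarrow$), suppose $\mathcal{S}$ contains distinct rearrangement-equivalent sequences $\succ_1,\dots,\succ_N$ and $\succ_1',\dots,\succ_N'$. Take $\mu$ to be the uniform distribution on $\mathcal{S}$, or any full-support measure on $\mathcal{S}$. Then $\textrm{supp}(\mu) = \mathcal{S}$ contains this pair, so by \autoref{thm:extremepoints} $\mu$ is not extreme in $(\mu+\mathcal{R})\cap\mathcal{M}$. That set therefore contains at least two distinct points, which are observationally equivalent distributions in $\mathcal{M}$, and hence $\mathcal{M}$ is not identifying. As a direct check that does not lean on the "only if" half of the theorem, one can also argue explicitly. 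Let $\nu$ and $\nu'$ be the empirical measures of the two sequences (counts divided by $N$). Since rearrangement equivalence amounts to iterated conjugate replacements, $\nu - \nu' \in \mathcal{R}$. Distinctness of the sequences should mean $\nu \neq \nu'$ as multisets; if distinctness is only up to reordering, this needs a brief remark. Then $\mu \pm \varepsilon(\nu - \nu')$ are two distinct members of $\mathcal{M}$ for small $\varepsilon > 0$, because $\mu$ has full support on $\mathcal{S}$.

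The main subtlety I expect is this notion of \emph{distinct}. It must mean distinct as multisets, so that a mere relabeling of the same sequence does not count, and the plan is to read the definition in \autoref{thm:extremepoints} consistently with that. Beyond this, the argument is bookkeeping once \autoref{ryserspacetheorem} and \autoref{thm:extremepoints} are in hand.
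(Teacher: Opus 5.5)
Your proof is correct, but it takes a different route from the paper's. The paper never passes through \autoref{thm:extremepoints}. It obtains the corollary directly as the equivalence of conditions (1) and (3) in \autoref{thm:GraphSupportRestrictions}, specialized to the random utility graph. There, the hard direction (no distinct rearrangement-equivalent sequences implies identification) is handled by passing to rational measures via \autoref{rationalLem}, clearing denominators, and invoking the converse half of \autoref{thm:GraphSupp}.

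You instead take \autoref{thm:extremepoints} as a black box and add a convexity observation: a support restriction is identifying iff every $\mu\in\mathcal{M}$ is extreme in its own identified set. Your midpoint argument for this, using $(m+\mathcal{R})\cap\mathcal{M}=(\mu+\mathcal{R})\cap\mathcal{M}$ whenever $m-\mu\in\mathcal{R}$, is exactly right. This matches how the main text presents the result as a corollary of \autoref{thm:extremepoints}, and it is short and clean. Note, however, that in the appendix \autoref{thm:extremepoints} is itself derived from \autoref{thm:GraphSupportRestrictions}, together with \autoref{cor:LinInd} and Winkler's extreme-point result for moment sets. So your route goes through a stronger statement whose proof already contains the corollary, whereas the paper's citation is direct.

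Your supplementary direct check for ($\Rightarrow$) mirrors the paper's steps $\neg(3)\Rightarrow\neg(4)\Rightarrow\neg(1)$. The claim $\nu-\nu'\in\mathcal{R}$ is precisely the first half of \autoref{thm:GraphSupp}. Perturbing a full-support $\mu$ by $\pm\varepsilon(\nu-\nu')$ is an equivalent alternative to the paper's renormalization of the positive and negative parts of the swap sum.

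Finally, your reading of ``distinct'' as distinct multisets is the right one, and it is what the paper uses implicitly: its argument needs $\sum_k \mathbbm{1}_{P_k}-\sum_k \mathbbm{1}_{P_k'}\neq\mathbf{0}$. Under the ordered-sequence reading, any sequence would be rearrangement equivalent to its own reordering via the first-level permutation. The corollary would then fail already for two-element $\mathcal{S}$, which are always identifying.
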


To illustrate, consider again the support restriction $\mathcal{S}$ consisting of the four preferences in \autoref{introex}.  As noted before, the only admissible permutation in any rearrangement that can generate new preferences is $\sigma_2$, thus two sequences are rearrangement equivalent if and only if one can be obtained from the other by shuffling terminal segments $s^2_\downarrow$. As a consequence, while $\mathcal{S}$ is not identifying, any proper subset is. In particular, a distribution is an extreme point of its identified set if and only if it does not have full support; see \autoref{fig:ProdAndSimp}.\medskip

Similarly, \autoref{thm:braids} also implies that the restriction to monotone expected utility preferences in \autoref{ex:MonotoneEU} is identifying. As in \autoref{introex}, it once again suffices to consider only rearrangements with $\sigma_2$ the sole non-trivial permutation. As a consequence, proving identification reduces to verifying one cannot construct two distinct sequences of preferences from $\mathcal{S}$ with identical initial $2$-segments, but whose corresponding sequences of terminal $2$-segments are distinct. But such sequences cannot exist, essentially due to the fact each preference in $\mathcal{S}$ which ranks $b \succ a$ does not also rank $c \succ d$. Thus $\mathcal{S}$ is identifying.

\section{Order-Based Restrictions}\label{sec:Ordered}

In many economic environments, the underlying set of alternatives carries a natural order structure.  For example, monetary lotteries may be ordered by various measures of risk or stochastic dominance, vehicles by gas mileage, or government policies by level of public good provision.  In such contexts, behavioral types often possess an analogous ordering, e.g.\ by risk aversion, environmental conscientiousness, or civic-mindedness.\medskip

Throughout this section, we will assume that $\trianglerighteq$ is a linear order on $X$. Following \cite{apesteguia2017single}, a distribution $\mu \in \Delta(\mathcal{L})$ is a {\bf single-crossing} representation if $\textrm{supp}(\mu)$ can be ordered $\succ_1,\ldots, \succ_N$ such that if (i) $x \trianglerighteq y$ and (ii) $x \succ_j y$, then for all $i > j$, $x \succ_i y$ as well.  When a single-crossing representation exists, \cite{apesteguia2017single} show is necessarily unique; however there exist choice frequencies which are consistent with the random utility model but incompatible with any single-crossing representation.\medskip

An alternative approach to identification due to \cite{filiz2023progressive} considers a broader class of primitives.  Let $\mathcal{C}$ denote the set of all choice functions, including those not rationalizable by any strict preference. A distribution $\mu \in \Delta(\mathcal{C})$ is said to be a {\bf progressive} random choice representation if $\textrm{supp}(\mu)$ can be ordered $c_1,\ldots, c_N$ such that $i > j$ implies $c_i(A) \trianglerighteq c_j(A)$ for all menus $\varnothing \subsetneq A \subseteq X$. In contrast to single-crossing representations, progressive random choice rules exist and are unique for any choice shares. However, they may not be supported on the rational choice functions, and hence fail to be a random utility representation, even for data compatible with the random utility model.\medskip

Our notion of compatibility provides a novel, order-based identifying restriction which preserves the best features of both the single-crossing and progressive frameworks.  We say that a random utility representation $\mu \in \Delta(\mathcal{L})$ is {\bf swap-progressive} if $\textrm{supp}(\mu)$ can be ordered $\succ_1,\ldots, \succ_N$ such that, whenever $\succ_i$ and $\succ_j$ are $k$-compatible, $0 \le k \le \vert X \vert-1$, then $i > j$ implies $x^{k+1}_i \trianglerighteq x^{k+1}_j$.\footnote{Recall that $x^k_n$ denotes the $k$-th most preferred alternative of $\succ_n$, and note $x^{k+1}_i \trianglerighteq x^{k+1}_j$ explicitly allows for the equality of $x^{k+1}_i$ and $x^{k+1}_j$.}\medskip

Our next result shows swap-progressive representations are unique, and exist if and only if the data admit a random utility rationalization. Thus the empirical content of swap-progressivity coincides with that of the baseline random utility model.  Moreover, when either single-crossing or progressive random utility representations exist, both do, and both coincide with the unique swap-progressive representation.\footnote{The first part of this claim is Remark 1 in \cite{filiz2023progressive}.}

\begin{theorem}\label{thm:OrderedRUM}
Let $\trianglerighteq$ be a linear order on $X$. Then:
        \begin{itemize}
            \item[(i)] A swap-progressive rationalization exists if and only if the data are compatible with the random utility model. Moreover, if such a representation exists, it is unique.
            \item[(ii)] For any order $\trianglerighteq$, every single-crossing representation is swap-progressive. Conversely, a swap-progressive representation is single-crossing if and only if it is progressive.
        \end{itemize}
\end{theorem}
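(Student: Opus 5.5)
The plan is to exploit the tree structure of preferences, viewed as root-to-leaf paths. Let $T$ be the rooted tree whose nodes at depth $k$ are ordered strings $s$ of $k$ distinct alternatives. A measure $\mu\in\Delta(\mathcal{L})$ induces flows $f_\mu(s)=\mu\{\succ : s^\uparrow_k(\succ)=s\}$. The key classical fact is the Falmagne/Block--Marschak observation that $\Phi(\mu)$ determines, and is determined by, the \emph{aggregated} flows. Write $F_\mu(A,x)$ for the mass of preferences whose top $|A|$ set is $A$ and whose $(|A|+1)$-th element is $x$. Then $\Phi(\mu)=\Phi(\nu)$ if and only if $F_\mu=F_\nu$, and a $\rho$ is RUM-rationalizable exactly when the Block--Marschak polynomials are nonnegative; these polynomials are the $F(A,x)$.

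For part (i), existence, I would build the swap-progressive representation by a quantile (``comonotone'') construction. Start from the interval $[0,1]$, representing all mass. Inductively, suppose that at stage $k$ each set $A$ with $|A|=k$ has been assigned a subset $I_A\subseteq[0,1]$, a finite union of intervals, of Lebesgue measure $\sum_x F(A\setminus x,x)$. Each point $t\in I_A$ carries a full string $s(t)$ with set $A$. Split $I_A$ into consecutive pieces, assigning the next alternative $x\notin A$ to pieces of length $F(A,x)$, in increasing $\trianglerighteq$ order along the natural order of $[0,1]$. The pieces sent to $A\cup x$ are then merged over all $A$.

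This defines a map $t\mapsto\succ_t$. Pushing forward Lebesgue measure gives $\mu$, which has the correct $F$ and hence rationalizes $\rho$. Order $\textrm{supp}(\mu)$ by the position of $t$. If $\succ_i,\succ_j$ are $k$-compatible, they lie in the same $I_A$, and the monotone splitting gives $x^{k+1}_i\trianglerighteq x^{k+1}_j$ whenever $t_i>t_j$. One subtlety is that the same preference could occupy non-contiguous $t$'s. I would handle this by an argument showing that the ordering is consistent: the $t$-order restricted to $I_A$ is lexicographic in the later choices.

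The converse direction is immediate, since any swap-progressive representation lies in $\Delta(\mathcal{L})$.

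For uniqueness, I would show the reverse: any swap-progressive $\mu$, ordered as in the definition, must coincide with the quantile construction. Within any class of $k$-compatible preferences, the ordering of the support forces the masses assigned to each next alternative $x$ to appear in $\trianglerighteq$-increasing blocks. Those masses $F(A,x)$ are pinned down by $\rho$, so by induction on $k$ the assignment of strings to quantiles is forced. I expect this to be the main obstacle. The preferences in a compatibility class need not be contiguous in the global order, so one must show that the induced order on each class is the restriction of a single linear order of the quantile line.

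An alternative route to uniqueness goes through Theorem~\ref{thm:extremepoints} and Theorem~\ref{ryserspacetheorem}. One would show that no nonzero $R\in\mathcal{R}$ preserves swap-progressivity, by taking a swap-progressive support and exhibiting that any two rearrangement-equivalent distinct sequences violate the monotone condition at the first index where the rearrangement $\sigma_k$ is nontrivial. I would attempt this second route first if the direct induction becomes messy.

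For part (ii), suppose $\mu$ is single-crossing with order $\succ_1,\dots,\succ_N$. Take $k$-compatible $\succ_i,\succ_j$ with $i>j$ and common top set $A$. Let $x=x^{k+1}_j$ and $y=x^{k+1}_i$, and suppose $x\neq y$ with $x\triangleright y$. Then $x\succ_j y$, and single-crossing forces $x\succ_i y$. But $y$ is the best element of $X\setminus A$ for $\succ_i$, which is a contradiction. For the converse, progressive orders the choice functions so that $c_i(A)\trianglerighteq c_j(A)$. A swap-progressive support that is also progressive in the same order gives, for a pair $x\trianglerighteq y$ with $x\succ_j y$, $c_i(\{x,y\})\trianglerighteq c_j(\{x,y\})=x$, so $x\succ_i y$: single-crossing. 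Conversely, single-crossing implies progressive via $c(A)$ monotonicity, which is Remark 1 of Filiz-Ozbay and Masatlioglu. Uniqueness from (i) then identifies the orders.
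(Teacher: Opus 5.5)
Your proof is correct, but it takes a different route from the paper. The paper proves (i) by applying an abstract result for directed acyclic graphs (Theorem~\ref{thm:OrderedFlows}) to the Fiorini random utility graph:
\begin{itemize}
\item existence comes from a discrete greedy algorithm that repeatedly peels off the path using the $\trianglerighteq$-highest edge with positive residual flow at each node;
\item uniqueness comes from a case-based contradiction argument on the positive and negative parts of the difference of two swap-progressive decompositions, using Ryser-swap reachability.
\end{itemize}

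Your quantile construction is essentially the continuous counterpart of that algorithm. Its payoff is a much more transparent uniqueness proof. Lay the ordered support of any swap-progressive $\mu$ out on $[0,1]$ in consecutive blocks of length $\mu(\succ_i)$. Swap-progressivity then says exactly that $t\mapsto x^{k+1}(t)$ is nondecreasing on each $I_A$. A nondecreasing map on $I_A$ with prescribed level-set measures $F(A,x)$ is unique up to null sets, so induction on $k$ forces the quantile map.

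The obstacle you flag is therefore harmless, because the splitting is defined on the non-contiguous set $I_A$ itself. Your existence subtlety also disappears. Each piece of $I_A$ has the form $I_A\cap[a,b)$, so by induction every $J_s=\{t: s^\uparrow_{|s|}(\succ_t)=s\}$ is an interval, and the support can be ordered by position.

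What the paper's route buys is reach: one graph theorem delivers, with no extra work, the analogues for random choice, frame-dependent choice and dynamic discrete choice, and for arbitrary (even menu-dependent) edge orders. Your construction would also extend to any DAG by processing nodes in topological order, though that would have to be carried out separately.

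For (ii), your argument for the first claim is the paper's. For the converse you give direct arguments: progressive plus support in $\mathcal{L}$ yields single-crossing via two-element menus, and single-crossing yields progressive via menu maximizers. These are self-contained and do not even use swap-progressivity, whereas the paper invokes \citet{filiz2023progressive}.

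One caution concerns your fallback route to uniqueness. As sketched, it would only show that a single swap-progressive support contains no distinct rearrangement-equivalent sequences. That gives extremality (compare Proposition~\ref{prop:swapProgisExt}) but not uniqueness. Two swap-progressive representations may have different supports, and their union need not admit a swap-progressive ordering, so Theorem~\ref{thm:extremepoints} cannot be applied to it. The first route does not have this problem, so it is the one to write up.
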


In light of \autoref{thm:OrderedRUM}, we interpret swap-progressivity as a principled refinement of both notions: it agrees with the random utility predictions of both single-crossing and progressivity whenever these exist, but generates unique predictions even when these other frameworks cannot. This provides a practical, but general, method for selecting rationalizations, that systematically incorporates the order structure of the environment.\medskip

In particular, in many applications it is valuable to obtain rationalizations whose support is `1-dimensional,' or able to be linearly ordered according to some economic criterion.\footnote{See, e.g., \cite{chiappori2019aggregate, barseghyan2021discrete}. For further examples and a discussion of these ideas, see \cite{filiz2023progressive}.} As the following example illustrates, this is always true of swap-progressive rationalizations.\footnote{For a more complex example of the swap-progressive representation of data which do not admit a single-crossing representation, see \hyperlink{OrderedFish}{Online Appendix D.2}.}

\begin{example}\label{ex:OrderedFishConstruction}
    Consider again the setting of \autoref{introex}, and suppose policies are ranked according to their fiscal cost $a \trianglerighteq b \trianglerighteq d \trianglerighteq c$.  \autoref{thm:OrderedRUM} guarantees the policy maker is able to recover a unique swap-progressive rationalization from straw-poll data, even though infinitely many rationalizations exist.\medskip

    The power of swap progressivity is that it severely limits the support of a rationalization. For example, no swap-progressive distribution can place positive weight on both $\succ_1 \, = abcd$ and $\succ_2 \, = badc$. As both are trivially $0$-compatible, $a \trianglerighteq b$ implies $\succ_1$ must be ranked higher within the support than $\succ_2$. However, as both are also 2-compatible, $d \trianglerighteq c$ implies $\succ_2$ must also be ranked above $\succ_1$, an impossibility.\medskip

    In this example, this observation already suffices to pin down the distribution: the $\trianglerighteq$-swap progressive rationalization is precisely $\bar{\mu} = (0,0,\frac{5}{8}, \frac{3}{8})$. In particular, $\bar{\mu}$ is supported on $bacd$ and $abdc$. These correspond to the two preferences, among those held in \emph{some} rationalization, which (resp.) agree least and most with $\trianglerighteq$. As a consequence, we can regard the support of $\bar{\mu}$ as being ordered by fiscal conservativeness.\hfill $\blacksquare$
\end{example}

\section{Parametric Restrictions}\label{sec:Parm}

In practical applications, restrictions on random utility, and more general models of stochastic choice, are often specified parametrically.  If $\Theta \subseteq \mathbb{R}^d$ denotes a set of parameters, we refer to any one-to-one and onto map $F: \Theta \to \mathcal{M} \subseteq \Delta(\mathcal{L})$ as a {\bf parametric restriction}.\footnote{This is distinct from the case in which a support constraint is defined by the restriction of some parametric family of \emph{utilities} to $X$.  There, parameters correspond to the preferences of the model, and hence rationalizations to distributions over parameters.  Here, we consider the case in which the allowable population distributions \emph{themselves} belong to some parametric family $\mathcal{M} = F(\Theta) \subseteq \Delta(\mathcal{L})$.} By \autoref{ryserspacetheorem}, a parametric restriction is identifying if and only if:
\begin{equation*}\label{parametricrumid}
    F(\theta') \in F(\theta) + \mathcal{R} \quad \implies \quad \theta = \theta',
\end{equation*}
i.e.\ if $\mathcal{M} = \textrm{range}(F)$ intersects each translate of the Ryser subspace at most once. In principle, this is a complete characterization of identification.  However, in practice, it is often unclear how to verify this from only knowledge of $F$. Moreover, many times restrictions are defined by specifying choice probabilities directly in terms of parameters, leaving $F$ only implicit. We now consider a simple, practical test for identification in such cases.\medskip

We will consider only \emph{smooth} parametric models. However, as it will essentially be costless to do so, we will allow for arbitrary models of parametric stochastic choice, rather than solely random utility. Recall $\mathcal{P}$ denotes the set of all random choice rules. A {\bf parametric stochastic choice model} is a smooth mapping $\varphi: \Theta \to \mathcal{P}$.\footnote{Parametric random utility corresponds to the case where $\varphi = \Phi \circ F$, for some one-to-one $F: \Theta \to \Delta(\mathcal{L})$.} In general, it is intractable to directly analyze the full vector of functions $\varphi$ in any realistically-sized setting.\footnote{For example, when $\vert X \vert = 6$, a parametric stochastic choice model is a vector of 129 functions.} To determine invertibility in practice, one typically restricts to subsystems with equal numbers of parameters and choice probabilities.\medskip

We say $\bar{\varphi}: \Theta \to \mathbb{R}^d$ defines a {\bf smooth submodel} if it is obtained by selecting any $d$ components of the vector-valued map $G \circ \varphi$, where $G$ is any smooth and smoothly invertible transformation of choice probabilities. As there are typically many fewer parameters than choice probabilities, this entails a meaningful reduction. While our definition allows for general transformations, in practice, $G$ often corresponds to simply normalizing by certain choice probabilities, see \autoref{parametricluce} and \autoref{chriscex}.\medskip 

A smooth submodel $\bar{\varphi}$ is {\bf parametrically identified} if it admits a smooth inverse $\bar{\varphi}(\Theta) \to \Theta$. Parametric identification of a submodel $\bar{\varphi}$  implies the same of the full, unrestricted model $\varphi$. While in full generality the converse need not be true, in practice it is typically straightforward to use an unidentified $\bar{\varphi}$ to show the full model $\varphi$ fails to be parametrically identified.\footnote{See, e.g., \autoref{chriscex} and \hyperlink{chriscexworked}{Online Appendix D.1}.}\medskip

Our next result shows a smooth submodel is parametrically identified precisely when it is so locally, and it satisfies a mild, boundary-preservation condition. This reduces the difficult, global problem of verifying identification of a submodel to a simple analytic test.

\begin{theorem}\label{thm:paramthm}
    Let $\Theta \subseteq \mathbb{R}^d$ be a contractible, open set of parameters.\footnote{A set $\Theta$ is contractible if there exist a continuous map $H: \Theta \times [0,1] \to \Theta$ such that for each $\theta \in \Theta$, $H(\theta, 0) = \theta$, and $H(\theta,1) = \theta_0$ for some fixed vector $\theta_0$. This includes all convex (or star-shaped) sets.} A smooth map $\bar{\varphi}: \Theta \to \mathbb{R}^d$ is parametrically identified if and only if:
    \begin{itemize}
        \item[(i)] The Jacobian matrix $\mathcal{J}_{\bar{\varphi}}$ is everywhere invertible; and
        \item[(ii)] For any sequence $(\theta_n)_{n=1}^\infty \subset \Theta$ such that each compact $K \subset \Theta$ contains at most finitely many terms, the same is true of $\big(\bar{\varphi}(\theta_n)\big)_{n=1}^\infty$ for any compact $K'\subset \bar{\varphi}(\Theta)$.
    \end{itemize}
\end{theorem}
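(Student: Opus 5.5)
Necessity is routine. Suppose $\bar{\varphi}$ has a smooth inverse $\psi:\bar{\varphi}(\Theta)\to\Theta$. Differentiating $\psi\circ\bar{\varphi}=\mathrm{id}_\Theta$ gives $\mathcal{J}_\psi\,\mathcal{J}_{\bar{\varphi}}=I$, so $\mathcal{J}_{\bar{\varphi}}$ is invertible everywhere, which is (i). For (ii), take a sequence $(\theta_n)$ that eventually leaves every compact subset of $\Theta$, and suppose some compact $K'\subset\bar{\varphi}(\Theta)$ contains infinitely many of the $\bar{\varphi}(\theta_n)$. Then $\psi(K')$ is compact, being the continuous image of a compact set, and it contains infinitely many $\theta_n$. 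This is a contradiction.

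For sufficiency I would prove a global inverse function theorem in the Hadamard--Caccioppoli style. By (i) and the inverse function theorem, $\bar{\varphi}$ is a local diffeomorphism. In particular it is an open map, so $V:=\bar{\varphi}(\Theta)$ is open in $\mathbb{R}^d$. Condition (ii) says exactly that $\bar{\varphi}:\Theta\to V$ is proper. I would check this by contradiction: if $\bar{\varphi}^{-1}(K')$ were not compact for some compact $K'\subset V$, it would contain a sequence with no limit point in $\Theta$. Using an exhaustion of $\Theta$ by compacts, we could extract a subsequence that eventually leaves every compact set, which contradicts (ii).

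A proper local homeomorphism between manifolds is a covering map. I would prove this directly. Fix $y\in V$. The fibre $\bar{\varphi}^{-1}(y)$ is compact and discrete, hence finite, say $\{\theta^1,\dots,\theta^m\}$. Choose disjoint neighbourhoods $U_i$ of the $\theta^i$ on which $\bar{\varphi}$ is a diffeomorphism. By properness, $\bar{\varphi}\big(\Theta\setminus\bigcup U_i\big)$ is closed in $V$ and does not contain $y$. So there is a small connected neighbourhood $W$ of $y$ that avoids this set, and $W$ is then evenly covered by the sets $U_i\cap\bar{\varphi}^{-1}(W)$.

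The main obstacle is the last step, showing the covering is one-sheeted. The codomain $V$ need not be simply connected, so I would instead use the contractibility of $\Theta$ and argue as follows.
\begin{itemize}
\item $V$ is path connected, as the continuous image of the path-connected set $\Theta$.
\item Lifting a contraction $H$ of $\Theta$, the induced map $\bar{\varphi}_*:\pi_1(\Theta)\to\pi_1(V)$ has trivial image, because $\pi_1(\Theta)$ is trivial.
\item For a connected covering, the number of sheets equals the index $[\pi_1(V):\bar{\varphi}_*\pi_1(\Theta)]=\vert\pi_1(V)\vert$.
\item So it suffices to show $\pi_1(V)$ is trivial. Since $\Theta$ is a contractible covering space, it is the universal cover of $V$. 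Then $\pi_1(V)$ is the deck group, which acts freely on $\Theta$, and $\Theta$ is a contractible open subset of $\mathbb{R}^d$. Every nontrivial element of $\pi_1(V)$ has finite order, because the covering is finite-sheeted, the fibres being finite. A nontrivial finite group acting freely on a finite-dimensional contractible manifold contradicts Smith theory: some element of prime order would act freely, but by Smith theory a $\mathbb{Z}/p$-action on a finite-dimensional acyclic space has a fixed point.
\end{itemize}
Hence $m=1$, $\bar{\varphi}$ is a bijection onto $V$, and its inverse is smooth by the inverse function theorem.

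I would try first to find a more elementary route for this final step, for example one via degree theory. I expect it to be the delicate point of the whole argument.
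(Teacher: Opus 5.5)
Your proposal is correct and follows essentially the same route as the paper: properness plus local diffeomorphism gives a finite-sheeted covering of $\bar{\varphi}(\Theta)$, the simple connectivity of $\Theta$ makes the number of sheets equal to $\vert\pi_1(\bar{\varphi}(\Theta))\vert$ with the deck group isomorphic to $\pi_1(\bar{\varphi}(\Theta))$, and Smith's fixed-point theorem for a deck transformation of prime order contradicts the fact that non-identity deck transformations have no fixed points. The differences are only cosmetic: you verify properness and the covering property by hand where the paper cites Lee's textbook results, and you invoke Smith theory in its general form for finite-dimensional acyclic spaces rather than Smith's original statement for contractible subsets of $\mathbb{R}^n$.
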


Relative to existing global inverse results, \autoref{thm:paramthm} has several key advantages. First, unlike approaches relying on monotonicity, which typically yield sufficient but not necessary conditions, it provides a complete characterization of parametric identification. For example, when studying the invertibility of a smooth demand system, \autoref{thm:paramthm} remains valid even when goods may be complements, and is applicable in settings when the law of demand may fail.\footnote{In particular, \autoref{thm:paramthm} remains applicable even in economic environments not amenable to classic demand inversion results, e.g.\ \citealt{gale1965jacobian, berry2013connected, allen2022injectivity}.}\medskip

Second, unlike many existing results, \autoref{thm:paramthm} allows for both the domain and range of $\bar{\varphi}$ to be distinct, proper subsets of $\mathbb{R}^d$, and requires structural assumptions only on the \emph{domain}, rather than range of $\bar{\varphi}$.\footnote{C.f.\ \citealt{palais1959natural, mckenzie1967theorem, gordon1972diffeomorphisms, krantz2002implicit, komunjer2012global}.} Given that the range of a general vector-valued function can be difficult to determine, this property makes \autoref{thm:paramthm} particularly valuable in practice.

\begin{example}\label{parametricluce}
    Let $X = \{x_0,\ldots, x_K\}$. A stochastic choice rule belongs to the {\bf Luce model} if there exist weights $w_0 \equiv 1$ and $w_k > 0$ for all $k > 0$, such that, for any $x \in A \subseteq X$:
    \[
        \varphi(w_1,\ldots, w_K)_{(x_i,A)} = \rho(x_i,A) = \frac{w_i}{\sum_{j \in A} w_j}.
    \]
    Thus $\Theta = \mathbb{R}^{K}_{++}$, with each $\theta$ corresponding to the vector of normalized weights $(w_1,\ldots, w_K)$.  Consider the map:
    \[
        \bar{\varphi}(w_1, \ldots, w_K) = \begin{bmatrix}
            \frac{\rho(x_1,X)}{\rho(x_0,X)}\\
            \vdots \\
            \frac{\rho(x_K,X)}{\rho(x_0,X)}
        \end{bmatrix} = \begin{bmatrix}
            w_1\\
            \vdots \\
            w_K
        \end{bmatrix},
    \]
    noting this defines a smooth submodel.\footnote{Here, the map $G$ takes each component of $\varphi$ and normalizes it by $\varphi_{\rho(x_0, X)}$; $\bar{\varphi}$ is then obtained by selecting the $d = K$ components corresponding to normalized choice probabilities for $x_1, \ldots, x_K$ from $X$.} Since (i) the Jacobian of $\bar{\varphi}$ is everywhere the identity, and (ii) some component of $\bar{\varphi}(\theta_n)$ trivially tends to zero or infinity whenever the same is true of $\theta_n$, it follows from \autoref{thm:paramthm} that the submodel $\bar{\varphi}$, and hence the full Luce model, is parametrically identified.\hfill $\blacksquare$
\end{example}

Our results can also be used to detect identification failures in more complex settings. To illustrate, we establish a novel result, that the dynamic, logit habit-formation model of \cite{TuransickIntertemporal2025} is not parametrically identified from unconditional choice data.

\begin{example}\label{chriscex}
    Suppose $X = \{x_0,\ldots, x_N\}$, where $x_0$ denotes an outside option which is always available.  A menu is any non-empty subset of $X$ which contains $x_0$.  A stochastic choice rule represents the unconditional choice probabilities of the habit-formation logit model of \cite{TuransickIntertemporal2025} if it takes the form:
    \[
        \rho(x_i,A) = \frac{v_i \bigg(\sum_{j \in A \setminus i} v_j  + v_i c_i\bigg)}{\sum_{j \in A} v_j\bigg(\sum_{k \in A \setminus j} v_k + v_j c_j\bigg)},
    \]
    for any menu $A$, where $v_0 = c_0 \equiv 1$, and $v_i, c_i > 1$ for all $ i=1 ,\ldots, N$. Here, the parameters $v_i$ capture the desirability of consuming each alternative, while $c_i$ speaks to how addicting, or habit-forming, consumption of $x_i$ is in a given period.\medskip

    Define a submodel as follows. For $i = 1,\ldots, N$, let:
        \[
            \bar{\varphi}^i(v_1,c_1,\ldots, v_N, c_N) = \frac{\rho\big(x_0, \{x_0, x_i\}\big)}{\rho\big(x_i, \{x_0, x_i\}\big)} = \frac{1+v_i}{v_i(1+v_i c_i)},
        \]
        and for $i = N\!+\!1, \ldots, 2N$:
        \[
            \bar{\varphi}^{N+i}(v_1,c_1,\ldots, v_N, c_N) = \frac{\rho\big(x_0, \{x_0, x_i, x_{i+1}\}\big)}{\rho\big(x_i, \{x_0, x_i, x_{i+1}\}\big)} = \frac{1+v_i + v_{i+1}}{v_i + v_iv_{i+1} +c_i v_i^2},
        \]
    where $v_{N+1} \equiv v_1$, and $c_{N+1} \equiv c_1$.\medskip
    
    We now show this submodel fails to satisfy condition (ii) of \autoref{thm:paramthm}, and hence fails to be parametrically identified. To do so, we first define the function:
        \[
            c(v_i) = \frac{1+v_i}{10v_i^2} - \frac{1}{v_i}.
        \]
    Note that, for any $i = 1, \ldots, N$, we have $\bar{\varphi}^i(\ldots, v_i, c(v_i), \ldots) = \frac{1}{10}$. Consider now the restriction of $\bar{\varphi}$ to the curve $\big(v, c(v), \ldots, v, c(v) \big)$. The last $N$ components of $\bar{\varphi}$ are identical, and equal to:
        \[
            \bar{\varphi}^{N+i}\big(v,c(v), \ldots, v, c(v)\big) = \frac{1+2v}{v^2 + 10(1+v)}.
        \]
    As $v \to 1$ along this curve, $\bar{\varphi}^{N+i} \to\frac{1}{7}$. But this is precisely the value attained by $\bar{\varphi}^{N+i}$ at $v=3$.  Since the first $N$ components of $\bar{\varphi}$ are constant in $v$, we have shown that for any sequence of parameters of the form $\big(v_n, c(v_n), \ldots, v_n, c(v_n)\big)$ with $v_n \to 1$, we obtain a violation of condition (ii) of \autoref{thm:paramthm}, and hence $\bar{\varphi}$ is not parametrically identified. It is then straightforward to use the insight from this submodel to construct a distinct pair of parameter vectors that yield identical choice probabilities in the full model.\footnote{See \hyperlink{chriscexworked}{Online Appendix D.1} for an explicit example.} \hfill $\blacksquare$
\end{example}

\section{Beyond Random Utility}\label{sec:Extensions}

Thus far, we have considered primarily the random utility model. However, our core notion of Ryser swap characterizes the identified sets of many other stochastic choice models. In each of these additional settings, exact analogs of our main results obtain. We illustrate several such extensions below.

\subsubsection*{Random Choice}

The simplest extension is to the unrestricted class of random choice models, possibly subject to limited observability.  Here, $X$ is a finite set of alternatives, and $\Sigma \subseteq 2^X \setminus \{\varnothing\}$ an arbitrary collection of menus on which choice frequencies are observed.  Let $\mathcal{C}$ denote the collection of all choice functions on $\Sigma$.\footnote{That is, $\mathcal{C}_\Sigma = \prod_{A \in \Sigma} A$.} A random choice model is a probability measure $\mu \in \Delta(\mathcal{C}_\Sigma)$ such that, for all $a \in A \in \Sigma$:
\[
    \rho(a,A) = \sum_{c \in \mathcal{C}} \mu(c)\, \mathbbm{1}_{\{a \, = \, c(A)\}}.
\]
Here, a Ryser swap is a signed measure $R \in \mathbb{R}^{\mathcal{C}_\Sigma}$ of the form:
\[
    R = \mathbbm{1}_{\{c'_1,c'_2\}}  - \mathbbm{1}_{\{c_1,c_2\}},
\]
where $c'_1,c'_2 \in \mathcal{C}_\Sigma$ are the choice functions obtained by swapping the choices made by $c_1$ and $c_2$ on a single menu $A \in \Sigma$. Similarly, we define the Ryser subspace for the random choice model $\mathcal{R}^{rc} \subset \mathbb{R}^{\mathcal{C}_\Sigma}$ as the linear span of all such vectors. Once again, the subspace $\mathcal{R}^{rc}$ characterizes identifying restrictions.

\begin{theorem}\label{thm:SCgeneral}
    Fix $\Sigma \subseteq 2^X\setminus \{\varnothing\}$. Two distributions $\mu, \mu' \in \Delta(\mathcal{C}_\Sigma)$ generate identical choice probabilities on every menu in $\Sigma$ if and only if $\mu - \mu' \in \mathcal{R}^{rc}$. In particular, for any $\mathcal{M} \subseteq \Delta(\mathcal{C}_\Sigma)$ the set of distributions generating identical choice probabilities to $\mu$ on every menu in $\Sigma$ is precisely $\big(\mu + \mathcal{R}^{rc}\big) \cap \mathcal{M}$.
\end{theorem}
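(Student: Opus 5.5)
The proof has two directions. The easy one is that every Ryser swap lies in the kernel of the linear map $\Phi^{rc}:\mathbb{R}^{\mathcal{C}_\Sigma}\to\mathbb{R}^{\{(a,A):a\in A\in\Sigma\}}$, $\Phi^{rc}(\mu)_{(a,A)}=\sum_c \mu(c)\mathbbm{1}_{\{a=c(A)\}}$. Suppose $c_1',c_2'$ are obtained from $c_1,c_2$ by exchanging their choices on a single menu $A_0$. Then on every menu the multiset $\{c_1'(B),c_2'(B)\}$ equals $\{c_1(B),c_2(B)\}$: on $B\neq A_0$ the choices are unchanged, and on $A_0$ they are merely swapped. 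Hence $\Phi^{rc}(\mathbbm{1}_{\{c_1',c_2'\}})=\Phi^{rc}(\mathbbm{1}_{\{c_1,c_2\}})$, and by linearity $\mathcal{R}^{rc}\subseteq\ker\Phi^{rc}$. The second sentence of the theorem then follows immediately from the first.

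The substantive direction is $\ker\Phi^{rc}\subseteq\mathcal{R}^{rc}$. I will show this by working in the quotient $V=\mathbb{R}^{\mathcal{C}_\Sigma}/\mathcal{R}^{rc}$ and exhibiting a spanning set for $V$ of size at most $\dim \textrm{range}(\Phi^{rc})$.

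Enumerate $\Sigma=\{A_1,\ldots,A_m\}$, so that $\mathcal{C}_\Sigma=\prod_i A_i$ and each $c$ is a tuple $(c_1,\ldots,c_m)$. Fix a reference choice $z_i\in A_i$ for each menu and let $z=(z_1,\ldots,z_m)$. The key identity, valid modulo $\mathcal{R}^{rc}$, is
\[
\mathbbm{1}_c+\mathbbm{1}_z\equiv \mathbbm{1}_{c^{(1)}}+\mathbbm{1}_{z^{(1)}},
\]
where $c^{(1)}$ agrees with $z$ on $A_1$ and with $c$ elsewhere, and $z^{(1)}$ agrees with $c$ on $A_1$ and with $z$ elsewhere. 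This is a single Ryser swap on menu $A_1$. Iterating over the menus in turn, the swap on $A_i$ paired against $z$ peels off coordinate $i$. The result is
\[
\mathbbm{1}_c\equiv\sum_{i=1}^m \mathbbm{1}_{z[c_i \textrm{ at } A_i]}-(m-1)\mathbbm{1}_z \pmod{\mathcal{R}^{rc}},
\]
where $z[a\textrm{ at }A_i]$ denotes $z$ with its $A_i$-coordinate replaced by $a$. To verify this, I will induct on the number of coordinates in which $c$ differs from $z$. At each step, one swap between $c$ and $z$ on a differing menu produces one function differing from $z$ in a single coordinate and one differing in one fewer coordinate.

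It follows that $V$ is spanned by $\mathbbm{1}_z$ together with the vectors $\mathbbm{1}_{z[a\textrm{ at }A_i]}$ for $a\in A_i\setminus\{z_i\}$. That is at most $1+\sum_i(|A_i|-1)$ vectors.

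Finally, I will show $\dim\textrm{range}(\Phi^{rc})\ge 1+\sum_i(|A_i|-1)$. The images of these same spanning vectors are linearly independent. Consider $\Phi^{rc}(\mathbbm{1}_{z[a\textrm{ at }A_i]})$ for $a\neq z_i$: among all the spanning vectors, it is the only one with a nonzero coordinate at $(a,A_i)$. So any vanishing linear combination must have zero coefficient on every such term. What remains is a multiple of $\Phi^{rc}(\mathbbm{1}_z)$, which is nonzero, so that coefficient is zero too. Hence $\Phi^{rc}$ induces a surjection from $V$ onto its range, $\dim V\le\dim\textrm{range}(\Phi^{rc})$, and so the induced map is an isomorphism. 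Therefore $\ker\Phi^{rc}=\mathcal{R}^{rc}$. Applied to $\mu-\mu'$, this gives the theorem.

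I expect the main obstacle to be the bookkeeping in the reduction identity. The point to check is that each step is a swap on a single menu between two legitimate elements of $\mathcal{C}_\Sigma$. This holds because swapping a coordinate of $A_i$ between two tuples keeps each entry in $A_i$; no compatibility constraint arises, unlike in the preference case. The rest is linear algebra.
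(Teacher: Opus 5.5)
Your proof is correct, and it takes a different route from the paper.

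\textbf{The paper's route.} It encodes $\mathcal{C}_\Sigma$ as the path set of a chain graph $A_1 \to A_2 \to \cdots \to A_{|\Sigma|} \to \varnothing$, with one parallel edge per alternative. It then applies its general flow-decomposition result (\autoref{MainLemmaflow}, built on the two zipper lemmas) to write $\mu'-\mu$ as a positive combination of graphical swaps. These swaps exchange the choices on every menu from some $A_i$ onward. Finally, it checks that such tail swaps and single-menu swaps span the same subspace: composing tail swaps at consecutive menus isolates a single menu, and conversely.

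\textbf{Your route.} You bypass the graph machinery and use the product structure $\mathcal{C}_\Sigma=\prod_i A_i$ directly. The peeling identity shows that the quotient by $\mathcal{R}^{rc}$ is spanned by $\mathbbm{1}_z$ together with the one-coordinate deviations from $z$. The triangular support pattern of their images under $\Phi^{rc}$ then gives the matching lower bound on the rank of $\Phi^{rc}$.

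\textbf{What each approach buys.} Your argument is shorter and fully elementary. It also yields something the paper does not state: an explicit basis for the quotient, and the exact dimension $\dim\mathcal{R}^{rc}=\prod_i|A_i|-1-\sum_i(|A_i|-1)$ of the directions along which identified sets extend.

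It gives up two things. First, the constructive content of \autoref{MainLemmaflow}, namely an algorithmic sequence of positively weighted swaps carrying one distribution to the other. Second, and more importantly, portability. In the random utility, frame-dependent, and dynamic settings, a swap is only available for compatible pairs, and in the first two the primitive set is not even a product. Your unrestricted peeling against a fixed reference $z$ is therefore unavailable there. The paper's single graph-theoretic lemma covers all of these models with one argument.
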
 

Exact analogues of \autoref{thm:extremepoints} and \autoref{thm:OrderedRUM} also obtain in this setting.\footnote{In particular, the analogue of our notion of swap-progressivity coincides precisely with the progressivity condition of \citet{filiz2023progressive}; the analogue of our \autoref{thm:OrderedRUM} in this setting corresponds their Theorem 1.} Notably, since $\mathcal{C}_\Sigma$ contains the rational choice functions on $\Sigma$, the random utility model can be viewed as a support restriction within this more general framework. Thus \autoref{thm:SCgeneral}, and the random choice analogue of \autoref{thm:extremepoints}, allow our results to extend to the case of incompletely observed data, i.e.\ when $\Sigma \subsetneq 2^X \setminus\{\varnothing\}$.

\subsubsection*{Frame Dependent Choice}
Our results also extend naturally to models of boundedly rational behavior. We illustrate this on a simple version of the frame-dependent random utility model of \citet{Cheung2024FRUM}.\footnote{We extend this analysis to the full model of \cite{Cheung2024FRUM} in     \hyperlink{frumgraph}{Online Appendix B.2}.} Here, individuals face a fixed, finite choice set $X= \{x_1,\ldots, x_K\}$, but receive varying suggestions from a black-box recommendation system that highlights some subset $A \subseteq X$, rendering them more desirable.\medskip

Here, the empiricist observes \emph{frame-dependent} stochastic choice data
$\rho: X \times 2^X \to [0,1]$, where:
\[
        \sum_{x \in X } \rho(x,A) = 1,
\]
for each $A \subseteq X$, interpreted as the observed probability of the the agent choosing $x$ from $X$ when faced with the recommendation $A$.\medskip

Define $X^* = \{x_1^N, \ldots, x_K^N, x_1^F, \ldots, x^F_K\}$, where we regard $x_i$ when framed ($x_i^F$) as distinct from $x_i$ when it is not ($x_i^N$). Given a recommendation set $A \subseteq X$, the subject may be regarded as choosing from the virtual menu:
\[
    A^* =  \{x^N : x \in X \setminus A\} \, \cup \, \{x^F : x \in A\} \subset X^*.
\]
Since framing makes any alternative strictly more preferred, no alternative below the most desirable $x^N$ can ever affect observed choice. Thus, let $\mathcal{L}^*$ denote the set of \emph{truncated} strict preferences on $X^*$, consisting only of the ranking up to, and including, the best, non-framed alternative, and which satisfy $x^F \succ^* x^N$ for all $x \in X$.\footnote{For example, an (unrestricted) preference $x_1^F \succ x_2^F \succ x_1^N  \succ x_2^N$ would correspond to the truncated preference $x_1^F \succ^* x_2^F \succ^* x_1^N$.}\medskip

The data are compatible with a frame-dependent random utility representation $\mu \in \Delta(\mathcal{L}^*)$ if:
\[
\rho(x,A) = \sum_{\succ^* \in \mathcal{L}^*} \mu(\succ^*) \mathbbm{1}_{\{\hat{x} \textrm{ is $\succ^*$-maximal in $A^*$}\}},
\]
where $\hat{x} = x^N$ if $x \not \in A$, and $x^F$ if $x \in A$. Given two truncated preferences in $\succ_1^*, \succ_2^* \in \mathcal{L}^*$ we once again say they are compatible if they agree upon the set of $k$-best alternatives, though not necessarily their ranking.  A Ryser swap for the frame-dependent model is simply a signed measure in $\mathbb{R}^{\mathcal{L}^*}$ which places unit mass on a $k$-compatible pair $\succ_1^*$ and $\succ_2^*$ and mass negative one on the pair of truncated preferences obtained by swapping their $k$-initial segments.\footnote{Note this is well-defined, as these swaps necessarily belong to $\mathcal{L}^*$.}  Let $\mathcal{R}^{fd} \subset \mathbb{R}^{\mathcal{L}^*}$ denote the linear span of these vectors.

\begin{theorem}\label{thm:FRUMcor}
    Two distributions $\mu, \mu' \in \Delta(\mathcal{L}^*)$ are observationally equivalent if and only if $\mu - \mu' \in \mathcal{R}^{fd}$. In particular, for any $\mathcal{M} \subseteq \Delta(\mathcal{L}^*)$, its identified set is precisely $ \big(\mu + \mathcal{R}^{fd}\big) \cap \mathcal{M}$.
\end{theorem}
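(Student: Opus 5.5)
The strategy is to reduce \autoref{thm:FRUMcor} to the random-utility case, \autoref{ryserspacetheorem}, or more precisely to replicate its proof with $\mathcal{L}^*$ in place of $\mathcal{L}$. The proof has two directions.

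\emph{Sufficiency.} I would first show that every frame-dependent Ryser swap lies in the kernel of the linear map $\Phi^*:\mathbb{R}^{\mathcal{L}^*}\to\mathbb{R}^{X\times 2^X}$ that sends $\mu$ to its frame-dependent choice frequencies. This is the analogue of \autoref{conjugatelemma}. Take $k$-compatible truncated preferences $\succ_1^*,\succ_2^*$ with a common $k$-best set $B$, and fix a recommendation set $A$. If $A^*\cap B\neq\varnothing$, the maximal element of $A^*$ under each preference is determined by its initial segment. Otherwise it is determined by its terminal segment. In either case, the multiset of maxima produced by the pair $\{\succ_1^*,\succ_2^*\}$ coincides with that produced by the conjugate pair. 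One checks that conjugates stay in $\mathcal{L}^*$: since $x^F\succ^* x^N$ holds within each segment, and $B$ is common, an $x^N$ can lie in $B$ only if $x^F$ does. Truncation at the best non-framed alternative is also preserved, since the segment containing it is carried along intact. By linearity, $\mu-\mu'\in\mathcal{R}^{fd}$ then implies observational equivalence.

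\emph{Necessity.} This is the main step: show $\ker\Phi^*\subseteq\mathcal{R}^{fd}$. I would argue by dimension counting, exactly as I expect the paper does for \autoref{ryserspacetheorem}. The plan is to show that $\mathbb{R}^{\mathcal{L}^*}/\mathcal{R}^{fd}$ has dimension at most $\operatorname{rank}\Phi^*$. To do this, model the quotient by a tree. The nodes are the pairs (unordered set $B$ of top elements already ranked, next element), where $B$ is a subset of $X^*$ closed under $x^N\in B\Rightarrow x^F\in B$ and contains no $x^N$ except possibly as the final element. Modulo Ryser swaps, any truncated preference is equivalent to a sum of ``flow'' indicators: a distribution is determined mod $\mathcal{R}^{fd}$ by the mass passing through each edge $B\to B\cup\{y\}$. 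This holds because swaps at level $k$ exactly permit re-gluing initial and terminal segments that share a set $B$, so the quotient is the space of flows on this Hasse-type graph with conservation at nodes. This is the usual Falmagne/Fiorini flow argument for random utility. The remaining task is to show that $\Phi^*$ is injective on flows. The frame-dependent choice probabilities determine, via Möbius inversion (the Block--Marschak analogue), the edge flows $q(B,y)$: the probability that the top set is exactly $B$ and the next element is $y$. Concretely, $\rho(\hat y,A)$ summed over the recommendation sets $A$ whose virtual menu $A^*$ is disjoint from $B$ but contains $y$, with the appropriate inclusion–exclusion, recovers these flows.

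The obstacle I anticipate is this inversion step. The available virtual menus $A^*$ are not all subsets of $X^*$: each contains exactly one of $x^N,x^F$ for every $x$. So standard Möbius inversion over $2^{X^*}$ does not apply directly. I would handle this by using the constraint $x^F\succ^* x^N$. The event ``no element of $A^*$ lies in $B$'' should be expressible through choices from recommendation sets in which the framed status of each $x$ is chosen so that the relevant copy lies outside $B$. Since closed sets $B$ contain $x^N$ only together with $x^F$ and only at the truncation point, such an $A$ exists whenever the edge $(B,y)$ is feasible. Inclusion–exclusion over the remaining free coordinates should then isolate $q(B,y)$. If this inversion goes through, then $\ker\Phi^*$ equals the kernel of the flow map, which is $\mathcal{R}^{fd}$, and the statement about $(\mu+\mathcal{R}^{fd})\cap\mathcal{M}$ follows immediately.
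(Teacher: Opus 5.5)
\textbf{Gap: the inclusion $\ker F\subseteq\mathcal{R}^{fd}$ is asserted, not proved.} Your overall reduction is the paper's. It encodes the model on Cheung's graph: your node $B$ of already-ranked framed copies is the complement of the paper's node, the set of still-framed alternatives. It identifies truncated preferences with paths, recovers edge flows from $\rho$ by a triangular system, and concludes via \autoref{thm:flowryserspacetheorem}. Your sufficiency direction is fine. The gap is in necessity, at the step you treat as routine. The claim that ``a distribution is determined mod $\mathcal{R}^{fd}$ by the mass passing through each edge'' is exactly $\ker F\subseteq\mathcal{R}^{fd}$, where $F$ sends $\mu$ to its edge masses. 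This is not what Falmagne or Fiorini provide: Fiorini only shows that rationalizations are the path decompositions of the Block--Marschak flow. The fact that swaps re-glue segments sharing a node gives only the easy inclusion $\mathcal{R}^{fd}\subseteq\ker F$. The reverse inclusion is the paper's main technical result, proved constructively via \autoref{localswapflow}, \autoref{swapsupportflow} and \autoref{MainLemmaflow}. The graph is a DAG, not a tree, which is why it is nontrivial; nor do you carry out the announced dimension count.

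\textbf{How to close it.} For each nonempty set $B$ of framed copies, fix a listing $\alpha_B$ and a terminal entry $\omega_B=x^N$ with $x^F\in B$, and put $C_B=\alpha_B\omega_B$. Take $\succ^*=y_1\cdots y_m$ and $B_i=\{y_1,\ldots,y_i\}$, with $\alpha_{B_0}$ and $\omega_{B_m}$ empty. Swapping $\alpha_{B_i}y_{i+1}\cdots y_m$ with $C_{B_{i+1}}$ at level $i+1$ and telescoping gives
\[
\mathbbm{1}_{\succ^*}-\sum_{i=0}^{m-1}\mathbbm{1}_{\alpha_{B_i}y_{i+1}\omega_{B_{i+1}}}+\sum_{i=1}^{m-1}\mathbbm{1}_{C_{B_i}}\in\mathcal{R}^{fd}.
\]
The correction terms depend only on the edges $\succ^*$ traverses. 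So the class of $\mu$ modulo $\mathcal{R}^{fd}$ is a linear function of $F(\mu)$, and $F(\mu)=F(\mu')$ implies $\mu-\mu'\in\mathcal{R}^{fd}$. This is a short linear-algebraic substitute for the zipper lemmas. It does not keep intermediate measures nonnegative, which the theorem does not require.

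\textbf{The inversion step is routine.} The step you flag as the obstacle is one the paper simply imports from Cheung. Let $T=X\setminus A$ be the set of unrecommended alternatives. Let $q(S,y)$ be the mass of truncated preferences whose first $|S|$ entries are $\{w^F:w\in S\}$ and whose next entry is $y$. Every entry above the terminal $x^N$ is framed. Hence $z^F$ is maximal in $A^*$ iff it is ranked and everything above it lies in $T$. Likewise $z^N$ is maximal iff it is the terminal entry and all framed entries lie in $T$. So
\[
\rho(z,X\setminus T)=\sum_{S\subseteq T}q(S,z^F)\quad(z\notin T),\qquad \rho(z,X\setminus T)=\sum_{z\in S\subseteq T}q(S,z^N)\quad(z\in T).
\]
Both systems are triangular over a Boolean lattice, so M\"obius inversion recovers every $q$. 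The fact that virtual menus are not arbitrary subsets of $X^*$ never matters.
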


As before, analogues of \autoref{thm:extremepoints} and \autoref{thm:OrderedRUM} also obtain in this setting. In particular, as was true for the random utility model, a swap-progressive \emph{frame-dependent} representation exists whenever the data are rationalizable, and will itself always belong to $\Delta(\mathcal{L}^*)$.

\subsubsection*{Dynamic Discrete Choice}\label{sec:DDC}

Our methodology is also applicable to the problem of identifying distributions over dynamic choice functions from Markovian conditional choice probabilities. Let $X$ denote a finite set of alternatives, and $T$ a finite time horizon.  In period $t=1$, the analyst observes unconditional choice probabilities $\rho_1(x)$. For each period $t= 2,\ldots ,T$, the analyst observes probabilities of time $t$ choice, conditional the prior period's choice,  $\rho_t(y \, \vert \, x)$. Together, we refer to $\{\rho_1,\dots,\rho_T\}$ as a system of conditional choice probabilities.\medskip

An analyst seeks to decompose a system of conditional choice probabilities into a measure over dynamic choice functions in $\mathcal{C}^{dc} = X^T$.  A system of conditional choice probabilities is compatible with a dynamic discrete choice representation $\mu \in \Delta(\mathcal{C}^{dc})$ if:
\[
    \rho_1(x) = \sum_{c \in \mathcal{C}^{dc}} \mu(c) \mathbbm{1}_{\{x = c(1)\}}
\]
and
\[
    \rho_t(y \, \vert \, x) = \frac{\sum_{c \in \mathcal{C}^{dc}} \mu(c) \mathbbm{1}_{\{x \, =\, c(t-1), \, y \,=\, c(t) \}}}{\sum_{c \in \mathcal{C}^{dc}} \mu(c) \mathbbm{1}_{\{x\, =\, c(t-1)\}}}
\]
for all $ t = 2, \ldots, T$ and $x,y \in X$.\medskip

We say two choice functions $c_1, c_2 \in \mathcal{C}^{dc}$ are $t$-compatible if $c_1$ and $c_2$ make the same choice at time $t\!-\!1$.  A Ryser swap is a signed measure $R \in \mathbb{R}^{\mathcal{C}^{dc}}$ of the form $\mathbbm{1}_{\{c_1', c_2'\}} - \mathbbm{1}_{\{c_1, c_2\}}$, where $c'_1$ and $c_2'$ are the choice functions obtained by swapping the first $t\!-\!1$ choices of some $t$-compatible pair of choice functions $c_1, c_2 \in \mathcal{C}^{dc}$. Let $\mathcal{R}^{dc} \subset \mathbb{R}^{\mathcal{C}^{dc}}$ denote the linear span of these measures.

\begin{theorem}\label{thm:DDCObsEq}
    Two distributions $\mu, \mu' \in \Delta(\mathcal{C}^{dc})$ are observationally equivalent if and only if $\mu - \mu' \in \mathcal{R}^{dc}$. In particular, for any $\mathcal{M} \subseteq \Delta(\mathcal{C}^{dc})$, its identified set is precisely $ \big(\mu + \mathcal{R}^{dc}\big) \cap \mathcal{M}$.
\end{theorem}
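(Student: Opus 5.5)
The plan is to reduce observational equivalence to equality of a linear statistic, and then show by induction on $T$ that the kernel of that statistic is exactly $\mathcal{R}^{dc}$.

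\textbf{Step 1: reduction to consecutive pair marginals.} For $\mu \in \Delta(\mathcal{C}^{dc})$, write $\pi_1^\mu(x) = \mu\{c : c(1)=x\}$ and $\pi_t^\mu(x,y) = \mu\{c : c(t-1)=x,\, c(t)=y\}$ for $t=2,\ldots,T$. Each of these is linear in $\mu$. I claim $\mu$ and $\mu'$ are observationally equivalent if and only if $\pi_1^\mu = \pi_1^{\mu'}$ and $\pi_t^\mu = \pi_t^{\mu'}$ for all $t$.

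The forward direction goes by induction on $t$. The period-$(t-1)$ marginal of $\mu$ is determined by $\pi_1^\mu$ when $t=2$, and by $\pi_{t-1}^\mu$ otherwise. Then
\[
\pi_t^\mu(x,y) = \mu\{c(t-1)=x\}\,\rho_t(y\mid x).
\]
When the conditioning event has probability zero, both sides of this identity vanish, so the conditional probability being undefined causes no difficulty. The converse direction is immediate from the same identity.

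Consequently, $\mu,\mu'$ are observationally equivalent if and only if $\mu-\mu'$ lies in the kernel $K_T$ of the linear map $\Psi_T : m \mapsto (\pi_1^m, \pi_2^m, \ldots, \pi_T^m)$ on $\mathbb{R}^{\mathcal{C}^{dc}}$. It therefore suffices to prove $K_T = \mathcal{R}^{dc}$. The second sentence of the theorem then follows exactly as in \autoref{ryserspacetheorem}.

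\textbf{Step 2: swaps lie in the kernel.} Let $c_1,c_2$ be $t$-compatible, so $c_1(t-1)=c_2(t-1)$, and let $c_1',c_2'$ be obtained by exchanging their first $t-1$ choices.

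The multiset of prefixes $(c(1),\ldots,c(t-1))$ is unchanged by the swap. So is the multiset of suffixes $(c(t-1),\ldots,c(T))$, because the two functions share the choice at time $t-1$.

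Every statistic $\pi_s$ with $s\le t-1$ depends only on prefixes, and every $\pi_s$ with $s \ge t$ depends only on suffixes. Hence $\Psi_T(R)=0$, which gives $\mathcal{R}^{dc}\subseteq K_T$.

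\textbf{Step 3: the reverse inclusion, by induction on $T$.} The case $T=1$ is trivial, since $K_1=\{0\}$.

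Take $m\in K_T$ and let $\tilde m$ be its push-forward to $X^{T-1}$ under deleting the last coordinate. Then $\tilde m \in K_{T-1}$, so by the induction hypothesis $\tilde m$ is a combination of horizon-$(T-1)$ swaps.

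Each horizon-$(T-1)$ swap of $\tilde c_1,\tilde c_2$ lifts to a horizon-$T$ swap by appending an arbitrary common last choice $y$ to both functions (for instance $y=x_0$ fixed). The lift projects back to the original swap. Subtracting the corresponding combination of lifts from $m$ gives $m' \in K_T$, with $m - m' \in \mathcal{R}^{dc}$, whose projection to $X^{T-1}$ is zero.

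Now write each $c\in X^T$ as $(h,y)$ with $h\in X^{T-1}$, and group the prefixes $h$ into blocks $H_x$ according to their last entry $h(T-1)=x$. On each block $H_x\times X$, viewed as a two-way table:

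1. The row sums $\sum_y m'(h,y)$ vanish, because the projection of $m'$ to $X^{T-1}$ is zero.
2. The column sums $\sum_{h\in H_x} m'(h,y) = \pi_T^{m'}(x,y)$ vanish, because $m' \in K_T$.

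Tables with zero row and column sums are spanned by the elementary vectors
\[
e_{(h,y)}+e_{(h',y')}-e_{(h',y)}-e_{(h,y')}, \qquad h,h'\in H_x,\ y,y' \in X.
\]
Each such vector is (up to sign) a Ryser swap at $t=T$: the functions $(h,y)$ and $(h',y')$ are $T$-compatible, since both prefixes end in $x$, and exchanging their first $T-1$ choices produces $(h',y)$ and $(h,y')$. Summing over blocks gives $m'\in\mathcal{R}^{dc}$, hence $m\in\mathcal{R}^{dc}$.

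\textbf{Main obstacle.} The delicate step is the lifting argument in Step 3. One must check that lifted swaps are genuine horizon-$T$ swaps, and that after subtracting them the residual has vanishing row sums and vanishing column sums simultaneously, so that the classical two-way-table (Ryser) spanning result applies block by block. The zero-denominator issue in Step 1 is the only other point needing care.
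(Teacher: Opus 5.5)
Your proof is correct, and it takes a genuinely different route from the paper's.

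\textbf{The paper's route.} The paper never computes a kernel directly. It encodes histories as source--sink paths in the layered DAG with nodes $(t,x)$. The edge $(t-1,x')\to(t,x)$ carries $\rho_t(x\mid x')$ times the in-flow at $(t-1,x')$, which is exactly your $\pi_t(x',x)$. It then notes that graph Ryser swaps at a node $(t-1,x)$ are precisely the dynamic swaps, and applies the general \autoref{MainLemmaflow}. That lemma is proved through the Zipper Lemmas, a constructive mass-transport argument. One fixes a path in the support of one decomposition. Edge by edge, positively weighted swaps on mass actually present in the other decomposition pull at least as much weight onto that path. The path is then peeled off and the process repeats.

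\textbf{Your route.} You identify observational equivalence with membership in the kernel of $m\mapsto(\pi^m_1,\ldots,\pi^m_T)$, and compute that kernel by induction on the horizon. You lift the horizon-$(T-1)$ decomposition of the projected residual. You then finish block by block, over histories sharing the state at $T-1$, using the classical fact that zero-margin two-way tables are spanned by $2\times 2$ interchanges, each of which is a swap at time $T$.

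\textbf{What each buys.} Your argument is shorter and self-contained. It handles explicitly the zero-denominator issue in passing from conditional choice probabilities to pair marginals, which the paper leaves as ``straightforward.'' It also makes transparent that the Markov structure of the data is what reduces identification to a chain of two-way tables glued along the state at $T-1$.

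The paper's route buys uniformity and constructiveness. One lemma serves every DAG-encoded model: random utility, random choice, frame dependence, and dynamic choice with state-dependent menus. It also delivers the transformation as an explicit sequence of positively weighted swaps that never create negative mass, rather than only a spanning statement.

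Your induction also extends to state-dependent menus with one change in the lifting step. Instead of a common $y$, append to each function the feasible last choice matching the suffix it carries. This works because a conjugate shares its suffix from time $t-1$ on with one of the original pair.
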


Finally, as in each previous setting, analogues of \autoref{thm:extremepoints} and \autoref{thm:OrderedRUM} also obtain. In particular, a swap-progressive rationalization exists and is unique for any rationalizable data, and will always be supported on $\mathcal{C}^{dc}$.\medskip

While we consider here dynamic discrete choice, our results also extend to more complex, finite dynamic programming models, including those involving state-dependent utility and menu variation. This allows for further application to models such as \cite{rust1987}.

\section{Related Literature}\label{sec:RelatedLit}

The first explicit example of identification failure in the random utility model is due to \citet{fishburn1998stochastic}. His counterexample forms the basis of our notion of a (non-trivial) compatible pair. \citet{turansick2022identification} characterizes data sets which possess a unique random utility rationalization. We obtain this  characterization as a straightforward consequence of the proof of our \autoref{ryserspacetheorem}.\footnote{See also \cite{doignon2023adjacencies}.}\medskip

A number of papers consider restrictions on the random utility model for purposes of obtaining identification. \cite{manzini2018dual} show that in a two-state model, distributions over preferences can generically be recovered. \citet{suleymanov2024branching} recovers identification by requiring certain independence properties of the rationalization; \citet{honda2021random} considers a random cravings model where identification obtains under monotonicity restrictions. By directly characterizing the identified sets of the general random utility model, we instead provide necessary and sufficient conditions for any restriction to be identifying.\medskip 

When alternatives are ordered, a well-known class of restrictions rely on various single-crossing type conditions on the support of rationalizations to achieve uniqueness. This idea is originally due to \cite{apesteguia2017single}; \citet{filiz2023progressive} considered an extension to general stochastic choice models that guarantees existence while preserving uniqueness, at the cost of the unique rationalization possibly failing to belong to the random utility model, even when the data is rationalizable by such a distribution.\footnote{\cite{yildiz2023foundations} studies the structure of such ordered models and shows that a unique ordered representations is possible only when the types in the model form a lattice. On the other hand, \citet{petri2022random} considers an extension of these ordered models to allow for multi-valued choice.}\medskip

Our notion of swap-progressivity builds on these contributions. Applied to the random utility model, it generalizes single-crossing and provides unique predictions, even for data which fail the centrality axiom of \cite{apesteguia2017single}.\footnote{In particular, in this setting, swap progressivity has the same explanatory power as the baseline random utility model, while also being identified.} In the context of general random choice, the analogue of our \autoref{thm:OrderedRUM} is precisely Theorem 1 of \cite{filiz2023progressive}. Moreover, swap-progressive representations exist, are unique, and belong to the models of interest for each of the environments considered in \autoref{sec:Extensions}.\medskip

In practice, it is common to specify random utility model parametrically, e.g.\ \citet{luce1959individual}.\footnote{This model is observationally equivalent to the logit model of \citet{mcfadden1974conditional}. See also \cite{sandomirskiy2023decomposable} for a recent characterization.} More recently, \citet{chambers2024weighted} considers an extension of the Luce model including both salience and utility in its parameterization which preserves identification. In general, Luce type structure allows for simple identification arguments. However, in more complex environments, \autoref{thm:paramthm} provides a practical tool for establishing similar results.\medskip

While our paper focuses on the case of finite environments, there are numerous positive identification results when domains are infinite. \citet{gul2006random} shows that the random expected utility model is identified, as is the random quasilinear utility model \citep{williams1977formation, daly1979identifying, yang2023random}.\medskip

Methodologically, our \autoref{ryserspacetheorem} builds on work in the mathematical discipline of tomography (\citealt{ryser1957combinatorial, fishburn1991sets, Kong1999}).\footnote{\cite{heckman2018unordered} use Ryser's theorem to show identification in a different context. A different strand of literature uses tensor-decomposition techniques to study identification; see \citet{dardanoni2023mixture} and \cite{Kops2026behavioraltypes}.} We show that both classic tomographic applications, as well as related questions about certain, restricted sets of measures with prescribed marginals, can be be fruitfully encoded as flow problems on directed graphs. In this general setting, we provide a unified notion of Ryser swap, which specializes to the forms seen in \autoref{sec:GeometryResults} and \autoref{sec:Extensions}, and show it characterizes the abstract analogue of our identified sets.\medskip

Similarly, to the extent of our knowledge, \autoref{thm:paramthm} does not appear in the economic or mathematical literature. Relative to classical results of Hadamard (\citealt{gordon1972diffeomorphisms, krantz2002implicit}), we show that when the domain of a non-linear system is convex (or, more generally, contractible) and the system contains an equal number of equations and unknowns, the classical, difficult-to-verify conditions on the system's range which characterize invertibility are always satisfied. Our proof relies on a fixed point theorem due to \cite{smith1934theorem} which we believe has not found prior economic application.

\section{Conclusion}\label{sec:Conc}

When analysts observe only aggregate choice frequency data, many work-horse models of stochastic choice are unidentified. In such instances, explicit descriptions of the identified sets are of direct practical importance, as they allow analysts not only to understand the range of possible aggregate effects of a policy intervention, but also its distributional implications. Despite this, little has been known about the structure of these sets, even for many widely used models.\medskip

We characterize the identified sets of a wide range of stochastic choice models. Our key insight is that any transformation of a distribution over primitives that preserves all choice probabilities can be represented as a finite sequence of simple, mass-swapping operations that we term \emph{Ryser swaps}. Using this, we give explicit characterizations of both the defining inequalities and extreme points of these models' identified sets. In each of these settings, our results provide tight bounds on the population cross-sections compatible with observed choice frequencies. This yields valuable insights, e.g., for assessing the effects of policies which may differentially affect various subsets of a population.\medskip

We also provide specialized characterizations of identification for a number of common, structured classes of restriction, such as constraints on the support of the rationalizing distribution, or when choice frequencies are assumed to vary smoothly in some vector of parameters. These provide novel tools for practitioners to employ when evaluating the identification properties of various restrictions going forward.\medskip

Our results suggest several fruitful directions for further inquiry. One would be to understand how our methodological techniques generalize to the case of continuous consumption spaces.  While we have emphasized that our results apply broadly across various economic domains of interest, another interesting question would be to determine precisely which models admit a notion of Ryser swap which characterizes their identified sets. We leave these questions for future research.

\bibliographystyle{ecta}
\bibliography{linind}

\appendix

\section{Graphical Results}

\subsection{Definitions \& Preliminaries}\label{app:prelims}

A {\bf graph} is a pair $(\mathcal{N}, \mathcal{E})$ where $\mathcal{N}$ is a finite set of {\bf nodes}, and $\mathcal{E}$ a finite collection of ordered pairs in $\mathcal{N} \times \mathcal{N}$, where we explicitly allow for repetition.\footnote{That is, $\mathcal{E}$ is a multi-set, hence our definition of graph explicitly allows for \emph{multigraphs}.}  We interpret an ordered pair $(x,y) \in \mathcal{E}$ as directed from $x$ (the `tail') to $y$ (the `head').  We will sometimes denote the ordered pair $(x,y) \in \mathcal{E}$ by $x \to y$.\medskip

For a given node $n \in \mathcal{N}$, we write $\mathcal{E}^\downarrow_n$ (resp. $\mathcal{E}^n_{\,\downarrow}$) for the sub-collection of all edges in $\mathcal{E}$ whose head (resp. tail) is $n$. A graph is said to be {\bf acyclic} if there is no subset of edges with the property that any node appearing in some edge in the set does so precisely once as the head of some edge, and once as a tail. A {\bf directed acyclic graph} (DAG) is a tuple $(\mathcal{N}, \mathcal{E}, s, t)$, where $(\mathcal{N}, \mathcal{E})$ is an acyclic graph, and $s$ and $t$ denote specialized nodes in $\mathcal{N}$, respectively the {\bf source} and {\bf sink}, such that $\mathcal{E}_s^\downarrow = \mathcal{E}^t_{\,\downarrow} = \varnothing$. We will assume that $s$ and $t$ are the unique nodes with these properties.\vspace{-2mm}

\subsubsection*{Flows \& Path Decompositions}
Given a directed acyclic graph,  a {\bf flow} is a function $f : \mathcal{E} \to \mathbb{R}_+$ such that (i) at every non-source/sink node $n$, total in-flow equals total out-flow, and (ii) the total out-flow from the source is normalized to unity:
\begin{equation}\label{flowdef}
    \begin{aligned}
         \textrm{(i)} \;  \; \sum_{e \in \mathcal{E}^\downarrow_n} f(e) = \sum_{e \in \mathcal{E}^n_{\, \downarrow}} f(e) & \quad \textrm{ and } \quad  \textrm{(ii)} \; \; \sum_{e \in \mathcal{E}^s_{\downarrow}} f(e) = 1.
    \end{aligned}  
\end{equation}
A function $f : \mathcal{E} \to \mathbb{R}_+$ is a \textbf{quasi-flow} if it satisfies only condition (i). A {\bf path} $P$ is a sequence of edges such that (i) the source $s$ and sink $t$ each appear in exactly one edge, and (ii) every other $n \in \mathcal{N}$ that appears in any edge appears exactly twice, once as a head and once as a tail. A path $P$ is said to pass through a node $n$ if $n$ appears in some edge in $P$, and for any node $n \in \mathcal{N}$ appearing in $P$, we let $P_n^\downarrow$ (resp.\ $P^n_\downarrow$) denote the unique edge in $P$ whose head (resp.\ tail) is $n$.\footnote{Further, we use $s_k^\uparrow(P)$ (resp. $s^k_\downarrow(P)$) to denote the first $k$ (resp. last $|P|-k$) edges in a path $P$. We will also sometimes use $s^\uparrow_n$ and $s_\downarrow^n$ to denote the initial/terminal segments of a path up to/starting from some $n \in \mathcal{N}$.} Let $\mathcal{P}$ denote the collection of all paths.\footnote{In particular, throughout the appendix we will use $\mathcal{P}$ to denote the path set of a graph, rather than the set of random choice rules as used in the body.} We say a non-negative measure $\pi \in \mathbb{R}^\mathcal{P}_+$ defines a {\bf path decomposition} of a quasi-flow $f$ if, for all $e \in \mathcal{E}$:
\[
    f(e) = \sum_{\{P \in \mathcal{P} \, :\, e \in P\}} \pi(P).
\]
A path decomposition belongs to $\Delta(\mathcal{P})$ if and only if it decomposes a flow. A map $f: \mathcal{E} \to \mathbb{R}_+$ admits at least one path decomposition if and only if it is a quasi-flow (e.g.\ \citealt{ahuja1993network}).\vspace{-2mm}

\subsubsection*{Compatibility \& Ryser Swaps}

Two paths $P_1$ and $P_2$ are $n$-{\bf compatible} if they pass through some common node $n \in \mathcal{N}$. A pair of paths $P_1', P_2'$ are $n$-{\bf conjugate} to an $n$-compatible pair $P_1,P_2$ if they are the paths obtained by following $P_1$ until $n$, then $P_2$ thereafter, and vice-versa.\footnote{That is, $P_1'$ agrees with $P_1$ up to $n$, then agrees with $P_2$ thereafter, and vice-versa.} Two path decompositions $\pi,\pi' \in \Delta(\mathcal{P})$ differ by a {\bf weighted Ryser swap} if:
\begin{equation}\label{wrs}
    \pi - \pi' = \varepsilon \,\big( \mathbbm{1}_{P_1} + \mathbbm{1}_{P_2} - \mathbbm{1}_{P_1'} - \mathbbm{1}_{P_2'} \big),
\end{equation}
where $P_1'$ and $P_2'$ are conjugate to some compatible pair $P_1,P_2$.\footnote{Formally, we refer to any vector in $\mathbb{R}^\mathcal{P}$ of the form taken by the right-hand side of \eqref{wrs} to be a weighted Ryser swap.} Note that if $\pi'$ and $\pi$ differ by a weighted Ryser swap, they are path decompositions of the same quasi-flow. We define the {\bf Ryser subspace} $\mathcal{R} \subseteq \mathbb{R}^\mathcal{P}$ as the span of all weighted Ryser swaps with $R$ denoting a typical Ryser swap.\vspace{-2mm}

\subsubsection*{Rearrangements}

Let $P_1,\ldots, P_K$ be a sequence of paths, explicitly allowing repetition. For any $n \in \mathcal{N}$, let $\Pi_n \subseteq \{1,\ldots, N\}$ denote the subset of indices corresponding to paths $P_i$ passing through $n$. A {\bf rearrangement} of $P_1,\ldots, P_K$ is a choice of permutation $\sigma_n : \Pi_n \to \Pi_n$ for each $n \in \mathcal{N}$. Any rearrangement of $P_1,\ldots, P_K$ recursively defines a new sequence of paths $P'_1,\ldots, P'_K$ via:
\[
P'_k : \quad n_0 \to \cdots \to n_M
\]
where $n_0 \equiv s$, $n_M \equiv t$, and:
\[
    n_{i+1} = \textrm{head } [P_{(\sigma_{n_i} \, \circ\, \cdots \, \circ \, \sigma_{n_0})^{-1}(k)}]^{n_i}_\downarrow,
\]
for all $0\le i \le M-1$. In other words, $P_k'$ is formed by initially following $P_{\sigma_{n_0}^{-1}(k)}$ to its first non-source node, $n_1$, then following the outgoing edge from $n_1$ belonging to $P_{(\sigma_{n_1}\, \circ \, \sigma_{n_0})^{-1}(k)}$ to $n_2$, and so forth. Two sequences are said to be {\bf rearrangement equivalent} if they can be obtained from one another by a rearrangement.\vspace{-2mm}

\subsubsection*{The Random Utility Graph}

Our main technical contribution is to characterize the sets of path decompositions of arbitrary quasi-flows, for general directed acyclic graphs.  Each of our economic results then obtains simply by applying these abstract theorems to the particular family of graphs encoding a specific stochastic choice theory. As our exposition has focused on the random utility model, we introduce its graphical representation, due to \cite{fiorini2004short}, here.  In \hyperlink{bru}{Online Appendix B}, we collect similar constructions for the various other stochastic choice theories mentioned in the text.\medskip

The random utility graph is defined by $\mathcal{N} =  2^{X}$ and $\mathcal{E}=\big \{(A,B)\ \vert \; B=A\setminus \{a\}, a \in A\big \}$, where $s = X$ and $t = \varnothing$. Any rationalizable (i.e.\ consistent with the random utility model) random choice rule $\rho$ induces a unique flow on this graph as follows:
\begin{equation}\label{bm}
    f\big(A \to A \setminus \{a\}\big) = \sum_{A \subseteq B}(-1)^{|B\setminus A|}\rho(a,B).
\end{equation}
The quantities $f\big(A \to A \setminus \{a\}\big)$ are often referred to as \emph{Block-Marschak polynomials}.\footnote{For the origin of \eqref{bm}, see \cite{block1959random}.}\medskip

\citet{fiorini2004short} shows this $f$ indeed defines a flow, i.e.\ satisfies both conditions of \eqref{flowdef} and, conversely, that every flow on this graph may be regarded as arising via \eqref{bm} from some unique, rationalizable random choice rule. Moreover, there is a bijective correspondence between the path set $\mathcal{P}$ and strict preferences on $X$, by regarding a path:
\[
    X \to X \setminus \{x_1\} \to \cdots \to X \setminus \{x_1, \ldots, x_{N-1}\} \to \varnothing
\]
as a sequence of nested lower contour sets, which uniquely defines the preference $x_1 \succ \cdots \succ x_N$. As a consequence, we may identify $\Delta(\mathcal{P})$ with $\Delta(\mathcal{L})$, and, under this identification, \cite{fiorini2004short} shows a distribution in $\Delta(\mathcal{P})$ rationalizes the data $\rho$ if and only if it is a path decomposition of the flow \eqref{bm}. Thus the partial identification properties of the random utility model are precisely equivalent to the path decompositions of flows on $(\mathcal{N}, \mathcal{E}, s,t)$.

\subsection{Graphical Results and Proofs}
\begin{lemma}[Zipper Lemma 1]\label{localswapflow}
    Consider two path decompositions $\pi,\pi' \in \Delta(\mathcal{P})$ of a quasi-flow $f$. Let $P \in \textrm{supp}(\pi)$, $\{P_u\}_{u=1}^U \subseteq \textrm{supp}(\pi')$ and, for some fixed $1 \leq \bar{k} < |P|$, suppose $s_{\bar{k}}^\uparrow(P)=s_{\bar{k}}^\uparrow(P_u)$ for each $u$.\footnote{We use $\vert P \vert$ to denote the number of edges in $P$.} Then, if $\pi(P) \leq \sum_{u=1}^U \pi'(P_u)$, there exists a finite sequence of weighted Ryser swaps $(c_v R_v)_{v=1}^V$ and a finite set of paths $\{P_w\}_{w=1}^W$ satisfying $s_{\bar{k}+1}^\uparrow(P)=s_{\bar{k}+1}^\uparrow(P_w)$ for all $w$, such that:
    \[
        \pi(P) \leq \sum_{w=1}^W \left(\pi'(P_w) + \sum_{v=1}^V c_v \, \langle R_v, e_{P_w}\rangle\right),
     \]
    where $e_{P_w}$ denotes the $P_w$-th standard Euclidean basis vector of $\mathbb{R}^\mathcal{P}$.
\end{lemma}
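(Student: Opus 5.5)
Let $n$ denote the head of the $\bar{k}$-th edge of $P$ (the node reached after the common initial segment $s^\uparrow_{\bar{k}}(P)$), and let $e^* = P^n_\downarrow$ be the $(\bar{k}+1)$-st edge of $P$. Both $\pi$ and $\pi'$ decompose $f$, so
\[
f(e^*) = \sum_{Q \ni e^*} \pi(Q) = \sum_{Q\ni e^*}\pi'(Q),
\]
and in particular $\pi(P) \le f(e^*) = \sum_{Q \ni e^*}\pi'(Q)$. The idea is to use Ryser swaps at the node $n$ to redirect mass so that paths which begin with $s^\uparrow_{\bar{k}}(P)$ also continue along $e^*$. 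Concretely, the $P_u$ pass through $n$ via the same initial segment as $P$. The paths $Q \in \textrm{supp}(\pi')$ that use $e^*$ also pass through $n$. Hence every $P_u$ is $n$-compatible with every such $Q$.

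For each pair $(P_u, Q)$ with $e^*\in Q$ and $P_u$ not already using $e^*$, I form the $n$-conjugates. The first is $P_u'$, which follows $P_u$ up to $n$ and then $Q$; the second is $Q'$, which follows $Q$ up to $n$ and then $P_u$. Then $P_u'$ satisfies $s^\uparrow_{\bar{k}+1}(P_u') = s^\uparrow_{\bar{k}+1}(P)$. I apply the weighted Ryser swap $c\,(\mathbbm{1}_{P_u'}+\mathbbm{1}_{Q'} - \mathbbm{1}_{P_u}-\mathbbm{1}_{Q})$ with weights $c_{u,Q} = \pi'(P_u)\,\pi'(Q)/f(e^*_{\text{in}})$-type proportional allocations. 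More simply, I process pairs greedily: transfer $\min(\text{remaining mass on } P_u,\ \text{remaining mass on } Q)$. The final set $\{P_w\}$ consists of the $P_u$ already using $e^*$ together with all the newly created $P_u'$. Since the greedy transfers never exceed available mass, the coefficients remain feasible.

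For the accounting, the total mass accumulated on $\{P_w\}$ equals $\min\big(\sum_u \pi'(P_u),\ \sum_{Q\ni e^*}\pi'(Q)\big)$, up to the mass already on $P_u$'s using $e^*$ (which is counted on both sides, so I must avoid double-counting it). Both quantities are at least $\pi(P)$: the first by hypothesis, the second by the flow identity above. This gives the stated inequality.

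The main obstacle is the bookkeeping when some $Q \ni e^*$ is itself one of the $P_u$, or when a conjugate coincides with an existing path. Summing $\langle R_v, e_{P_w}\rangle$ over a set of paths, rather than a multiset, could then lose or double-count mass. I would handle this by treating the $P_w$ as distinct paths and first separating the mass of the $P_u$'s that already use $e^*$. The greedy pairing would then only match $P_u$'s not using $e^*$ with $Q$'s whose prefix differs from $s^\uparrow_{\bar{k}}(P)$. Such pairs are non-trivially swappable, and their conjugates land in the desired class with net gain exactly equal to the transferred weight. If the $Q$'s with the same prefix exhaust $f(e^*)$, those $Q$ already lie in the desired class and already carry enough mass, and the inequality holds without any swaps.
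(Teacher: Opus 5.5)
Your argument is correct and follows the same plan as the paper's. At the node $n$ where $s^\uparrow_{\bar k}(P)$ ends, you swap the $P_u$ against the paths of $\textrm{supp}(\pi')$ through $e^*$, and conclude from $\pi(P)\le\min\big(\sum_u\pi'(P_u),\,f(e^*)\big)$. The difference is how the swap weights are allocated, and there your version is sounder than the paper's.

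The paper spreads product weights $c_{uv}$, summing to $\theta$, over \emph{all} pairs $(P_u,P_v)$, and asserts that the target class ends up carrying $\theta$. As you anticipated, this overcounts. A pair whose members share the segment before $n$ (in particular, any $P_v$ that is itself a $P_u$), or share the segment after $n$, is the zero swap. A pair whose $P_u$ already uses $e^*$ only moves mass within the target class. Either way, weight is consumed with no net gain.

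Here is a concrete case, writing paths as node strings. Take edges $s\to n$, $s\to m$, $m\to n$, $n\to a$, $n\to b$, $a\to t$, $b\to t$. Let $\pi'$ put $0.2,0.3,0.3,0.2$ on $snat$, $snbt$, $smnat$, $smnbt$, and let $\pi$ put $0.5$ on each of $snat$ and $smnbt$. With $P=snat$, $\bar k=1$ and $\{P_u\}=\{snat,snbt\}$, the only non-trivial product-weighted swap is $(snbt,smnat)$, with weight $0.18$. This leaves $0.38<\pi(P)$ on $snat$, the only path beginning $s\to n\to a$. Your greedy matching instead transfers $0.3$ and lands exactly on $\pi$.

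You restrict to pairs consisting of a $P_u$ avoiding $e^*$ and a path through $e^*$ that enters $n$ differently. Every such swap is non-trivial, its net gain on the target class equals its weight, and it never adds mass to a path being drained. So the non-negativity that Zipper Lemma 2 needs comes for free.

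One detail in your bookkeeping needs adjusting. Let $a_0,a_1$ be the $\pi'$-mass of the $P_u$ avoiding, resp.\ using, $e^*$. Let $q_{=},q_{\neq}$ be the $\pi'$-mass of paths through $e^*$ whose segment before $n$ equals, resp.\ differs from, $s^\uparrow_{\bar k}(P)$; note $a_1\le q_{=}$. After your swaps the target class carries $q_{=}+\min(a_0,q_{\neq})$.
\begin{itemize}
\item If $a_0\le q_{\neq}$, this is at least $a_0+a_1\ge\pi(P)$.
\item If $q_{\neq}<a_0$, you need $q_{=}+q_{\neq}=f(e^*)\ge\pi(P)$. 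That requires $\{P_w\}$ to include every path through $e^*$ beginning with $s^\uparrow_{\bar k}(P)$, not only those that happen to be among the $P_u$. The hypothesis allows $\{P_u\}$ to be a proper subset of these paths.
\end{itemize}
So take $\{P_w\}$ to be all paths beginning with $s^\uparrow_{\bar k+1}(P)$ that carry mass after the swaps. Your closing sentence only covers the special case $q_{\neq}=0$.
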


\begin{proof}
    Let $(t_{\bar{k}+1},h_{\bar{k}+1})$ denote the $(\bar{k}+1)$-st edge of the path $P$. First observe that each path in $\{P_u\}_{u=1}^U \subseteq \textrm{supp}(\pi')$ passes through node $t_{\bar{k}+1}$ since $t_{\bar{k}+1}$ is the head of the $\bar{k}$-th edge in $P$ and $s_{\bar{k}}^\uparrow(P)=s_{\bar{k}}^\uparrow(P_u)$ for each $u$. Now, since $\pi$ and $\pi'$ are path decompositions of the same quasi-flow, we have by hypothesis:
    \[
    \sum_{P'\in \mathcal{P}}\pi'(P')\mathbbm{1}_{\{(t_{\bar{k}+1},h_{\bar{k}+1}) \in P'\}}=\sum_{P'\in \mathcal{P}}\pi(P')\mathbbm{1}_{\{(t_{\bar{k}+1},h_{\bar{k}+1}) \in P'\}} \geq \pi(P),
    \]
    i.e.\ both $\pi$ and $\pi'$ put at least $\pi(P)$ mass on paths which pass through $(t_{\bar{k}+1},h_{\bar{k}+1}) \in P$.

    Let $\{P_v\}_{v=1}^V$ denote those paths in $\textrm{supp}(\pi')$ that contain $(t_{\bar{k}+1},h_{\bar{k}+1})$ as an edge. Since each $P_v$ and each $P_u$ passes through node $t_{\bar{k}+1}$, each pair $(P_u,P_v)$ is compatible at $t_{\bar{k}+1}$. Further, as $s_{\bar{k}}^\uparrow(P)=s_{\bar{k}}^\uparrow(P_u)$ for each $u$ and $(t_{\bar{k}+1},h_{\bar{k} +1})\in P_v\cap P$ for each $v$, at least one path in the $t_{\bar{k}+1}$-conjugate of each $(P_u,P_v)$ agrees with $P$ on $s_{\bar{k}+1}^\uparrow$. Let $\theta$ be the minimum between (i) the total amount of mass $\pi'$ puts on paths in $\{P_u\}_{u=1}^U$ and (ii) the total mass $\pi'$ puts on paths in $\{P_v\}_{v=1}^V$. By the preceding discussion, $\theta \geq \pi(P)$. For each pair of paths $(P_u,P_v)$, let
    \[
    c_{uv}=\left( \frac{\pi'(P_u)}{\sum_{u'=1}^U \pi'(P_{u'})}\right)\left( \frac{\pi'(P_v)}{\sum_{v'=1}^V \pi'(P_{v'})}\right)\theta.
    \]
    By construction, we get that $\sum_{u=1}^U \sum_{v=1}^V c_{uv}=\theta$. Further, $\sum_{u=1}^Uc_{uv}=\pi'(P_v)\frac{\theta}{\sum_{v'=1}^V \pi'(P_{v'})}\leq \pi'(P_v)$, and analogously for $P_u$.  Now let $R_{uv}$ be the Ryser swap which puts $-1$ mass on paths $P_u$ and $P_v$ and $+1$ mass on their $t_{\bar{k}+1}$-conjugate. Further, by previous arguments, the signed measure resulting from $\sum_{u=1}^U \sum_{v=1}^V c_{uv}R_{uv}$ puts $\theta$ mass on paths which agree with $P$ on $S_{\bar{k}+1}^\uparrow(P)$. Hence we obtain a finite sequence of weighted Ryser swaps which shifts $\theta$ mass onto paths $\{P_w\}_{w=1}^W$ satisfying $s_{\bar{k}+1}^\uparrow(P)=s_{\bar{k}+1}^\uparrow(P_w)$, which gives:
    $$\sum_{w=1}^W \left(\pi'(P_w) + \sum_{v=1}^V c_v \, \langle R_v, e_{P_w}\rangle \right) \geq \theta \geq \pi(P)$$
    as desired while maintaining non-negativity of the resulting measure.
\end{proof}

\begin{lemma}[Zipper Lemma 2]\label{swapsupportflow}
        Suppose $\pi$ and $\pi'$ are path decompositions of some quasi-flow $f$. Fix a path $P$ in the support of $\pi$. Then there exists a finite sequence of weighted Ryser swaps with positive coefficients $\{c_iR_i\}$, $c_i > 0$, such that $\pi' + \sum_i c_i R_i = \pi''$, $\pi''(P)\geq \pi(P)$, and $\pi''$ is a path-decomposition.
\end{lemma}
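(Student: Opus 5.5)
The plan is to iterate Zipper Lemma 1 (\autoref{localswapflow}) along the edges of $P$, from the source to the sink, and so build the required sequence of swaps one edge at a time. The invariant will be the following. At stage $k$ we have a measure $\pi'_k = \pi' + \sum_{i} c_i R_i$, where each $c_i>0$ and $R_i$ is a Ryser swap, such that:
\begin{itemize}
\item[(i)] $\pi'_k$ is non-negative, and hence is a path decomposition of $f$, since Ryser swaps preserve the quasi-flow;
\item[(ii)] there is a finite set $\{P_u\}\subseteq \textrm{supp}(\pi'_k)$ of paths agreeing with $P$ on $s^\uparrow_k$, with $\sum_u \pi'_k(P_u)\ge \pi(P)$.
\end{itemize}
When $k=|P|$, the only path agreeing with $P$ on all $|P|$ edges is $P$ itself. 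So $\pi'_{|P|}(P)\ge\pi(P)$, and we take $\pi''=\pi'_{|P|}$.

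For the base case I would take $k=1$ directly. Let $e_1$ be the first edge of $P$. Since $\pi$ and $\pi'$ decompose the same quasi-flow,
\[
\sum_{P'\ni e_1}\pi'(P') = f(e_1) = \sum_{P'\ni e_1}\pi(P') \ge \pi(P).
\]
So the paths in $\textrm{supp}(\pi')$ containing $e_1$ satisfy the invariant with no swaps at all. (Alternatively, one could run the argument of \autoref{localswapflow} with $\bar k=0$, but this direct route avoids the restriction $\bar k\ge 1$.)

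For the inductive step, suppose the invariant holds at $k$ with $1\le k<|P|$. Apply \autoref{localswapflow} to the pair $\pi$, $\pi'_k$, which are both decompositions of $f$, with $\bar k = k$ and the family $\{P_u\}$. The lemma produces weighted swaps $(c_vR_v)$ and paths $\{P_w\}$ agreeing with $P$ on $s^\uparrow_{k+1}$, with total mass at least $\pi(P)$ under $\pi'_k+\sum_v c_vR_v$. Set $\pi'_{k+1} = \pi'_k+\sum_v c_vR_v$. Then discard any $P_w$ carrying zero mass, so that the remaining ones lie in the support.

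The main thing to check is that the hypotheses I need in the next round are actually delivered by the lemma. Two points need verifying against the construction in the proof of \autoref{localswapflow}. First, the resulting measure must remain non-negative. This holds because $\sum_u c_{uv}\le \pi'(P_v)$ and $\sum_v c_{uv}\le\pi'(P_u)$, and the proof states this explicitly. Second, the coefficients must be positive when written as Ryser swaps. Each $c_{uv}R_{uv}$ removes mass from the compatible pair $(P_u,P_v)$ and adds it to their conjugates. This has the form $c\,(\mathbbm 1_{\text{conj}}-\mathbbm 1_{\text{pair}})$ with $c=c_{uv}\ge 0$, and I drop the zero terms.

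A further subtlety is that the lemma is stated for $\pi,\pi'\in\Delta(\mathcal P)$, while $f$ here is only a quasi-flow. I expect this to be harmless, because the proof only uses the edge-marginal identity. If needed, one can rescale both measures by the common total mass; Ryser swaps preserve total mass, so the two totals agree. After $|P|-1$ applications we obtain the finite positive combination and $\pi''$ as required.
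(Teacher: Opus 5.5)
Your proposal is correct and follows the paper's own argument. Like the paper, you start from the paths in $\textrm{supp}(\pi')$ through the first edge of $P$, apply Zipper Lemma 1 repeatedly to extend the agreeing prefix one edge at a time, and conclude at $k=|P|$ that the only surviving path is $P$ itself. Your explicit checks of non-negativity, support membership and positive coefficients spell out what the paper's proof leaves implicit.
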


\begin{proof}
    Fix $P$ and denote its first edge $(s,n_1)$. Since $\pi$ and $\pi'$ are path decompositions of $f$, 
    \[
    \sum_{P' \in \mathcal{P}}\pi(P) \mathbbm{1}_{\{(s,n_1) \in P'\}}=\sum_{P' \in \mathcal{P}}\pi'(P') \mathbbm{1}_{\{(s,n_1) \in P'\}}\geq \pi(P) > 0.
    \]
    As a consequence, the set of paths in $\textrm{supp}(\pi')$ which include $(s,n_1)$ is non-empty; enumerate it $\{P_u\}_{u=1}^U$. Then by the above we have $\pi(P) \leq \sum_{u=1}^U \pi'(P_u)$, and hence the requirements of \autoref{localswapflow} are satisfied. However, note that the enumerated set of paths $\{P_w\}_{w=1}^W$ whose existence \autoref{localswapflow} asserts \emph{also} satisfy the requirements of \autoref{localswapflow} (for the same fixed choice of $P$), but this time for $s_2^\uparrow(P)$. Repeating this process $\vert P \vert$ times yields a path decomposition $\pi''$ differing from $\pi$ by a sum of weighted Ryser swaps, and an enumerated set of paths $\{P_l\}_{l=1}^L$ belonging to the support of $\pi''$ such that (i) $s_{\vert P \vert}^\uparrow(P) = s_{\vert P \vert}^\uparrow (P_l)$ and (ii) $\pi(P) \le \sum_{l=1}^L \pi''(P_l)$, but (i) implies $L = 1$ and $P_l = P$, hence (ii) reduces to $\pi''(P) \ge \pi(P)$ as desired.
\end{proof}

\begin{lemma}\label{MainLemmaflow}
    Both $\pi$ and $\pi'$ are path decompositions of the same quasi-flow $f$ if and only if there exists a finite sequence of Ryser swaps with positive coefficients $\{c_iR_i\}$, $c_i > 0$, such that $\pi + \sum_i c_i R_i = \pi'$.
\end{lemma}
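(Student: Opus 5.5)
The ``if'' direction is immediate. Each Ryser swap $R = \mathbbm{1}_{P_1'} + \mathbbm{1}_{P_2'} - \mathbbm{1}_{P_1} - \mathbbm{1}_{P_2}$ preserves edge loads. For every edge $e$, the number of paths among $P_1,P_2$ containing $e$ equals the number among $P_1',P_2'$ containing $e$, because the conjugates simply redistribute the initial and terminal segments at the common node $n$. The map $\pi \mapsto f_\pi$, with $f_\pi(e)=\sum_{P \ni e}\pi(P)$, is linear, so $\pi' = \pi + \sum_i c_i R_i$ decomposes the same quasi-flow as $\pi$. Nonnegativity of $\pi'$ is assumed, since $\pi'$ is stated to be a path decomposition.

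For ``only if'' I would induct on $\vert \textrm{supp}(\pi') \setminus \{P : \pi(P) \ge \pi'(P)\}\vert$, or more simply argue by repeatedly applying \autoref{swapsupportflow} with the roles arranged appropriately. Fix $P \in \textrm{supp}(\pi')$. By \autoref{swapsupportflow}, applied with $\pi'$ as the target and $\pi$ as the starting decomposition, there are positive weighted swaps taking $\pi$ to a path decomposition $\pi_1$ of $f$ with $\pi_1(P) \ge \pi'(P)$. The problem is that this may overshoot on $P$, and it may disturb mass on paths already matched. To control this, I would instead work with residuals. Set $\alpha = \pi'(P)$, and note that $\pi_1 - \alpha \mathbbm{1}_P$ and $\pi' - \alpha \mathbbm{1}_P$ are both nonnegative decompositions of the same quasi-flow $f - \alpha f_{P}$; this is where it matters that the lemmas were stated for quasi-flows rather than flows. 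Now recurse on these residual measures, whose support excludes $P$ on the $\pi'$ side. Since $\textrm{supp}(\pi'-\alpha\mathbbm{1}_P)$ strictly shrinks, after at most $\vert\textrm{supp}(\pi')\vert$ rounds the residual of $\pi'$ is zero. The residual quasi-flow is then zero, so the residual of the transformed $\pi$ is zero as well, because a nonnegative decomposition of the zero quasi-flow is zero. Summing all swaps used gives $\pi + \sum_i c_i R_i = \pi'$. The swaps used on residuals are valid swaps on the full measures, since adding back $\alpha\mathbbm{1}_P$ only increases entries, so nonnegativity is preserved throughout.

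The main obstacle is making sure the swaps produced at each stage, which \autoref{swapsupportflow} certifies only for the measures then in hand, remain valid with positive coefficients when concatenated. I also need the residual step to be legitimate: $\pi_1 - \alpha\mathbbm{1}_P \ge 0$ must hold, which is exactly the guarantee $\pi_1(P)\ge \pi'(P)$. A minor point is that if the statement intends ``positive coefficients'' literally, any swap appearing with a negative sign is rewritten as the reversed swap, with the roles of the conjugate pairs exchanged, so all coefficients can be taken positive.
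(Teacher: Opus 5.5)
Your proposal is correct and follows essentially the same route as the paper. The paper's proof also peels off the target's mass one path at a time. At each step it invokes \autoref{swapsupportflow} to push at least the required mass onto the current path, then subtracts that mass from both residual decompositions of the smaller quasi-flow, and it stops once the residual vanishes. Your bookkeeping is, if anything, slightly cleaner than the paper's, which enumerates $\textrm{supp}(\pi')$ while treating $\pi$ as the target. You instead enumerate the target's support and note explicitly that a nonnegative decomposition of the zero quasi-flow must vanish.
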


\begin{proof}
    $(\Longleftarrow)$: Suppose $\pi + \sum_i c_i R_i = \pi'$, where $\{c_i R_i\}$ is a finite sequence of weighted Ryser swaps. If two path decompositions differ by a single weighted Ryser swap, they are necessarily path decompositions of the same quasi-flow. By induction, we obtain $\pi'$ and $\pi$ are path decompositions of the same quasi-flow $f$. 
    
    $(\Longrightarrow)$: Toward the converse, we provide an explicit algorithm to construct $\pi$ from $\pi'$ via a sequence of weighted Ryser swaps.
    \begin{enumerate}
        \item Initialize by enumerating the set of paths in the support of $\pi'$ via $i \in \{1, \dots, I\}$ and set $i=1$. Set $\pi_1=\pi$ and set $\pi'_1=\pi'$. 
        \item $\pi_i$ and $\pi'_i$ are path decompositions of the quasi-flow $f_i$.\footnote{Here, $f_1=f$ and for all $i>1$, $f_i(e)=\sum_{\{P \in \mathcal{P} \, :\, e \in P\}} \pi_i(P)$.} As such, by \autoref{swapsupportflow}, we obtain a path decomposition  $\pi''_i$ of the same quasi-flow as $\pi_i, \pi'_i$, but satisfying $\pi_i''(P_i) \ge \pi_i(P_i)$.
        \item Set $i=i+1$. Set $\pi'_{i+1}(P)=\pi''_{i}(P)-\pi(P)\mathbbm{1}_{\{P=P_{i}\}}$, and $\pi_{i+1}(P)=\pi_{i}(P)-\pi(P)\mathbbm{1}_{\{P=P_{i}\}}$.
        \item If $\pi_i=\mathbf{0}$, terminate the algorithm. If not, return to step 2.
    \end{enumerate}
    At each step, the algorithm applies only finitely many weighted Ryser swaps. As it terminates with $\pi_i=\mathbf{0}$, it entails a finite sequence of weighted Ryser swaps which transforms $\pi'$ into $\pi$, as desired.
\end{proof}

\subsubsection*{Proof of \autoref{ryserspacetheorem} and \autoref{conjugatelemma}}

\autoref{ryserspacetheorem} follows from applying the following abstract result to the random utility graph defined at the start of this appendix. \autoref{conjugatelemma} then  follows as an immediate corollary.

\begin{theorem}\label{thm:flowryserspacetheorem}
    A pair $\pi, \pi' \in \Delta(\mathcal{P})$ are path decompositions of the same quasi-flow if and only if $\pi - \pi' \in \mathcal{R}$. In particular, for any $\pi \in \mathcal{M}\subseteq \Delta(\mathcal{P})$, the set of decompositions of the same flow as $\pi$ corresponds to $\big(\pi + \mathcal{R}\big) \cap \mathcal{M}$.
\end{theorem}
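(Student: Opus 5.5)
The plan is to derive \autoref{thm:flowryserspacetheorem} almost directly from \autoref{MainLemmaflow}, handling the two directions separately and then reading off the description of the identified set.

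For the ``if'' direction, suppose $\pi - \pi' \in \mathcal{R}$, so $\pi - \pi' = \sum_i c_i R_i$ for some finite collection of Ryser swaps and real (possibly negative) coefficients $c_i$. The map $\pi \mapsto f_\pi$, with $f_\pi(e) = \sum_{P \ni e} \pi(P)$, is linear on $\mathbb{R}^\mathcal{P}$. Each Ryser swap $R = \mathbbm{1}_{P_1'} + \mathbbm{1}_{P_2'} - \mathbbm{1}_{P_1} - \mathbbm{1}_{P_2}$ lies in its kernel. To see this, take conjugates at a common node $n$: the multiset of edges in $P_1' \cup P_2'$ is exactly the multiset of edges in $P_1 \cup P_2$, because the initial segments up to $n$ and the terminal segments after $n$ are merely redistributed. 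Hence $f_\pi = f_{\pi'}$, and since both are non-negative measures in $\Delta(\mathcal{P})$, they are path decompositions of the same quasi-flow. Notice that this direction does not need non-negativity of the intermediate measures, which is why signed coefficients cause no trouble.

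For the ``only if'' direction, assume $\pi,\pi' \in \Delta(\mathcal{P})$ decompose the same quasi-flow $f$. Then \autoref{MainLemmaflow} supplies finitely many Ryser swaps $R_i$ with positive coefficients $c_i$ such that $\pi + \sum_i c_i R_i = \pi'$. It follows that $\pi - \pi' = -\sum_i c_i R_i$, which is an element of the span $\mathcal{R}$.

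The ``in particular'' clause follows by combining the two directions. Fix $\pi \in \mathcal{M}$. A measure $\pi'' \in \mathcal{M} \subseteq \Delta(\mathcal{P})$ decomposes the same flow as $\pi$ if and only if $\pi'' - \pi \in \mathcal{R}$, that is, if and only if $\pi'' \in (\pi + \mathcal{R}) \cap \mathcal{M}$. Because $\pi$ has total mass one and every Ryser swap has total mass zero, any such $\pi''$ is again a flow decomposition rather than merely a quasi-flow decomposition.

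The only genuinely substantive ingredient is the algorithm in \autoref{MainLemmaflow}, which I am allowed to assume. The point I would check most carefully is its termination and bookkeeping, namely that the mass peeled off at each step keeps both measures as decompositions of a common quasi-flow $f_i$. If that is sound, the rest is linear algebra.
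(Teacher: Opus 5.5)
Your proposal is correct and follows essentially the same route as the paper, which also obtains both directions from \autoref{MainLemmaflow}. The only difference is minor: for the ``if'' direction the paper notes that Ryser swaps are closed under negation, so every element of $\mathcal{R}$ is a positive combination of swaps and the lemma's reverse direction applies, whereas you check directly that each swap lies in the kernel of the linear map $\pi \mapsto f_\pi$, which avoids any concern about non-negativity of intermediate measures.
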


\begin{proof}
    The first claim follows directly from \autoref{MainLemmaflow}. For the latter, observe that $\{\sum_i c_i R_i| c_i \in \mathbb{R}\}$ defines $\mathcal{R}$. So $\pi + \mathcal{R}$ is the set $\{\pi+\sum_i c_i R_i| c_i \in \mathbb{R}_{++}\}$, as the set of (unweighted) Ryser swaps is closed under multiplication by $-1$. Thus again by \autoref{MainLemmaflow} the claim follows.
\end{proof}

\subsubsection{Graphical Support Restrictions}

\begin{theorem}\label{thm:GraphSupp}
    Let $P_1,\dots,P_K$ and $P_1',\dots,P_K'$ be finite sequences of paths, allowing for repetition. $P_1,\dots,P_K$ and $P_1',\dots,P_K'$ are rearrangement equivalent if and only if the sums:
    \[
        \sum_{k=1}^K \mathbbm{1}_{P_k} \quad \textrm{ and } \quad \sum_{k=1}^K \mathbbm{1}_{P_k'}
    \]
    differ by a finite sum of unweighted Ryser swaps.
\end{theorem}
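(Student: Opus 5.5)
We prove the two directions separately. Throughout, we use the interpretation of a rearrangement as iterated exchanges of path tails at nodes, and we argue by induction, moving one node at a time.

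\emph{($\Longrightarrow$).} Fix a rearrangement $\{\sigma_n\}$ taking $P_1,\ldots,P_K$ to $P'_1,\ldots,P'_K$. Every permutation $\sigma_n$ of $\Pi_n$ is a product of transpositions of indices in $\Pi_n$. The plan is to show that applying a single transposition $(i\,j)$ at a node $n$ is exactly one unweighted Ryser swap. Such a transposition means that, from $n$ onward, the paths labelled $i$ and $j$ exchange continuations. Both of these paths pass through $n$, so they are $n$-compatible. Moreover, the multiset change is $\mathbbm{1}_{Q_1'}+\mathbbm{1}_{Q_2'}-\mathbbm{1}_{Q_1}-\mathbbm{1}_{Q_2}$, where $Q_1',Q_2'$ are the $n$-conjugates of $Q_1,Q_2$.

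To make this rigorous, we process the nodes along a topological order of the DAG, which exists because the graph is acyclic. The key observation is that permuting labels at a node $n$ only alters edges at or after $n$. Hence every later node $m$ is still passed through by exactly the set of current paths that the recursive definition sends through it. The issue is that the index set $\Pi_m$ is defined relative to the original sequence, whereas the current paths through $m$ may carry different labels. We handle this by checking that the recursive construction is equivalent to the following procedure: walk the nodes in topological order, and at each node $m$ apply the permutation to the labels of the paths currently passing through $m$, identifying these labels with $\Pi_m$ through the composed permutation. Once this bookkeeping is in place, summing over all transpositions expresses $\sum\mathbbm{1}_{P'_k}-\sum\mathbbm{1}_{P_k}$ as a finite sum of unweighted Ryser swaps.

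\emph{($\Longleftarrow$).} Suppose $\sum_k\mathbbm{1}_{P'_k}-\sum_k\mathbbm{1}_{P_k}=\sum_{v=1}^V R_v$ with each $R_v$ an unweighted swap. The difficulty is that the intermediate partial sums $\sum_k\mathbbm{1}_{P_k}+\sum_{v\le j}R_v$ may have negative entries, so they need not be multisets of paths. This is the main obstacle. Our first approach is to avoid the given decomposition altogether. Both multisets, divided by $K$, are path decompositions of the same quasi-flow, since swaps preserve edge counts. We then rerun the constructive algorithm of \autoref{MainLemmaflow}, built on \autoref{localswapflow} and \autoref{swapsupportflow}, in an integral version: we work with integer multiplicities, and at each stage we perform a single integral swap between one path agreeing with the target $P$ on its first $\bar k$ edges and one path containing the $(\bar k+1)$st edge of $P$, instead of the fractional $c_{uv}$ swaps. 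Integrality of edge counts guarantees that such a pair exists whenever a unit of $P$ is still missing. This produces a sequence of unweighted swaps, each applied to a genuine multiset of $K$ paths, which transforms $\{P_k\}$ into $\{P'_k\}$. Each such swap is a transposition of continuations at a node, and therefore a rearrangement.

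It remains to show that a composition of rearrangements is again a rearrangement. For this we use the same topological-order bookkeeping as in the forward direction: the transpositions occurring at each node can be collected, in order, into a single permutation $\sigma_n$. We expect this composition step, together with the forward-direction relabeling, to be routine but notationally heavy. The genuine content of the argument is the integral version of the zipper lemmas, which is what lets us avoid negative intermediate multisets.
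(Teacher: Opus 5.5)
Your proposal is correct. The forward direction is essentially the paper's: order the nodes topologically (\autoref{lem:orderedNodes}) and realize each node's permutation by swaps on the current multiset. Writing $\sigma_n$ as a product of transpositions makes precise what the paper's per-node sum $\sum_{i\in\Pi_j}R^j_i$ only gestures at.

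The converse takes a genuinely different route. The paper inducts on the number $L$ of swaps in the \emph{given} decomposition. If some $R_l$ keeps the multiset nonnegative, it peels that swap off and composes rearrangements. Otherwise it pads both sequences with the paths $\mathcal{Q}$ that would go negative, applies the first case to the padded sequences, and strips the padding by a relabeling at $s$.

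You discard the given decomposition and keep only the fact that both multisets have equal edge counts. You then rebuild a chain of single swaps on genuine multisets via an integral version of \autoref{localswapflow}--\autoref{MainLemmaflow}. Acyclicity makes every swapped concatenation a path. Deleting each completed copy of the target path (as in \autoref{MainLemmaflow}) and a final permutation at $s$ handle multiplicities and ordering.

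This buys you more than avoiding negative intermediate sums. It sidesteps the paper's cancellation step, which is delicate: in a rearrangement of the padded sequences, the strands that end up as the $\mathcal{Q}$ paths may be built from edge-occurrences of the original strands. Restricting to the first $K$ indices therefore needs an untangling argument the paper does not supply. Your route also proves the stronger fact that two $K$-sequences are rearrangement equivalent if and only if they have the same edge counts. The paper's route, by contrast, works directly with the swaps it is handed and needs no integral zipper lemma.

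Both proofs use that rearrangements compose. The paper's formula $\sigma_n=\sigma^1_n\circ\sigma^2_n$ composes permutations of index sets that belong to different sequences. Your bookkeeping in terms of edge-occurrences at each node, under which swaps at a node compose regardless of how they interleave with swaps elsewhere, is what makes that step precise.
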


Before proving \autoref{thm:GraphSupp}, we require a preliminary lemma.

\begin{lemma}\label{lem:orderedNodes}
    For any directed acyclic graph $(\mathcal{N},\mathcal{E},s,t)$, there is an enumeration of the set of nodes from $1$ to $|\mathcal{N}|$ such that, for each $i$, there is no $j \geq i$ such that $(n_j,n_i)\in \mathcal{E}$.
\end{lemma}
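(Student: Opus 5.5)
This is the standard topological-ordering fact for finite DAGs. I would prove it by induction on $|\mathcal{N}|$, peeling off at each stage a node with no incoming edges. The required property says that every edge $(n_j,n_i)$ has $j<i$; in particular self-loops are excluded. Self-loops are indeed ruled out by acyclicity: a single edge $(n,n)$ is a subset of edges in which $n$ appears exactly once as a head and once as a tail.

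\emph{Key step: a node with no incoming edges exists.} Suppose, toward a contradiction, that every node of a nonempty finite acyclic graph has at least one incoming edge. Start at any node $m_0$ and repeatedly choose an edge $(m_{r+1},m_r)$ entering the current node. By finiteness, some node repeats, say $m_a=m_b$ with $a<b$, and we may take this to be the first repetition. Then the edges $(m_{r+1},m_r)$ for $a\le r<b$ form a set in which each of the distinct nodes $m_a,\dots,m_{b-1}$ appears exactly once as a head and exactly once as a tail. This is precisely the forbidden configuration in the definition of acyclicity. The only point needing care is choosing the \emph{first} repetition, so that the nodes on the cycle are distinct and each appears exactly twice.

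\emph{Induction.} Given $(\mathcal{N},\mathcal{E})$, choose a node $n$ with $\mathcal{E}^\downarrow_n=\varnothing$ and set $n_1=n$. Delete $n$ together with all edges incident to it. The resulting graph is still acyclic, because any cycle in a subgraph is also a cycle in the original graph. Note that the source/sink designation is irrelevant here, since only acyclicity of the underlying graph is used. By induction, enumerate the remaining nodes as $n_2,\dots,n_{|\mathcal{N}|}$ so that every edge among them runs from a lower index to a higher one.

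\emph{Verification.} Consider any edge $(n_j,n_i)$ with $j\ge i$. If $i=1$, it would be an edge into $n_1$, which has no incoming edges. Otherwise both endpoints have index at least $2$, so the edge survives the deletion, and the inductive hypothesis gives $j<i$, a contradiction.

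The main obstacle is the cycle-extraction in the key step, namely matching the paper's edge-set definition of acyclicity; everything else is routine.
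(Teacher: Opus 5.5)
Your proof is correct, but it follows a different route from the paper's. The paper does not induct. It defines $d(n)$ as the length of the longest edge sequence $s \to \cdots \to n$ and orders nodes by any linear extension of the strict order induced by $d$. It then observes that an edge $n_j \to n_i$ forces $d(n_i) > d(n_j)$, by appending that edge to a longest sequence into $n_j$, and hence $i > j$. The case $j = i$ (a self-loop) is dispatched separately by acyclicity.

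That construction is explicit and produces the ordering in one shot. Its well-definedness, however, tacitly relies on two facts the paper does not spell out:
\begin{itemize}
\item Every node is reachable from $s$. This holds only because of the standing assumption that $s$ is the unique node with no incoming edges, combined with a backward-tracing argument essentially identical to your key step.
\item Edge sequences have bounded length. This again needs a cycle-extraction argument.
\end{itemize}

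Your source-removal induction proves a stronger statement: a topological ordering exists for any finite acyclic multigraph, with no source/sink structure. You genuinely need that generality, since deleting a node can leave a graph with several sources. Your proof also makes the one nontrivial combinatorial step fully explicit: extracting a cycle in the paper's edge-set sense by taking the first repetition.

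The trade-off is that the paper's $d$-ranking gives a closed-form, global description of a valid ordering. Yours is constructive only through an iterative choice of in-degree-zero nodes, though it is more self-contained and does not depend on the DAG's designated source.
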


\begin{proof}
    Let $d: \mathcal{N} \to \mathbb{Z}_+$ denote the function which assigns to each node $n \in \mathcal{N}$ the length (i.e.\ number of edges) of the longest sequence of edges in $\mathcal{E}$ of the form: $s \to \cdots \to n$ connecting $s$ to $n$. As $(\mathcal{N},\mathcal{E},s,t)$ is finite and acyclic, $d$ is well-defined.\medskip

    Let $\succeq_d$ denote the weak order on $\mathcal{N}$ represented by $d$, and $\succeq^*$ any linear order extension of its asymmetric component. Let $n_i$ denote the unique node such that $\big\vert \{n \in \mathcal{N}: n_i \succeq^* n\}\big\vert = i$. It is straightforward to verify this gives an ordering $n_1,\ldots, n_{\vert \mathcal{N}\vert}$.  We claim this ordering has the desired monotonicity properties.\medskip

    To see this, suppose for sake of contradiction that $(n_j, n_i) \in \mathcal{E}$, where $j \ge i$. Since $(\mathcal{N},\mathcal{E},s,t)$ is acyclic, $j > i$. For any sequence of edges $s \to \cdots \to n_j$, by appending the edge $ n_j \to n_i$ we obtain a sequence terminating at $n_i$, hence $d(n_i) > d(n_j)$ and $n_i \succ_d n_j$, and therefore $n_i \succ^* n_j$, which implies $i > j$, a contradiction.
\end{proof}   

We now move to our proof of \autoref{thm:GraphSupp}.

\begin{proof}
    To begin, suppose that $P_1,\dots,P_K$ and $P_1',\dots,P_K'$ are rearrangement equivalent. Choose some enumeration of the nodes of our graph satisfying the conditions of \autoref{lem:orderedNodes}. For a path $P$ that passes through node $n$, let $s_n^\uparrow(P)$ denote the subset of edges in $P$ that precede node $n$ (and thus ends with an edge of form $(m,n)$). Similarly, for a path $P$ which passes through node $n$, let $s^n_\downarrow(P)$ denote the subset of edges in $P$ that come after node $n$ (and thus begins with and edge of form $(n,m)$). We now propose an algorithm that takes a rearrangement and constructs Ryser swaps that induce the rearrangement.
    \begin{enumerate}
        \item Initialize at $j=1$, and define the signed measure $\mu_0 = \sum_{k=1}^K \mathbbm{1}_{P_k}$, and sequence of paths $P^0_k \equiv P_k$, for all $ k = 1,\ldots, K$
        \item For node $n_j$, let $\Pi_{j} \subseteq \{1,\ldots, K\}$ consist of the indices of the paths in $P^{j-1}_1,\dots,P^{j-1}_k$ which pass through $n_j$. For any $i \in \Pi_j$, let $R^j_i = \mathbbm{1}_{\{\tilde{P}_i, \tilde{P}_{\sigma_{n_j}^{-1}(i)}\}}- \mathbbm{1}_{\{P_i, P_{\sigma_{n_j}^{-1}(i)}\}}$, where $\tilde{P}_i=s_{n_j}^\uparrow(P_i)s_\downarrow^{n_j}(P_{\sigma_{n_j}^{-1}(i)})$ and $\tilde{P}_{\sigma_{n_j}^{-1}(i)}=s_{n_j}^\uparrow(P_{\sigma_{n_j}^{-1}(i)})s_\downarrow^{n_j}(P_i)$. Note every $R_i^j$, by definition of a rearrangement, is a valid Ryser swap. Define $\mu_j = \mu_{j-1} + \sum_{i \in \Pi_j} R^j_i$.This creates a new, $\mathbb{Z}_+$-valued signed measure in $\mathbb{R}^\mathcal{P}$ which we may identify with a sequence of paths $P^j_1, \ldots, P^j_K$, where:
        \[
            P^j_k = \begin{cases}
                P^{j-1}_k & \textrm{ if } k \not \in \Pi_j\\
                s_{n_j}^\uparrow(P^j_k)s_\downarrow^{n_j}\big(P^j_{\sigma_{n_j}^{-1}(k)}\big) & \textrm { else.}
            \end{cases}
        \]
        \item Set $j=j+1$. If $n_j=t$, terminate the algorithm. If $n_j \neq t$, return to step 2.
    \end{enumerate}

By construction, the algorithm terminates with a set of paths $P_1^{\vert \mathcal{N}\vert},\dots,P_k^{\vert \mathcal{N}\vert}$ such that, if $P^{\vert \mathcal{N}\vert}_k$ is written as $s \to n_1 \to \cdots \to t$, we have $ n_{i+1} = \textrm{head}\big[P_{(\sigma_{n_i}\, \circ \, \cdots \, \circ \, \sigma_{s})^{-1}(k)} \big]^{n_i}_\downarrow$ and hence $P_1^{\vert \mathcal{N}\vert},\dots,P_k^{\vert \mathcal{N}\vert}$ is precisely the sequence $P_1', \ldots P'_K$. Therefore, $\sum_{k=1}^K \mathbbm{1}_{P_k'} = \sum_{k=1}^K \mathbbm{1}_{P_k} + \sum_{j=1}^{\vert \mathcal{N} \vert}\sum_{i \in N_j} R_i^j$ as desired.\medskip

Conversely, suppose $\sum_{k=1}^K \mathbbm{1}_{P_k'} - \sum_{k=1}^K \mathbbm{1}_{P_k} = \sum_{l=1}^L R_l$. We now show their underlying sequences $P_1,\dots,P_K$ and $P_1',\dots,P_K'$ are rearrangement equivalent by induction on $L$. Suppose first that $L=1$. By relabeling, without loss we may suppose the sequences $P_1,\ldots, P_K$ to $P_1', \ldots, P_K'$ are identical but for two paths in each, $P_i, P_j$ and $P'_{i'}, P'_{j'}$, and $R_1$ must correspond to swapping the outgoing terminal segments $s^{n^*}_\downarrow(P_i)$ and $s^{n^*}_\downarrow(P_j)$ to yield $P'_{i'} = s_{n^*}^\uparrow(P_i)s^{n^*}_\downarrow(P_j)$ and $P'_{j'} = s_{n^*}^\uparrow(P_j)s^{n^*}_\downarrow(P_i)$. In particular, $s_{n^*}^\uparrow(P'_{i'}) = s_{n^*}^\uparrow(P_i)$ and $s_{n^*}^\uparrow(P'_{j'}) = s_{n^*}^\uparrow(P_j)$. Then define a rearrangement where $\sigma_s$ fixes every index except $i,j, i', j'$ and maps $i \leftrightarrow i'$ and $j \leftrightarrow j'$, $\sigma_{n^*}$ fixes every element in $\Pi_{n^*}$ except $i,j$ and maps $i \leftrightarrow j$, and $\sigma_n$ is the identity for all $n \in \mathcal{N} \setminus \{s,n^*\}$. By construction, this defines a rearrangement taking $P_1,\ldots, P_K$ to $P_1', \ldots, P_K'$.

Now, suppose that for $L \le \bar{L}$ the claim holds, and let $\sum_{k=1}^K \mathbbm{1}_{P_k'}$ and $\sum_{k=1}^K \mathbbm{1}_{P_k}$ differ by a sum of $\bar{L}+1$ weighted Ryser swaps. Consider first the case in which there exists some $R_l$, without loss $R_{\bar{L}+1}$, such that $\sum_{k=1}^K \mathbbm{1}_{P_k} + R_{\bar{L}+1} \ge 0$. Then $\sum_{k=1}^K \mathbbm{1}_{P_k} + R_{\bar{L}+1}$ differs from $\sum_{k=1}^K \mathbbm{1}_{P'_k}$ by a sum of $\bar{L}$ unweighted Ryser swaps, and hence identifying $\sum_{k=1}^K \mathbbm{1}_{P_k} + R_{\bar{L}+1}$ with a sequence $\tilde{P}_1, \ldots, \tilde{P}_K$, by hypothesis this sequence is rearrangement equivalent to $P'_1,\ldots, P'_K$. Moreover, since $1 \le \bar{L}$, we obtain that similarly $P_1,\ldots, P_K$ is rearrangement equivalent to $\tilde{P}_1, \ldots, \tilde{P}_K$. Denote these rearrangements by $\{\sigma^1_n\}_{n \in \mathcal{N}}$ and $\{\sigma^2_n\}_{n \in \mathcal{N}}$ respectively.  Then defining, for each $n \in \mathcal{N}$, $\sigma_n = \sigma^1_n \circ \sigma^2_n$ defines a rearrangement under which $P_1, ,\ldots, P_K$ and $P_1', \ldots, P_K'$ are equivalent.

On the other hand, suppose for every $R_l$, the sum $\sum_{k=1}^K \mathbbm{1}_{P_k} + R_l \not \ge \mathbf{0}$. Define $\mathcal{Q}$ as the set of paths which receive negative mass under $\sum_{k=1}^K \mathbbm{1}_{P_k} + R_{\bar{L}+1}$. Then $
\sum_{k=1}^K \mathbbm{1}_{P_k} + R_{\bar{L}+1} + \sum_{P \in \mathcal{Q}} \mathbbm{1}_P
$ and $
\sum_{k=1}^K \mathbbm{1}_{P_k} + \sum_{P \in \mathcal{Q}} \mathbbm{1}_P 
$
differ by a single Ryser swap, and the former differs from $
    \sum_{k=1}^K \mathbbm{1}_{P'_k} + \sum_{P \in \mathcal{Q}} \mathbbm{1}_P
$ by $\bar{L}$ Ryser swaps. By the preceding argument, the sequence representations of these last two expressions are rearrangement equivalent.  We may regard the path(s) in $\mathcal{Q}$ as the last path(s) in each sequence; if this rearrangement does not fix the paths corresponding to these last indices, then the path(s) in $\mathcal{Q}$ appearing in the non-prime sequence correspond to some $P'_i$ or $P'_i$ and $P_j'$.  In such a case, define a new rearrangement that is the identity at every node except $s$, and which at $s$ maps (respectively) $i \leftrightarrow K+1$ or $i \leftrightarrow K+1$ and $j \leftrightarrow K+2$, and fixes every other index. Composing this new rearrangement with the former then gives a rearrangement between the augmented sequences which fixes the terms in the sequence corresponding to $\mathcal{Q}$, and hence restricts to a rearrangement between $P_1,\ldots, P_K$ and $P_1', \ldots, P_K'$ as desired.
\end{proof}

Given a directed acyclic graph, its set of paths $\mathcal{P}$, and some subset $\mathcal{P}'$, let $\mathbb{R}^\mathcal{P}_{\mathcal{P}'}$ denote the set of vectors in $\mathbb{R}^\mathcal{P}$ which take on value zero in the dimensions indexed by paths $P \in \mathcal{P}\setminus\mathcal{P}'$. By minor abuse of notation, let $\Delta(\mathcal{P}')$ denote the set of path decompositions whose support is a subset of $\mathcal{P}'\subseteq \mathcal{P}$.

\begin{theorem}\label{thm:GraphSupportRestrictions}
    Fix a graph $(\mathcal{N},\mathcal{E},s,t)$. The following are equivalent.
    \begin{enumerate}
        \item For each $\pi\in \Delta(\mathcal{P}')$, there is no path decomposition $\pi'\neq\pi$ with $\pi'\in \Delta(\mathcal{P}')$ that induces the same flow as $\pi$.
        \item For all $\pi \in \Delta(\mathcal{P'})$,
        \[
            \big(\pi + \mathcal{R}\big) \cap \Delta(\mathcal{P}') = \{\pi\}.
        \]
        \item $\mathcal{P}'$ contains no distinct, rearrangement-equivalent sequences of paths.
        \item For every finite sequence of \emph{unweighted} Ryser swaps $\{R_i\}_{i=1}^K$ (allowing for repetition):
        \[
            \sum_{i=1}^K R_i \in \mathbb{R}^\mathcal{P}_{\mathcal{P}'}  \iff \sum_{i=1}^K R_i = \mathbf{0}.
        \]
    \end{enumerate}
\end{theorem}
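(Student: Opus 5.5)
I would prove the cycle of implications (1)$\Leftrightarrow$(2), (2)$\Leftrightarrow$(4), and (3)$\Leftrightarrow$(4), using \autoref{thm:flowryserspacetheorem} and \autoref{thm:GraphSupp}. Write $\mathcal{R}_{\mathcal{P}'} := \mathcal{R} \cap \mathbb{R}^\mathcal{P}_{\mathcal{P}'}$. By \autoref{thm:flowryserspacetheorem}, two elements of $\Delta(\mathcal{P}')$ decompose the same flow exactly when their difference lies in $\mathcal{R}$, so (1)$\Leftrightarrow$(2) is immediate.

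\textbf{(2)$\Leftrightarrow$ a linear condition.} I would show that (2) is equivalent to $\mathcal{R}_{\mathcal{P}'} = \{\mathbf{0}\}$.
\begin{enumerate}
\item If some nonzero $v \in \mathcal{R}_{\mathcal{P}'}$ exists, then $\mathbf{1}^\top v = 0$, since every Ryser swap has total mass zero.
\item Let $\pi$ be the uniform distribution on $\mathcal{P}'$ (nonempty). Then $\pi + \varepsilon v \in \Delta(\mathcal{P}')$ for all small $\varepsilon > 0$, and it differs from $\pi$.
\item This contradicts (2). The converse direction is trivial.
\end{enumerate}

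\textbf{From the linear condition to (4).} I need $\mathcal{R}_{\mathcal{P}'} = \{\mathbf{0}\}$ to be equivalent to the statement that every \emph{integer} combination of Ryser swaps supported in $\mathcal{P}'$ vanishes. Since the unweighted Ryser swaps are closed under negation, nonnegative integer sums of swaps are the same as integer combinations. Each Ryser swap is an integer vector. Hence $\mathcal{R}_{\mathcal{P}'}$ is the kernel of a rational linear map: the coefficient vectors $c$ with $\sum_i c_i R_i$ vanishing off $\mathcal{P}'$, pushed forward by $c \mapsto \sum_i c_i R_i$.
\begin{enumerate}
\item If $\mathcal{R}_{\mathcal{P}'}$ contains a nonzero vector, then the rational subspace of such coefficient vectors $c$ has a rational point whose image is nonzero, because rational points are dense in a rational subspace.
\item Clearing denominators gives a nonzero integer sum of swaps lying in $\mathbb{R}^\mathcal{P}_{\mathcal{P}'}$, so (4) fails.
\item Conversely, a nonzero sum as in (4) is a nonzero element of $\mathcal{R}_{\mathcal{P}'}$.
\end{enumerate}
Together these give (2)$\Leftrightarrow$(4).

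\textbf{(3)$\Leftrightarrow$(4).} This is the step I expect to need the most care.
\begin{enumerate}
\item Suppose (4) fails, so some nonzero $v = \sum_i R_i$ is supported in $\mathcal{P}'$, with integer entries and total sum zero.
\item Split $v = v^+ - v^-$. Both parts are nonnegative integer vectors on $\mathcal{P}'$ with equal total mass $K$. List them as sequences $P_1,\dots,P_K$ and $P_1',\dots,P_K'$ from $\mathcal{P}'$; they are distinct since $v \neq \mathbf{0}$.
\item Their indicator sums differ by $v$, a finite sum of unweighted Ryser swaps, so by \autoref{thm:GraphSupp} the two sequences are rearrangement equivalent. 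This contradicts (3).
\item Conversely, distinct rearrangement-equivalent sequences in $\mathcal{P}'$ give, via \autoref{thm:GraphSupp}, a sum of unweighted swaps equal to the difference of their indicator sums. That difference is supported in $\mathcal{P}'$ and should be nonzero.
\end{enumerate}

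The subtle point is the meaning of ``distinct'' in step 4. Two sequences that are reorderings of one another have the same indicator sum, so the difference would be zero. I will read ``distinct'' as distinct multisets; this is also the reading under which (3)$\Leftrightarrow$(4) can hold at all. With that convention the difference is nonzero, the chain of equivalences closes, and the theorem follows.
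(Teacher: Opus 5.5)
Your proof is correct, and it reorganizes the paper's argument rather than reproducing it.

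\textbf{How the two routes differ.} The paper runs the cycle $\neg(1)\Rightarrow\neg(3)\Rightarrow\neg(4)\Rightarrow\neg(1)$.
\begin{itemize}
\item For $\neg(1)\Rightarrow\neg(3)$, it uses \autoref{rationalLem} to get two distinct \emph{rational} elements of $\Delta(\mathcal{P}')$ decomposing the same flow. It connects them via \autoref{MainLemmaflow}, scales to integers, and applies \autoref{thm:GraphSupp}.
\item It splits a nonzero swap-sum into positive and negative parts only in $\neg(4)\Rightarrow\neg(1)$, where the parts are renormalized into two distributions.
\end{itemize}
You instead route everything through the linear condition $\mathcal{R}\cap\mathbb{R}^{\mathcal{P}}_{\mathcal{P}'}=\{\mathbf{0}\}$.
\begin{itemize}
\item Your perturbation of the uniform distribution on $\mathcal{P}'$ is exactly the device at the end of \autoref{rationalLem}.
\item Your rationality argument is carried out in the space of swap coefficients.
\item You use the positive/negative split to build the two sequences fed into \autoref{thm:GraphSupp}, not two distributions.
\end{itemize}

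\textbf{What each route buys.} The paper's route stays at the level of probability distributions. This matches its interpretation of rearrangement equivalence as drawing sequences from the support of a rationalization. Your route is purely linear-algebraic and is more precise at the one delicate point. The swaps supplied by \autoref{MainLemmaflow} carry positive \emph{real} weights. So the paper's claim that an integer multiple of $\pi-\pi'$ is a sum of \emph{unweighted} swaps tacitly needs those weights to be rational. This is true, since a rational vector in the real span of integer vectors lies in their rational span, but the paper does not say so. Your density argument produces an integer combination of swaps directly, and closure of the swap set under negation turns it into a sum of unweighted swaps.

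\textbf{On the word ``distinct''.} Your reading of ``distinct'' as distinct multisets is correct, and it is forced. Take $\sigma_s$ to be any permutation and every other $\sigma_n$ the identity. This shows each sequence is rearrangement-equivalent to all of its reorderings. Under the literal reading, (3) would therefore fail whenever $|\mathcal{P}'|\ge 2$, even though (1) can hold for such $\mathcal{P}'$. The paper uses the multiset reading implicitly when it asserts $\sum_k\big(\mathbbm{1}_{P_k}-\mathbbm{1}_{P_k'}\big)\neq\mathbf{0}$ in its $\neg(3)\Rightarrow\neg(4)$ step.
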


\begin{proof}
    As $\Delta(\mathcal{P}')\subseteq \Delta(\mathcal{P})$, the equivalence between (1) and (2) is an immediate consequence of \autoref{thm:flowryserspacetheorem}. We now show (3) implies (1). By \autoref{rationalLem}, (1) holds if and only the restriction $\mathcal{M} = \Delta(\mathcal{P}')\cap \mathbb{Q}^\mathcal{P}$ is identifying. Suppose, for purposes of contraposition, that $\Delta(\mathcal{P}')\cap \mathbb{Q}^\mathcal{P}$ is not identifying. Then there exist distinct $\pi,\pi' \in \Delta(\mathcal{P}')\cap \mathbb{Q}^\mathcal{P}$ which decompose the same flow. By \autoref{MainLemmaflow}, $\pi$ and $\pi'$ differ by a finite sum of Ryser swaps, $\pi-\pi'=\sum_{i=1}^KR_i.$ Since both $\pi$ and $\pi'$ take only rational values, by multiplying by a sufficiently large positive integer, we obtain $\mathbb{Z}_+$-valued signed measures  $\bar{\pi}$, $\bar{\pi}'$. We may identify these signed measures with sequences of paths $P_1,\dots,P_K$ and $P_1',\dots,P_K'$, where $\bar{\pi} = \sum_{k=1}^K \mathbbm{1}_{P_k}$ and $\bar{\pi}' = \sum_{k=1}^K \mathbbm{1}_{P_k'}$. As these signed measures differ by a sum of Ryser swaps (by repeatedly summing copies of $\sum_i R_i$), by \autoref{thm:GraphSupp}, $\pi_N$ and $\pi_N'$ being rearrangement-equivalent. Thus  $\neg(1)$ implies $\neg (3)$.

    We now show that (4) implies (3). Once more, by means of contraposition, suppose $P_1,\dots,P_k$ and $P_1',\dots,P_k'$ are distinct, rearrangement-equivalent sequences of paths. Then by \autoref{thm:GraphSupp}, $\sum_{i=1}^N R_i = \sum_{k=1}^K \mathbbm{1}_{P_k} - \mathbbm{1}_{P_k'} \neq \mathbf{0}$ is a finite sum of unweighted Ryser swaps. Furthermore, since $P_1,\dots,P_k$ and $P_1',\dots,P_k'$ contain only paths in $\mathcal{P}'$, the sum $\sum_{i=1}^K R_i$ is also supported on $\mathcal{P}'$. Thus $\sum_{i=1}^k \mathbbm{1}_{P_i} - \sum_{i=1}^k \mathbbm{1}_{P_i'}=\sum_{i=1}^KR_i\neq \mathbf{0}$ and yet $\sum_{i=1}^k \mathbbm{1}_{P_i} - \sum_{i=1}^k \mathbbm{1}_{P_i'}=\sum_{i=1}^KR_i\in \mathbb{R}^\mathcal{P}_{\mathcal{P}'}$. Thus $\neg (3)$ implies $\neg(4)$.

    Finally, we show (1) implies (4). To this end, suppose (4) is false. Thus there exists some finite sequence of unweighted Ryser swaps such that $\sum_{i=1}^KR_i\neq \mathbf{0}$ with $\sum_{i=1}^KR_i\in \mathbb{R}^\mathcal{P}_{\mathcal{P}'}$. Since each Ryser swap has two entries equal to one, two entries equal to negative one, and all other entries equal to zero, $\sum_{i=1}^KR_i$ puts equal mass on its negative entries and its positive entries. We can then renormalize and treat the mass on the negative entries as a path decomposition and the mass on the positive entries as a path decomposition. By construction, these differ by a finite sum of (potentially weighted) Ryser swaps. By \autoref{MainLemmaflow}, they induce the same flow and, by construction, they have distinct supports and are thus distinct. It immediately follows that condition (1) of \autoref{thm:GraphSupportRestrictions} fails. Thus together we have shown that $\neg (1) \implies \neg (3) \implies \neg (4) \implies \neg (1)$.
\end{proof}

\begin{corollary}\label{cor:LinInd}
    For any $\mathcal{P}' \subseteq \mathcal{P}$, the flows $\{\mathbbm{1}_{\{e\in P\}}\}_{P\in \mathcal{P}'}$ are linearly independent if and only if $\mathcal{P}'$ contains no distinct, rearrangement-equivalent sequences of paths.
\end{corollary}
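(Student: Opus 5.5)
The plan is to reduce \autoref{cor:LinInd} to condition (1) of \autoref{thm:GraphSupportRestrictions}, which that theorem already shows is equivalent to condition (3), the absence of distinct rearrangement-equivalent sequences in $\mathcal{P}'$. It therefore suffices to show that the vectors $v_P := \mathbbm{1}_{\{e \in P\}} \in \mathbb{R}^\mathcal{E}$, $P \in \mathcal{P}'$, are linearly independent if and only if no two distinct path decompositions supported on $\mathcal{P}'$ induce the same flow. The key observation is that the map $\pi \mapsto \sum_P \pi(P) v_P$ is exactly the linear map sending a decomposition to the quasi-flow it decomposes.

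For ($\Longrightarrow$), suppose the $v_P$ are independent. Then $\pi \mapsto \sum_{P\in\mathcal{P}'}\pi(P)v_P$ is injective on $\mathbb{R}^\mathcal{P}_{\mathcal{P}'}$. Hence two elements of $\Delta(\mathcal{P}')$ inducing the same flow coincide, which is (1). By \autoref{thm:GraphSupportRestrictions} this gives (3).

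For ($\Longleftarrow$), argue by contraposition. Suppose $\sum_{P\in\mathcal{P}'} \lambda_P v_P = \mathbf{0}$ with $\lambda \neq \mathbf{0}$. Split $\lambda = \lambda^+ - \lambda^-$ into positive and negative parts, so that $\sum \lambda^+_P v_P = \sum \lambda^-_P v_P$.

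1. Both sides are quasi-flows on $\mathcal{E}$, since each $v_P$ conserves flow at interior nodes.
2. Every path leaves $s$ through exactly one edge, so the total out-flow at $s$ of $\sum \lambda^\pm_P v_P$ equals $\sum_P \lambda^\pm_P$. Hence $\lambda^+$ and $\lambda^-$ have equal total mass $m$.
3. We have $m > 0$. If $m = 0$, then $\lambda^+ = \lambda^- = \mathbf{0}$, contradicting $\lambda \neq \mathbf{0}$.
4. Since $\lambda^+$ and $\lambda^-$ have disjoint supports and $\lambda \neq \mathbf{0}$, the normalized measures $\lambda^+/m$ and $\lambda^-/m$ are distinct elements of $\Delta(\mathcal{P}')$.

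These are two distinct decompositions of the same flow, violating (1). By \autoref{thm:GraphSupportRestrictions}, (3) then fails as well.

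The only step needing care is the normalization in step 2: equal total mass is what makes both sides genuine flows rather than merely quasi-flows. This is handled by counting out-flow at the source. Everything else is immediate from \autoref{thm:GraphSupportRestrictions}, which we take as given.
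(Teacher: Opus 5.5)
Your proposal is correct and follows essentially the same route as the paper. Both directions reduce to the equivalence of conditions (1) and (3) in \autoref{thm:GraphSupportRestrictions}, and the dependent case is handled the same way: you split the coefficients into positive and negative parts and use the unique source edge of each path to equate their total masses. Your explicit checks that $m>0$ and that the two normalized measures are distinct (because their supports are disjoint) fill in small details the paper leaves implicit.
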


\begin{proof}
    $(\Longleftarrow)$: Suppose, by way of contraposition, that these flows are linearly dependent. Thus there exist scalars $\{w_P\}_{P\in\mathcal{P}'}$, not all zero, such that $\sum_{P\in \mathcal{P}'}w_P\mathbbm{1}_{\{e\in P\}}=0$. Define mass functions $\pi_+(P)=\max\{0,w_P\}$ and let $\pi_-(P)=\max \{0,-w_P\}$ in $\mathbb{R}^\mathcal{P}$. As every path leaves the source from a unique edge, and $\pi_+$ and $\pi_-$ put the same total mass on each edge, $\pi_+$ and $\pi_-$ have equal total mass. Further, as $\sum_{P\in \mathcal{P}}[\pi(P)_+\mathbbm{1}_{\{e\in P\}}-\pi_-(P)\mathbbm{1}_{\{e\in P\}}]=0$, i.e.\ the zero quasi-flow, we obtain that  $\pi_+$ and $\pi_-$ are path decompositions of the same quasi-flow, which is supported only on the edge set of paths in $\mathcal{P}'$. By \autoref{thm:GraphSupportRestrictions}, $\mathcal{P}'$ then contains a pair of distinct, rearrangement-equivalent sequences, as desired.

    $(\Longrightarrow):$ Conversely, suppose $\{\mathbbm{1}_{\{e\in P\}}\}_{P\in \mathcal{P}'}$ are linearly independent.  Thus these vectors span a simplex, and any path decomposition is necessarily unique. Then, by equivalence of (1) and (3) in \autoref{thm:GraphSupportRestrictions}, $\mathcal{P}'$ contains no pair of distinct, rearrangement-equivalent sequences.
\end{proof}

\begin{theorem}\label{thm:Graphextremepoints}
    Fix a graph $(\mathcal{N},\mathcal{E},s,t)$. As a slight abuse of notation, let $\Delta(\mathcal{P}')$ denote the set of path decompositions whose support is a subset of $\mathcal{P}'\subseteq \mathcal{P}$. The following are equivalent for $\pi \in \Delta(\mathcal{P}')$.
    \begin{enumerate}
        \item $\pi$ is an extreme point of $\Delta(\mathcal{P}') \cap \big(\pi + \mathcal{R}\big)$.
        \item The set $\{\mathbbm{1}_{\{e\in P\}}:  \pi(P) > 0\}_{P \in \mathcal{P}'}$ is linearly independent.
        \item $\textrm{supp}(\pi)$ contains no distinct, rearrangement-equivalent sequences of paths.
        \item For every finite sequence of \emph{unweighted} Ryser swaps $\{R_i\}_{i=1}^K$ (allowing for repetition):
        \[
            \sum_{i=1}^K R_i \in \mathbb{R}^\mathcal{P}_{\textrm{supp}(\pi)}  \iff \sum_{i=1}^K R_i = \mathbf{0}.
        \]
    \end{enumerate}
\end{theorem}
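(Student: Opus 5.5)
The plan is to reduce everything to \autoref{thm:GraphSupportRestrictions} and \autoref{cor:LinInd}, applied with $\mathcal{P}'$ replaced by $\textrm{supp}(\pi)$. The equivalences (2)$\iff$(3) and (3)$\iff$(4) then require no new argument. Condition (2) is the linear independence of the path-incidence vectors indexed by $\textrm{supp}(\pi)$. By \autoref{cor:LinInd} with $\mathcal{P}' = \textrm{supp}(\pi)$, this holds exactly when $\textrm{supp}(\pi)$ contains no distinct, rearrangement-equivalent sequences, which is (3). Likewise, (3)$\iff$(4) is the equivalence of items 3 and 4 in \autoref{thm:GraphSupportRestrictions}, again with $\mathcal{P}' = \textrm{supp}(\pi)$. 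So the substantive step is linking (1) to one of these, and I will prove (1)$\iff$(2).

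For (2)$\Rightarrow$(1), I argue by contraposition. Suppose $\pi = \tfrac12(\pi_1+\pi_2)$ with $\pi_1 \neq \pi_2$ both in $\Delta(\mathcal{P}')\cap(\pi+\mathcal{R})$. Since $\pi_1,\pi_2 \ge 0$, any path outside $\textrm{supp}(\pi)$ has $\pi_1(P)=\pi_2(P)=0$, so both measures are supported on $\textrm{supp}(\pi)$. By \autoref{thm:flowryserspacetheorem} they decompose the same flow as $\pi$. The difference $\pi_1-\pi_2 \neq 0$ is therefore a nonzero vector supported on $\textrm{supp}(\pi)$ satisfying $\sum_P (\pi_1-\pi_2)(P)\mathbbm{1}_{\{e\in P\}} = 0$ for every edge. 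This is a nontrivial linear dependence among the incidence vectors of $\textrm{supp}(\pi)$, so (2) fails.

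For (1)$\Rightarrow$(2), again by contraposition, suppose the incidence vectors of $\textrm{supp}(\pi)$ are dependent, witnessed by nonzero weights $w$ supported on $\textrm{supp}(\pi)$ with $\sum_P w_P \mathbbm{1}_{\{e\in P\}}=0$. Since $\pi$ is strictly positive on its support, for small $\varepsilon>0$ both $\pi\pm\varepsilon w$ are nonnegative and supported in $\textrm{supp}(\pi)\subseteq\mathcal{P}'$. They induce the same quasi-flow as $\pi$. Each path uses exactly one edge out of the source, so summing the dependence over source edges gives $\sum_P w_P = 0$, and hence both perturbations have total mass one. By \autoref{thm:flowryserspacetheorem}, $\pi\pm\varepsilon w \in \pi+\mathcal{R}$. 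Thus $\pi$ is the midpoint of two distinct points of $\Delta(\mathcal{P}')\cap(\pi+\mathcal{R})$ and is not extreme.

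The main obstacle is minor bookkeeping rather than any new combinatorics. First, I must verify the total-mass normalization, which is handled by the single-source-edge argument above. Second, I must check that \autoref{cor:LinInd} and \autoref{thm:GraphSupportRestrictions} may legitimately be invoked with $\mathcal{P}'=\textrm{supp}(\pi)$; this is fine because both are stated for arbitrary subsets $\mathcal{P}'\subseteq\mathcal{P}$. The combinatorial work, namely that linear independence is equivalent to the absence of rearrangement-equivalent sequences, has already been carried out in \autoref{thm:GraphSupp} and \autoref{cor:LinInd}.
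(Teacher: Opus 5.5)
Your proof is correct, and it shares the paper's skeleton. You obtain (2)$\iff$(3)$\iff$(4) exactly as the paper does, by invoking \autoref{cor:LinInd} and \autoref{thm:GraphSupportRestrictions} with $\mathcal{P}'=\textrm{supp}(\pi)$.

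The difference lies in (1)$\iff$(2). The paper regards $\Delta(\mathcal{P}')\cap(\pi+\mathcal{R})$ as a moment set and cites Proposition 2.1.a of Winkler (1988). You instead prove the equivalence directly, with the standard finite-dimensional perturbation argument.

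Your route is self-contained, and it makes explicit two points the citation leaves implicit:
\begin{itemize}
\item Within the simplex, $\pi+\mathcal{R}$ is exactly the set of decompositions of the same flow, by \autoref{thm:flowryserspacetheorem}. This identification is needed for the moment constraints to be the edge-incidence constraints. The easy direction (Ryser swaps preserve flows) suffices for (2)$\Rightarrow$(1), while (1)$\Rightarrow$(2) uses the nontrivial direction.
\item Winkler's criterion concerns the augmented vectors $(1,\mathbbm{1}_{\{e\in P\}})$. It reduces to independence of the bare incidence vectors because every path uses exactly one source edge, which is precisely your total-mass check.
\end{itemize}

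The paper's citation is shorter and places the result within the general theory of extreme points of moment sets. In this finite setting, your direct argument is equally rigorous and arguably more transparent.
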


\begin{proof}
    The set $\Delta(\mathcal{P}') \cap \big(\pi + \mathcal{R}\big)$ defines a moment set in the sense of \citet{winkler1988extreme}, i.e.\ it is a set of probability measures which satisfy some auxiliary, finite system of (linear) equality constraints. By Proposition 2.1.a of \citet{winkler1988extreme}, $\pi$ is an extreme point of $\Delta(\mathcal{P}') \cap \big(\pi + \mathcal{R}\big)$ if and only if  the set $\{\mathbbm{1}_{\{e\in P\}}:  \pi(P) > 0\}_{P \in \mathcal{P}'}$ is linearly independent. This shows the equivalence of (1) and (2). The equivalence between (2), (3), and (4) follows directly from \autoref{cor:LinInd} and \autoref{thm:GraphSupportRestrictions}.
\end{proof}

\subsubsection*{Proof of \autoref{thm:extremepoints} and \autoref{thm:braids}}

Our \autoref{thm:extremepoints} corresponds to the equivalence between ($1$) and ($3$) in \autoref{thm:Graphextremepoints} applied to the random utility graph. Similarly, \autoref{thm:braids} follows from the equivalence between conditions ($1$) and ($3$) in \autoref{thm:GraphSupportRestrictions} applied to the random utility graph.

\subsubsection{Graphical Ordered Results}

Fix a graph and let $\pi$ be a path decomposition of quasi-flow $f$. Suppose that $\trianglerighteq$ is a linear order over the edges of the fixed graph. We say that $\pi$ is \textbf{swap-progressive} (with respect to $\trianglerighteq$) if $\textrm{supp}(\pi)$ can be ordered $P_1,\dots,P_N$ such that, whenever $P_i$ and $P_j$ are compatible at some node $n$, then $i > j$ implies that $(n,m_i) \trianglerighteq (n,m_j)$ where $(n,m_i)$ and $(n,m_j)$ are the edges of $P_i$ and $P_j$, respectively, leaving node $n$.

\begin{theorem}\label{thm:OrderedFlows}
    Fix a graph, a quasi-flow function $f$, and a linear order $\trianglerighteq$ over the edges of our graph. There is some swap-progressive path decomposition of $f$ and it is unique.
\end{theorem}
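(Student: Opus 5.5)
The plan is to prove existence and uniqueness together through a single continuous ``particle'' construction. Let $F$ be the total out-flow of $f$ from $s$. Think of $[0,F)$ as a continuum of unit-density particles, each of which will trace one path. Fix an enumeration of the nodes as in \autoref{lem:orderedNodes}, and define a routing rule node by node in that order.

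\textbf{Existence.} At $s$, list the out-edges of $s$ in $\trianglerighteq$-increasing order and give them consecutive subintervals of $[0,F)$ of lengths $f(e)$. Now take a later node $n$, and let $I_n\subseteq[0,F)$ be the set of particles arriving at $n$. Its Lebesgue measure equals the in-flow of $n$, by the inductive construction and condition (i) of \eqref{flowdef}. List the out-edges of $n$ in $\trianglerighteq$-increasing order and assign them to consecutive quantile blocks of $I_n$, ordered by particle label $u$, with the block for edge $e$ having measure $f(e)$.

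The key structural claim, proved by induction along the node ordering, is that for every initial segment $\sigma$ ending at a node $n$, the set $J_\sigma$ of particles whose trajectory begins with $\sigma$ is an interval. The inductive step is short. $J_\sigma\subseteq I_n$, and the quantile blocks of $I_n$ are of the form $I_n\cap[a,b)$. Hence $J_{\sigma e}=J_\sigma\cap[a,b)$, which is again an interval.

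Consequences of the claim:
\begin{itemize}
\item Only finitely many paths are traced, and each is traced by an interval of particles.
\item Let $\pi(P)$ be the length of the interval $J_P$. Then $\pi$ decomposes $f$ edge by edge, by construction of the blocks.
\item Order $\textrm{supp}(\pi)$ as $P_1,\dots,P_N$ by the position of these disjoint intervals.
\end{itemize}
If $P_i$ and $P_j$ with $i>j$ both pass through $n$, their particles are all in $I_n$, and every particle of $P_i$ lies above every particle of $P_j$. The quantile rule is monotone in $u$, so the out-edge of $P_i$ at $n$ weakly $\trianglerighteq$-dominates that of $P_j$. Thus $\pi$ is swap-progressive.

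\textbf{Uniqueness.} Let $\pi$ be any swap-progressive decomposition, with ordering $P_1,\dots,P_N$. Lay the masses $\pi(P_i)$ consecutively in $[0,F)$ in this order, so that each path is now traced by an interval of particles. Show by induction along the node ordering that this particle routing coincides with the canonical one above.

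Suppose the routings agree on all nodes before $n$. Then the set $I_n$ of particles reaching $n$ is the same in both. The particles of $I_n$ leave $n$ along edges whose total masses are the fixed numbers $f(e)$. Swap-progressivity, applied to paths that are compatible at $n$, says the out-edge is weakly $\trianglerighteq$-monotone in $u$ across $I_n$. A monotone assignment with prescribed block masses is exactly the quantile rule, up to measure-zero boundaries. So the routings agree at $n$ as well. Therefore $\pi(P)$ equals the measure of the canonical $J_P$ for every $P$, and $\pi$ is the decomposition constructed above.

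\textbf{Expected obstacle.} The step needing most care is the interval claim and its use in uniqueness. Monotonicity is only imposed between paths compatible at $n$, while particles arriving at $n$ via different routes may interleave in $u$. The argument must use only the order restricted to $I_n$, which is what the quantile rule does. The remaining details are that ties are allowed by the weak inequality, and that zero-flow edges and measure-zero endpoints do not affect $\pi$.
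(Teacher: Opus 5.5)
Your proof is correct, and it differs from the paper's mainly in the uniqueness half.

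For existence, your quantile routing is a continuous version of the paper's greedy algorithm. The paper repeatedly walks from $s$ along the $\trianglerighteq$-highest edge with positive residual flow and strips the bottleneck mass off that path. That is exactly what the top-most particles do in your construction, so the two procedures output the same decomposition (as uniqueness forces anyway). The only cosmetic difference is that the paper lists paths from highest to lowest.

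For uniqueness, the paper argues by contradiction. It splits the difference of two swap-progressive decompositions into $\pi_+$ and $\pi_-$. It then uses the Ryser-swap characterization of \autoref{MainLemmaflow} to find paths in the two supports that share initial segments or particular edges. Finally, a case analysis on the highest-ordered paths yields a violation of the ordering.

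You argue directly instead. Laying out any swap-progressive $\pi$ as consecutive intervals forces, node by node in topological order, an out-edge assignment on $I_n$ that is $\trianglerighteq$-monotone in position and has block masses $f(e)$. Such an assignment is pinned down, up to null sets, by the quantile rule.

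Your route buys several things:
\begin{itemize}
\item It needs no Ryser-swap machinery.
\item It handles existence and uniqueness with a single object.
\item It identifies the unique decomposition explicitly: $\pi(P)$ is the length of the interval of positions routed along $P$.
\end{itemize}
The cost is some null-set bookkeeping. This can largely be avoided by always using half-open blocks $I_n\cap[a,b)$, so that every set in sight is a finite union of half-open intervals.

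The paper's version, in turn, is a finite combinatorial procedure: each round zeroes at least one edge, so it stops after at most $|\mathcal{E}|$ rounds. It also keeps the uniqueness argument inside the paper's swap-based framework.
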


\begin{proof}
    We first show uniqueness. Suppose, toward a contradiction, that the quasi-flow $f$ admits two distinct swap-progressive path decompositions,  $\pi_1$ and $\pi_2$. Consider the zero-sum signed measure $\pi_1 - \pi_2$. Since $\pi_1$ and $\pi_2$ decompose the same quasi-flow, $\pi_1-\pi_2 = \sum_{i=1}^N c_i R_i$ where $c_i R_i \in \mathcal{R}$ for all $i = 1,\ldots, N$. Define the signed measures $\pi_+, \pi_- \in \mathbb{R}^\mathcal{P}$ via  $\pi_+(P)=\max\{0,\pi_1(P)-\pi_2(P)\}$ and $\pi_-(P)=\max\{0,\pi_2(P)-\pi_1(P)\}$. As $\pi_1,\pi_2 \ge \mathbf{0}$, $\textrm{supp}(\pi_+) \subseteq \textrm{supp}(\pi_1)$ and similarly  $\textrm{supp}(\pi_-) \subseteq \textrm{supp}(\pi_2)$.  By restriction of the ordering of the supports of $\pi_1$ and $\pi_2$, the supports of $\pi_+$ and $\pi_-$ still satisfy the definition of swap-progressivity and moreover:
    \[
        \sum_{P \in \textrm{supp}(\pi_+)} \pi_+(P) \mathbbm{1}_{\{e \in P\}} = \sum_{P' \in \textrm{supp}(\pi_-)} \pi_-(P') \mathbbm{1}_{\{e \in P'\}}
    \]
    for every $e \in \mathcal{E}$ and thus are path decompositions of the same quasi-flow.

    Let $P_+$ be the highest ordered path in the support of $\pi_+$. As $\pi_-$ equals $\pi_+$ modulo a finite sequence of weighted Ryser swaps, there is some path $P_-$ in the support of $\pi_-$ and some number $k>0$ such that $s_k^\uparrow(P_-)= s_k^\uparrow(P_+)$. Let $P_-$ be the highest ordered path in the support of $\pi_-$ such that $s_k^\uparrow(P_-)= s_k^\uparrow(P_+)$ for some $k>0$. Further, $s^k_\downarrow(P_-)\neq s^k_\downarrow(P_+)$ for the corresponding $k$ by disjointedness of the supports of $\pi_+$ and $\pi_-$. Let $e_+$ and $e_-$ denote the first edges at which $P_+$ and $P_-$ disagree. Since $\pi_-$ can be reached from $\pi_+$ via a sequence of weighted Ryser swaps, we get that there is some path, $P_+'$, in the support of $\pi_+$ such that $e_- \in P_+'$. By $P_+$ being the highest ordered path in the support of $\pi_+$, we get that $e_+ \trianglerighteq e_-$ in our exogenous ordering. Once again, by $\pi_+$ being $\pi_-$ plus a sequence of Ryser swaps, we get that there is some path in the support of $\pi_-$, $P_-'$, which has $e_+ \in P_-'$. There are two cases.
    \begin{enumerate}
        \item $s_{k'}^\uparrow(P'_-)= s_{k'}^\uparrow(P_+)$ for some $k'>0$. \\
        In this case, we have $P_- > P_-'$ by $P_-$ being the highest ordered path for which $s_k^\uparrow(P_-)= s_k^\uparrow(P_+)$ for some $k>0$. However, we have that $e_+ \trianglerighteq e_-$, which gives us $P_-' > P_-$ as both $e_+$ and $e_-$ originate from the same node. This is a contradiction to swap-progressiveness.
        \item $s_{k}^\uparrow(P'_-)\neq s_{k}^\uparrow(P_+)$ for all $k>0$. \\
        Let $e_+'$ and $e_-'$ denote the edges where $P_-'$ and $P_+$ first disagree. We now have $e_+' \in P_+ \cap P_-$ (by $s_{k}^\uparrow(P_-)= s_{k}^\uparrow(P_+)$ for some $k>0$ and $e_+'$ being the first edge of $P_+$) and $e_-' \in P_-'$. Once again, applying the Ryser swap theorem, we get that there is some path in the support of $\pi_+$, $P_+'$, such that $e_-' \in P_+'$. This, plus $P_+$ being the highest ordered path, gives us that $e_+' \trianglerighteq e_-'$. However, $e_+ \trianglerighteq e_-$ means that $P_-' > P_-$ and $e_+' \trianglerighteq e_-'$ means that $P_- > P_-'$. This is a contradiction to swap-progressiveness.
    \end{enumerate}
    By contradiction in our two cases, we get that there cannot be two different swap-progressive path decompositions for the same flow function, and so we are done.

    We now show that every quasi-flow function can be induced by a swap-progressive path decomposition.  We present an algorithm which takes as input a quasi-flow and returns the unique swap-progressive path decomposition.
    \begin{enumerate}
        \item Initialize at $k=0$. Set $f_0(e)=f(e)$ and $\pi_0(P)=0$ for all $P \in \mathcal{P}$.
    \item Now, let $k = k+1$.
        \begin{enumerate}
             \item Set $l=1$, $n_1=s$, and $\tilde{P}_k=\emptyset$. Let $e$ be the highest ranked edge according to $\trianglerighteq$ at node $n_l$ such that $f_k(e)>0$. Set $\tilde{P}_k=\tilde{P}_k\cup\{e\}$ and $m=n_{l+1}$ where $m$ is the head of $e$.
        \item Set $l=l+1$. By property (i) of \eqref{flowdef}, at every interior node, we have that there is some edge $e$ whose tail is $n_l$ such that $f_k(e)>0$.\footnote{We note that the algorithm starts with property (i) of \eqref{flowdef} holding. At every iteration of the algorithm, we subtract out a vector proportional to a flow along a single path, thus preserving property (i).} Let $e$ be the highest ranked edge according to $\trianglerighteq$ whose tail is $n_l$ satisfying $f_k(e)>0$. Set $\tilde{P}_k=\tilde{P}_k\cup\{e\}$ and $m=n_{l+1}$ where $m$ is the head of $e$.
        \item If $n_{l+1} = t$, proceed to the Step 3. Otherwise, return to 2.(b).     
        \end{enumerate}
       
        \item We have now defined a path $P_k = \tilde{P}_k$. Set $\pi_k(P_k)=\min_{e \in P_k} f_k(e)$.
        \item For $e \in P_k$, define $f_{k+1}(e)=f_k(e)-\pi_k(P_k)$. For all other $e$, define $f_{k+1}(e)=f_k(e)$.
        \item If $f_{k+1}(e)=0$ for all $e \in \mathcal{E}$, terminate, else return to step 2.
    \end{enumerate}

    When this algorithm terminates, it produces a swap-progressive path decomposition. To see that it yields a path decomposition, note that at each iteration (i.e. for each $k$), we subtract $\min_{e \in P_k} f_k(e)$ from each $f_k(e)$ for $e \in P_k$. This means that firstly, $f_k \ge \mathbf{0}$ at each step of the algorithm. Second, at each iteration, we are zeroing the mass on some edge which previously had strictly positive weight. As the underlying graph is finite, we obtain that after finitely many steps, every edge will satisfy $f_k(e)=0$.

    We now argue this path decomposition is swap-progressive. Note that in each iteration of the algorithm, the path we construct always chooses the highest ranked available (i.e. with strictly positive weight) edge at each node it visits. Further, as discussed in the previous paragraph, the set of edges with strictly positive weights properly decreases with each iteration. It follows that, for $k < l$, whenever $P_k$ and $P_l$ visit the same node in the graph, $P_k$ leaves on a weakly higher edge than $P_l$ according to $\trianglerighteq$. This corresponds precisely to swap-progressivity.
\end{proof}

\subsubsection*{Proof of \autoref{thm:OrderedRUM}}

\begin{proof}
    In the random utility graph, each edge is of the form $(A, A\setminus\{x\})$. Assign edges of this form to the alternative $x$. Given the linear order over alternatives $\trianglerighteq_X$, choose any linear order $\trianglerighteq_\mathcal{E}$ over edges in the random utility graph such that $x \trianglerighteq_X y$ implies that $(A,A\setminus \{x\})\trianglerighteq_{\mathcal{E}}(B,B\setminus\{y\})$. Part (i) from \autoref{thm:OrderedRUM} now follows immediately from applying \autoref{thm:OrderedFlows} to the random utility graph using the linear order $\trianglerighteq_\mathcal{E}$.\footnote{Note that we could have used any order over edges for this construction. This translates to part (i) from \autoref{thm:OrderedRUM} holding even in the case of a menu dependent order over alternatives.}

    We now prove claim (ii). First, we show that a single-crossing representation is also a swap-progressive representation. Suppose that $\trianglerighteq_X$ is the linear order over alternatives for both the single-crossing and swap-progressive representations. Suppose towards a contradiction that $\succ_i$ and $\succ_j$, both in the support of the single-crossing representation, are non-trivially compatible, $i > j$, and that $x_j^{k+1} \trianglerighteq x_i^{k+1}$. Note that $x_j^{k+1} \succ_j x_i
    ^{k+1}$. This, the single-crossing property, $x_j^{k+1} \trianglerighteq x_i^{k+1}$, and $i > j$ tells us that $x_j^{k+1} \succ_i x_i^{k+1}$. However, we have that $x_i^{k+1} \succ_i x_j^{k+1}$, which is a contradiction. Thus any single-crossing representation is also swap-progressive.

    Now, it follows from \citet{filiz2023progressive} that a progressive representation is a random utility representation if and only if it is single-crossing. Thus, by the preceding paragraph, if a swap-progressive representation is single-crossing, it is also progressive. Since every swap-progressive representation is a random utility representation, it follows that every swap-progressive representation which is also progressive must be a random utility representation and thus single-crossing.
\end{proof}

\subsection{Parametric Invertibility Results}

\subsubsection{Topological Preliminaries}

Let $\mathcal{X}$, $\mathcal{Y}$ be topological spaces.  A continuous map $p: \mathcal{X} \to \mathcal{Y}$ is a {\bf covering map} if, for every $y \in \mathcal{Y}$ there exists a connected open neighborhood $V$ such that $p^{-1}(V)$ is a disjoint union of open sets $\{U_i\}_{i \in \mathcal{I}}$ in $\mathcal{X}$ each of which $p$ maps homeomorphically onto $V$. When $\mathcal{Y}$ is connected, the cardinality of the set $\mathcal{I}$ does not depend on the choice of $y\in \mathcal{Y}$, and $\vert \mathcal{I}\vert$ is referred to as the number of {\bf sheets} of the covering $p$. A homeomorphism $D: \mathcal{X} \to \mathcal{X}$ is called a {\bf deck transformation} if $(p \circ D)(x) = p(x)$ for all $x \in \mathcal{X}$. The set of all deck transformations for a covering map form a group under composition.\medskip

Fix $x_0 \in \mathcal{X}$ and $y_0 \in \mathcal{Y}$. The structure of a covering map is intimately tied to the fundamental groups $\pi_1(\mathcal{X}, x_0)$ and $\pi_1(\mathcal{Y}, y_0)$.\footnote{For the basic definitions and theory of fundamental groups, see, e.g. \cite{bredon2013topology} Chapter III.} We gather several relevant results below.

\begin{lemma}[\citealt{bredon2013topology}, Lemma III.4.4]\label{liftinglemma}
    Suppose $D$ is a deck transformation for $p$ such that $D(x) = x$ for some $x \in \mathcal{X}$. Then $D$ is the identity.\footnote{The statement here is consequence of Lemma III.4.4 in \cite{bredon2013topology} where, in his notation, $W = X$ and $f = p$. Alternatively, our statement here appears as an unlabeled remark at the top of p.\ 148.}
\end{lemma}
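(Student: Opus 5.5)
The statement is the standard uniqueness-of-lifts fact for coverings. I would prove it by showing that the fixed-point set of $D$ is both open and closed in $\mathcal{X}$. Since $D(x)=x$, this set is non-empty, so connectedness of $\mathcal{X}$ then forces $D=\mathrm{id}$. Connectedness is implicit in the cited setting; in the application, $\mathcal{X}=\Theta$ is contractible and hence path-connected. Put $F=\{z\in\mathcal{X}: D(z)=z\}$.

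\textbf{Step 1: $F$ is open.} Fix $z\in F$ and let $y=p(z)$. Choose an evenly covered connected neighborhood $V$ of $y$, so that $p^{-1}(V)=\bigsqcup_i U_i$, and let $U_{i_0}$ be the sheet containing $z$. By continuity, $W=U_{i_0}\cap D^{-1}(U_{i_0})$ is an open neighborhood of $z$. For $w\in W$, both $w$ and $D(w)$ lie in $U_{i_0}$, and $p(D(w))=p(w)$. Since $p$ is injective on $U_{i_0}$, this gives $D(w)=w$. Hence $W\subseteq F$.

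\textbf{Step 2: the complement of $F$ is open.} Take $z$ with $D(z)\neq z$. Both points lie over the same $y=p(z)=p(D(z))$. Injectivity of $p$ on each sheet forces them into distinct sheets, say $z\in U_i$ and $D(z)\in U_j$ with $i\neq j$. Then $W=U_i\cap D^{-1}(U_j)$ is an open neighborhood of $z$. For every $w\in W$, the points $w$ and $D(w)$ lie in the disjoint sets $U_i$ and $U_j$, so $D(w)\neq w$. This also uses only continuity of $D$.

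The only step needing care is Step 2, where the disjointness of the sheets does the work; Step 1 uses injectivity of $p$ on a single sheet. Neither step uses the full homeomorphism property of $D$. The whole argument relies only on $D$ being a continuous lift of $p$ through $p$ itself, and the conclusion follows once connectedness of $\mathcal{X}$ is invoked.
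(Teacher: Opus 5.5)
Your argument is correct. It is the standard open--closed proof of uniqueness of lifts, applied to the two lifts $D$ and $\mathrm{id}_{\mathcal{X}}$ of $p$, which is exactly what the paper does by citing Bredon's Lemma III.4.4 with $W=\mathcal{X}$ and $f=p$. You are also right that connectedness of $\mathcal{X}$ is genuinely needed, since the statement fails for a trivial three-sheeted cover where $D$ fixes one sheet and swaps the other two, and that it holds in the application because $\Theta$ is contractible.
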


\begin{lemma}[\citealt{bredon2013topology}, Corr.\ III.5.3]\label{sheetlemma}
     If $\pi_1(\mathcal{X}, x_0)$ is trivial, then the cardinality of $\pi_1(\mathcal{Y}, y_0)$ equals the number of sheets of $p$.
\end{lemma}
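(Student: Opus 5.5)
The plan is to identify both quantities with the cardinality of a single fiber $p^{-1}(y_0)$ and to build an explicit bijection between $\pi_1(\mathcal{Y}, y_0)$ and that fiber, namely the \emph{monodromy} map. Throughout I take the standing hypotheses under which the cited corollary is stated: $\mathcal{X}$ and $\mathcal{Y}$ are path-connected and locally path-connected, and $x_0 \in p^{-1}(y_0)$. By the definition of a covering map, the number of sheets is $\vert p^{-1}(y)\vert$ for any $y$. This is constant because $\mathcal{Y}$ is connected, as recalled in the definition above. So it suffices to show $\vert \pi_1(\mathcal{Y}, y_0)\vert = \vert p^{-1}(y_0)\vert$.

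First I would record the two lifting properties of covering maps, both proved from the evenly covered neighborhoods $V$ and a Lebesgue number argument on $[0,1]$ (resp.\ $[0,1]^2$).
\begin{itemize}
\item[(a)] Unique path lifting: every path $\gamma$ in $\mathcal{Y}$ starting at $y_0$ has a unique lift $\tilde\gamma$ starting at $x_0$.
\item[(b)] Homotopy lifting: a homotopy of paths rel endpoints lifts to a homotopy rel endpoints. In particular, homotopic loops at $y_0$ have lifts from $x_0$ with the same endpoint.
\end{itemize}
Uniqueness in (a) follows from the fact that the set where two lifts agree is open and closed in $[0,1]$.

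Next I would define $\Psi: \pi_1(\mathcal{Y}, y_0) \to p^{-1}(y_0)$ by $\Psi([\gamma]) = \tilde\gamma(1)$. By (b), $\Psi$ is well defined.

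Surjectivity uses the path-connectedness of $\mathcal{X}$. Given $x \in p^{-1}(y_0)$, choose a path $\alpha$ from $x_0$ to $x$. Then $p\circ\alpha$ is a loop at $y_0$, and by uniqueness in (a) its lift from $x_0$ is $\alpha$. Hence $\Psi([p\circ\alpha]) = x$.

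Injectivity is where simple connectivity of $\mathcal{X}$ enters. Suppose $\tilde\gamma(1) = \tilde\delta(1)$. Then $\tilde\gamma \cdot \overline{\tilde\delta}$ is a loop at $x_0$. Since $\pi_1(\mathcal{X},x_0)$ is trivial, it is null-homotopic rel endpoints, equivalently $\tilde\gamma \simeq \tilde\delta$ rel endpoints. Composing this homotopy with $p$ gives $\gamma \simeq \delta$ rel endpoints, so $[\gamma] = [\delta]$.

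Thus $\Psi$ is a bijection, and $\vert\pi_1(\mathcal{Y},y_0)\vert = \vert p^{-1}(y_0)\vert$, which is the number of sheets.

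The main obstacle is step (b), the homotopy lifting property. One subdivides $[0,1]^2$ into small squares, each mapped by the homotopy into an evenly covered open set. One then lifts square by square in lexicographic order, using the local homeomorphisms $U_i \to V$. Consistency on shared edges comes from the uniqueness of path lifts. Finally, the sides of the square $\{0\}\times[0,1]$ and $\{1\}\times[0,1]$ are mapped to the constant $y_0$, so their lifts are constant paths in the discrete fiber. This last point is what keeps the endpoints fixed. If full detail is unnecessary here, I would simply cite the corresponding lemmas in \cite{bredon2013topology}, Chapter III, as the paper does for \autoref{liftinglemma}.
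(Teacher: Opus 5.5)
Your argument is correct. The paper gives no proof of this lemma; it imports it from Bredon. So your write-up replaces a citation rather than offering a variant of an in-paper proof.

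Your route is the standard lifting-correspondence (monodromy) argument. The map $[\gamma]\mapsto\tilde\gamma(1)$ is the special case, with trivial subgroup, of the textbook bijection between $p^{-1}(y_0)$ and the cosets of $p_\#\pi_1(\mathcal{X},x_0)$ in $\pi_1(\mathcal{Y},y_0)$. That general bijection is how the cited corollary is usually derived.

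What your version buys is explicit hypotheses.
\begin{itemize}
\item The path-connectedness of $\mathcal{X}$ that you impose is genuinely needed. As literally stated in the paper, with only $\pi_1(\mathcal{X},x_0)$ assumed trivial, the lemma fails for the trivial two-sheeted cover $\mathcal{Y}\times\{1,2\}\to\mathcal{Y}$ of a simply connected $\mathcal{Y}$: the covering space has trivial $\pi_1$ at any basepoint, yet there are two sheets.
\item This gap is harmless in the paper's application. There $\Theta$ is contractible, hence path-connected, and $\bar\varphi(\Theta)$ is a connected open subset of $\mathbb{R}^d$.
\item The local path-connectedness you assume is never used and can be dropped.
\end{itemize}

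Your identification of the number of sheets with $\vert p^{-1}(y)\vert$ is also right. Each $U_i$ over an evenly covered neighborhood of $y$ contains exactly one point of the fiber.
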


\begin{lemma}[\citealt{bredon2013topology}, Corr.\ III.6.10]\label{decklemma}
    If $\pi_1(\mathcal{X}, x_0)$ is trivial, the group of deck transformations of $p$ is isomorphic to $\pi_1(\mathcal{Y}, y_0)$.
\end{lemma}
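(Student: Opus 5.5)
The plan is to build an explicit isomorphism $\Psi: \pi_1(\mathcal{Y}, y_0) \to \mathrm{Deck}(p)$ using path lifting. Throughout I take $p(x_0) = y_0$ and assume the standing hypotheses under which Bredon works: $\mathcal{X}$ and $\mathcal{Y}$ are connected and locally path-connected, hence path-connected. These hold in our application, where the spaces are open subsets of $\mathbb{R}^d$.

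\textbf{Constructing $\Psi$.} Take a loop $\gamma$ at $y_0$. By the unique path-lifting property of covering maps, $\gamma$ has a unique lift $\tilde\gamma$ starting at $x_0$. By homotopy lifting, its endpoint $\tilde\gamma(1) \in p^{-1}(y_0)$ depends only on the class $[\gamma]$.

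Next I apply the lifting criterion to the map $p: \mathcal{X} \to \mathcal{Y}$ itself. Since $\pi_1(\mathcal{X}, x_0)$ is trivial, the condition $p_*\pi_1(\mathcal{X},x_0) \subseteq p_*\pi_1(\mathcal{X},\tilde\gamma(1))$ holds vacuously. So there is a unique continuous $D_{[\gamma]}: \mathcal{X} \to \mathcal{X}$ with $p \circ D_{[\gamma]} = p$ and $D_{[\gamma]}(x_0) = \tilde\gamma(1)$. Applying the same construction to the reversed loop gives a map $D'$, and $D' \circ D_{[\gamma]}$ is a lift of $p$ fixing $x_0$. By uniqueness of lifts (equivalently \autoref{liftinglemma}) this composite is the identity, and similarly in the other order. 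Hence $D_{[\gamma]}$ is a homeomorphism and a deck transformation. Set $\Psi([\gamma]) = D_{[\gamma]}$.

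\textbf{$\Psi$ is a homomorphism.} For loops $\gamma,\delta$, the lift of $\gamma\cdot\delta$ from $x_0$ is $\tilde\gamma$ followed by the lift of $\delta$ from $\tilde\gamma(1)$. That second piece is $D_{[\gamma]}\circ\tilde\delta$, because $D_{[\gamma]}\circ\tilde\delta$ is a lift of $\delta$ starting at $D_{[\gamma]}(x_0)$. So $D_{[\gamma\delta]}(x_0) = D_{[\gamma]}(D_{[\delta]}(x_0))$. Since both sides are deck transformations agreeing at $x_0$, \autoref{liftinglemma} gives $D_{[\gamma\delta]} = D_{[\gamma]} \circ D_{[\delta]}$. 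Depending on the concatenation convention this may come out as an anti-homomorphism; I would fix the convention to match.

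\textbf{$\Psi$ is injective.} Suppose $D_{[\gamma]} = \mathrm{id}$. Then $\tilde\gamma$ is a loop at $x_0$, so it is null-homotopic because $\mathcal{X}$ is simply connected. Projecting that homotopy by $p$ shows $\gamma$ is null-homotopic.

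\textbf{$\Psi$ is surjective.} Given a deck transformation $D$, choose a path $\alpha$ in $\mathcal{X}$ from $x_0$ to $D(x_0)$, using path-connectedness. Then $\gamma = p\circ\alpha$ is a loop at $y_0$ whose lift from $x_0$ is $\alpha$. Hence $D_{[\gamma]}(x_0) = D(x_0)$, and \autoref{liftinglemma} forces $D_{[\gamma]} = D$.

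The main obstacle is the well-definedness step: showing that $D_{[\gamma]}$ exists and is a homeomorphism. This is exactly where the lifting criterion is needed, and so where local path-connectedness of $\mathcal{X}$ and simple connectivity are used. The remaining steps are bookkeeping with unique path lifting and \autoref{liftinglemma}.
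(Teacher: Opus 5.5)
Your argument is correct, and it follows a different route from the paper only in that the paper gives no proof at all. The paper simply cites Bredon, where the result is usually obtained as the case $H=p_*\pi_1(\mathcal X,x_0)=1$ of the general isomorphism $\mathrm{Deck}(p)\cong N(H)/H$. Your direct construction via unique path lifting and the lifting criterion is exactly that special case unwound.

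What your version buys is that it makes explicit the standing hypotheses which the paper's statement silently inherits from Bredon:
\begin{itemize}
\item $p(x_0)=y_0$;
\item $\mathcal X$ and $\mathcal Y$ connected;
\item $\mathcal X$ and $\mathcal Y$ locally path-connected.
\end{itemize}
These are genuinely needed. For instance, the projection $\mathbb{R}\times\{0,1\}\to\mathbb{R}$ has trivial $\pi_1(\mathcal X,x_0)$ but deck group $\mathbb{Z}/2$. You also check that they hold in the application, where $\Theta$ and $\bar\varphi(\Theta)$ are open connected subsets of $\mathbb{R}^d$.

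Three small repairs are worth making.

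\emph{The step $D'\circ D_{[\gamma]}(x_0)=x_0$.} This needs a line of justification. The maps $D'\circ\tilde\gamma$ and the reverse of $\widetilde{\bar\gamma}$ are both lifts of $\gamma$ starting at $D'(x_0)$, so they coincide and end at $x_0$. A simpler alternative is to apply the lifting criterion once more to get $E$ with $p\circ E=p$ and $E(\tilde\gamma(1))=x_0$; then $E$ is inverse to $D_{[\gamma]}$ by uniqueness of lifts.

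\emph{The citation at that step.} There you should invoke the general uniqueness of lifts from a connected space, not \autoref{liftinglemma}. That lemma, as stated in the paper, applies only to maps already known to be deck transformations, and $D'\circ D_{[\gamma]}$ is not yet known to be one.

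\emph{The homomorphism hedge.} With concatenation ``$\gamma$ first, then $\delta$,'' your own computation gives $\Psi([\gamma][\delta])=\Psi([\gamma])\circ\Psi([\delta])$. This is a genuine homomorphism, so the hedge about an anti-homomorphism is unnecessary.
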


We say that a continuous map between topological spaces is {\bf proper} if the preimage of any compact set is compact. The following shows condition (ii) of \autoref{thm:paramthm} gives a simple characterization of properness.

\begin{lemma}[\citealt{lee2012smooth} Prop.\ 2.17]\label{properlemma}
    The map $\bar{\varphi}$ satisfies condition (ii) of \autoref{thm:paramthm} if and only if it is proper.
\end{lemma}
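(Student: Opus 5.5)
The plan is to prove both directions directly from the definitions. Throughout, $\bar{\varphi}$ is viewed as a map $\Theta \to \bar{\varphi}(\Theta)$, with $\bar{\varphi}(\Theta)$ carrying the subspace topology from $\mathbb{R}^d$. The one structural fact I need is that $\Theta$, being an open subset of $\mathbb{R}^d$, admits a compact exhaustion. That is, there are compact sets $K_1 \subseteq K_2 \subseteq \cdots$ with $K_n \subseteq \textrm{int}\, K_{n+1}$ and $\bigcup_n K_n = \Theta$. For instance, take
\[
K_n = \{\theta \in \Theta : \Vert \theta \Vert \le n,\ \textrm{dist}(\theta, \mathbb{R}^d \setminus \Theta) \ge 1/n\}.
\]
Since the interiors $\textrm{int}\,K_{n+1}$ form an increasing open cover of $\Theta$, every compact $K \subseteq \Theta$ lies in some single $K_m$.

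\emph{Properness implies (ii).} Suppose $\bar{\varphi}$ is proper, and let $(\theta_n)$ be a sequence such that every compact subset of $\Theta$ contains only finitely many terms. Suppose, for contradiction, that some compact $K' \subseteq \bar{\varphi}(\Theta)$ contains $\bar{\varphi}(\theta_n)$ for infinitely many $n$. Then all those $\theta_n$ lie in $\bar{\varphi}^{-1}(K')$. This set is compact by properness, and it contains infinitely many terms of the sequence. That contradicts the escaping hypothesis on $(\theta_n)$.

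\emph{(ii) implies properness.} Let $K' \subseteq \bar{\varphi}(\Theta)$ be compact and set $C = \bar{\varphi}^{-1}(K')$. By continuity, $C$ is closed in $\Theta$. Suppose $C$ is not compact. Then $C \not\subseteq K_n$ for every $n$: otherwise $C$ would be a closed subset of the compact set $K_n$, and hence compact. So for each $n$ we can choose $\theta_n \in C \setminus K_n$. This sequence escapes every compact set. Indeed, any compact $K \subseteq \Theta$ lies in some $K_m$, and since the $K_n$ are nested, $\theta_n \notin K_m$ for all $n \ge m$, so $K$ contains at most finitely many terms. Yet every $\bar{\varphi}(\theta_n)$ lies in the compact set $K'$, which violates (ii). Hence $C$ is compact.

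The only step needing care is the second direction. It relies on the existence of the exhaustion and on the fact that each compact subset of $\Theta$ sits inside a single $K_m$. Both are standard for open subsets of Euclidean space, so I expect no real obstacle. This matches the cited result in \cite{lee2012smooth} for manifolds.
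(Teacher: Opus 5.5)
Your proof is correct. The paper gives no argument of its own; it simply cites Lee's proposition on proper maps between manifolds. What you have written is the standard self-contained proof of that fact, specialized to this setting. The forward direction pulls the infinitely many offending terms back into the compact preimage $\bar{\varphi}^{-1}(K')$. The converse uses a compact exhaustion of the open set $\Theta \subseteq \mathbb{R}^d$ to extract an escaping sequence from a non-compact preimage.

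Your version makes explicit that only two facts are needed. First, $\Theta$ is open in $\mathbb{R}^d$, so it admits a nested compact exhaustion and every compact subset sits inside one member. Second, $\bar{\varphi}(\Theta)$ is Hausdorff, so a compact $K'$ is closed and hence $\bar{\varphi}^{-1}(K')$ is closed in $\Theta$. You use this second fact silently when you write ``by continuity''; it deserves one clause.
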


\noindent The final ingredient we require is a fixed-point theorem for periodic functions with prime period.\footnote{While \cite{smith1934theorem} does not explicitly state the requirement that $n$ be prime, this requirement is implicit in his proof. See also \cite{eilenberg1940theorem}, Footnote 2.}

\begin{theorem*}[\citealt{smith1934theorem}]\hypertarget{smiththeorem}{}
    Suppose $\mathcal{X} \subseteq \mathbb{R}^n$ is contractible, and $h: \mathcal{X} \to \mathcal{X}$ a homeomorphism such that the $n$-fold composition $(h \circ \cdots \circ h) = \textrm{id}$ for prime $n \in \mathbb{N}$. Then $h$ has a fixed point.
\end{theorem*}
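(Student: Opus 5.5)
Although the paper cites this result from \cite{smith1934theorem}, the plan is to prove it by contradiction using group cohomology. Write $p$ for the prime period, to separate it from the ambient dimension, and regard $\mathcal{X}\subseteq\mathbb{R}^{d}$. Let $G=\mathbb{Z}/p$ act on $\mathcal{X}$ through the generator $h$. Suppose $h$ has no fixed point. Since $p$ is prime, every non-identity element $h^{j}$ with $0<j<p$ generates $G$, so it also has no fixed point and the action is \emph{free}. A free action of a finite group on a Hausdorff space is properly discontinuous. Hence the quotient map $q:\mathcal{X}\to\mathcal{X}/G$ is a $p$-sheeted covering map whose deck group is $G$. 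Because $\mathcal{X}$ is contractible, $\pi_1(\mathcal{X})$ is trivial, and \autoref{sheetlemma} and \autoref{decklemma} give $\pi_1(\mathcal{X}/G)\cong\mathbb{Z}/p$. The proof then compares two computations of the mod-$p$ cohomology of $\mathcal{X}/G$.

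\emph{Upper bound.} First I would check that $\mathcal{X}/G$ is metrizable, for instance via the Hausdorff metric on orbits. Next I would check that it is locally homeomorphic to $\mathcal{X}$, which has covering dimension at most $d$ as a subspace of $\mathbb{R}^d$. Local dimension theory then gives $\dim(\mathcal{X}/G)\le d$. For paracompact Hausdorff spaces, sheaf (equivalently Čech) cohomology vanishes above the covering dimension, so
\[
H^k(\mathcal{X}/G;\mathbb{F}_p)=0 \quad \text{for all } k>d.
\]

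\emph{Lower bound.} I would apply the Cartan--Leray spectral sequence of the regular covering $q$, in sheaf cohomology:
\[
E_2^{a,b}=H^a\big(G;H^b(\mathcal{X};\mathbb{F}_p)\big)\Rightarrow H^{a+b}(\mathcal{X}/G;\mathbb{F}_p).
\]
Sheaf cohomology is homotopy invariant on paracompact Hausdorff spaces, and $\mathcal{X}$ is contractible, so $H^b(\mathcal{X};\mathbb{F}_p)=0$ for $b>0$ and equals $\mathbb{F}_p$ for $b=0$. The spectral sequence therefore collapses to
\[
H^k(\mathcal{X}/G;\mathbb{F}_p)\cong H^k(\mathbb{Z}/p;\mathbb{F}_p)\cong\mathbb{F}_p \quad \text{for every } k\ge 0.
\]
Taking $k=d+1$ contradicts the upper bound, so $h$ must have a fixed point.

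I expect the main obstacle to be the point-set care these two steps require. The set $\mathcal{X}$ is an arbitrary contractible subset of $\mathbb{R}^d$; it need not be locally contractible, so the singular and sheaf cohomologies of $\mathcal{X}/G$ could differ. Both bounds must therefore be stated in the same theory, namely sheaf cohomology, which I would choose because it satisfies the dimension bound, homotopy invariance, and the Cartan--Leray sequence all on paracompact spaces. The remaining checks are that the quotient is paracompact and that its dimension is indeed at most $d$. If that route becomes too delicate, the fallback is classical Smith theory. That approach uses the chain operators $\sigma=1-h$ and $\tau=1+h+\cdots+h^{p-1}$ over $\mathbb{F}_p$ and the associated Smith exact sequences, and it applies to $\mathbb{F}_p$-acyclic spaces of finite cohomological dimension. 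A descending induction from degree $d$ then shows that the fixed-point set $\mathcal{X}^G$ is $\mathbb{F}_p$-acyclic, and in particular nonempty.
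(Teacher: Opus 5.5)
Your argument is correct, but it does not follow the paper's route. The paper gives no proof here; it imports the result from \cite{smith1934theorem}, i.e.\ classical Smith theory, which is the approach you list as a fallback. Your main argument is instead the standard incompatibility of free $\mathbb{Z}/p$-actions with finite cohomological dimension. Primality makes a fixed-point-free action free, so $\mathcal{X}\to\mathcal{X}/G$ is a covering. Contractibility and Cartan--Leray then give $H^k(\mathcal{X}/G;\mathbb{F}_p)\cong H^k(\mathbb{Z}/p;\mathbb{F}_p)\neq 0$ for all $k$, while $\dim(\mathcal{X}/G)\le d$ forces vanishing in degree $d+1$. Working throughout in sheaf cohomology is the right choice for a contractible $\mathcal{X}$ that need not be locally contractible. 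The inputs you need are all standard: homotopy invariance on paracompact spaces, vanishing above covering dimension, the sheaf-theoretic Cartan--Leray sequence for a free finite action, and locality of covering dimension for metrizable spaces. You also rightly separate the prime period $p$ from the ambient dimension $d$. The displayed statement conflates them, but the proof of \autoref{thm:paramthm} needs the decoupled version, since there the prime divides the order of a deck transformation and is unrelated to $d$.

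As for what each route buys: yours proves exactly what the paper uses, namely existence of a fixed point. Smith theory yields the stronger conclusion that the fixed-point set is $\mathbb{F}_p$-acyclic, which the paper does not need. In the paper's only application, $\mathcal{X}=\Theta$ is open in $\mathbb{R}^d$, so $\Theta/G$ is a $d$-manifold with contractible universal cover. There your argument runs in ordinary singular cohomology with no point-set care, and it replaces Smith's theorem by the elementary fact that a $d$-manifold cannot be a $K(\mathbb{Z}/p,1)$.

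One small cleanup: your identification $\pi_1(\mathcal{X}/G)\cong\mathbb{Z}/p$ via \autoref{sheetlemma} and \autoref{decklemma} is never used afterwards, since the spectral sequence only needs the $G$-action. You can drop it rather than check whether those covering-space lemmas apply when $\mathcal{X}$ is not locally path-connected.
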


\subsubsection{Proof of \autoref{thm:paramthm}}

\begin{proof}
    Necessity of (i) is standard, whereas (ii) follows from \autoref{properlemma}.  We now consider sufficiency. Condition (i) implies that $\bar{\varphi}$ is a local diffeomorphism (\citealt{lee2003smooth}, Prop.\ 4.8). In particular, this implies  $\bar{\varphi}$ is an open map, hence $\bar{\varphi}(\Theta) \subseteq \mathbb{R}^d$ is open and therefore a smooth manifold. Furthermore, \autoref{properlemma} implies $\bar{\varphi}$ is proper. Finally, as $\mathcal{X}$ is contractible, it is connected, hence so too is $\bar{\varphi}(\Theta)$. Then \cite{lee2003smooth} Prop.\ 4.46 implies $\bar{\varphi}: \Theta \to \bar{\varphi}(\Theta)$ is a smooth covering map.\medskip

    Fix some vector $y_0 \in \bar{\varphi}(\Theta)$. As $\bar{\varphi}$ is a covering map, the pre-image $\bar{\varphi}^{-1}(y_0)$ is discrete; since $\bar{\varphi}$ is proper, $\bar{\varphi}^{-1}(y_0)$ is finite, and by the connectedness of $\bar{\varphi}(\Theta)$, the preimage of every $y \in \bar{\varphi}(\Theta)$ has the same cardinality.  Now, as $\Theta$ contractible, it is simply connected, thus \autoref{sheetlemma} implies $\pi_1\big(\bar{\varphi}(\Theta), y_0\big)$ is finite, and by \autoref{decklemma}, $\pi_1\big(\bar{\varphi}(\Theta), y_0\big)$ acts on $\Theta$ via deck transformations such that the identity $\mathbbm{1} \in \pi_1\big(\bar{\varphi}(\Theta), y_0\big)$ is the only element mapped to the identity deck transform $\Theta \to \Theta$.\footnote{This action is simply the isomorphism whose existence is asserted by \autoref{decklemma}. Since the kernel of any group isomorphism is trivial, the claim follows.} Suppose now, for sake of contradiction, that $\bar{\varphi}(\Theta)$ is not simply connected. Then there exists some $\alpha \in \pi_1\big(\bar{\varphi}(\Theta), y_0\big)$, where $\alpha \neq \mathbbm{1}$. Since $\pi_1\big(\bar{\varphi}(\Theta), y_0\big)$ is finite, $\alpha$ must be of finite order, hence for some $n>1$, we have $\alpha^n = \mathbbm{1}$, and $\alpha^m \neq \mathbbm{1}$ for all $m < n$. Without loss of generality, we may suppose $n$ is prime.\footnote{If $n$ is not prime, let $m$ be any prime factor of $n$ and consider $\alpha^\frac{n}{m}$.} Then under the action of $\pi_1\big(\bar{\varphi}(\Theta), y_0\big)$ on $\Theta$, $\alpha$ corresponds to some non-identity deck transformation $D_\alpha: \Theta \to \Theta$ of prime period $n$. But by \hyperlink{smiththeorem}{Smith's Theorem}, $D_\alpha$ must have a fixed point, thus \autoref{liftinglemma} implies $D_\alpha$ is the identity, a contradiction. Thus $\pi_1\big(\bar{\varphi}(\Theta), y_0\big) = \{\mathbbm{1}\}$ and \autoref{sheetlemma} implies the preimage $\bar{\varphi}^{-1}(y_0)$, and hence $\bar{\varphi}^{-1}(y)$ for every $y \in \bar{\varphi}(\Theta)$, is singleton; in particular this means $\bar{\varphi}$ is injective.  By \cite{lee2003smooth} Prop.\ 4.33.b, $\bar{\varphi}$ is then a diffeomorphism, as desired.
\end{proof}

\pagebreak
\setcounter{page}{1}
\setcounter{footnote}{0}
\emptythanks
\title{Online Appendix for ``Identification in Stochastic Choice"}
\author{Peter Caradonna\thanks{Division of the Humanities and Social Sciences, Caltech.  Email: \href{mailto:ppc@caltech.edu}{\nolinkurl{ppc@caltech.edu}}.} \hspace{.1cm} and Christopher Turansick\thanks{Department of Decision Sciences, Bocconi University. Email: \href{mailto:christopher.turansick@unibocconi.it}{\nolinkurl{christopher.turansick@unibocconi.it}}.
}}
\date{\today}

\maketitle

\section{Beyond Random Utility}

\subsection{Random Choice Models}\label{sec:RCMgraph}\hypertarget{bru}{}

The proof of \autoref{thm:SCgeneral}, as well as the corresponding analogues of \autoref{thm:extremepoints} and \autoref{thm:OrderedRUM}, follow from \autoref{thm:flowryserspacetheorem}, \autoref{thm:Graphextremepoints}, and \autoref{thm:OrderedFlows} applied to the following graphical representation of random choice rules.\medskip

Let $\Sigma \subseteq 2^X \setminus \{\varnothing\}$ denote the collection of observable menus, and index these $A_1, \ldots, A_{\vert \Sigma \vert}$.  Define a graph with $\mathcal{N}=\Sigma \cup\{\varnothing\}$, and for every $1 \le i \le \vert \Sigma \vert$ and for each $x \in A_i$ there is a unique edge $A_i \to A_{i+1}$ in $\mathcal{E}$, where we define $A_{\vert \Sigma \vert +1}=\varnothing$.\footnote{Thus $\mathcal{E}$ is a multiset.} For the edge $e \in \mathcal{E}$ of the form $A_i \to A_{i+1}$ associated with the alternative $x \in A_i$, we define $f(e) = \rho(x,A_i)$. Since $\rho \geq0$ and $\sum_{x \in A_i}\rho(x,A_i)=1$, by \eqref{flowdef}, $f$ is a flow.\medskip

In the random choice graph, each path corresponds to a sequence of the form $(x_1,A_1),$ $(x_2,A_2),$ $\dots,$ $(x_{|\Sigma|},A_{|\Sigma|})$. This path exactly corresponds to the choice function which chooses $x_i$ from $A_i$. Thus there is a one-to-one correspondence between $\mathcal{P}$ and $\mathcal{C}_\Sigma$, and hence between path decompositions of $f$ and random choice representations. Moreover, a function $f: \mathcal{E} \to \mathbb{R}_+$ admits a path decomposition if and only if it is a (quasi-)flow, hence path decompositions on this graph precisely correspond to the identified sets of the random choice model.\medskip

We now proceed with the proof of \autoref{thm:SCgeneral}.
\begin{proof}  
    As an application of \autoref{MainLemmaflow} to the graphical representation described above, we have observational equivalence between two distributions over choice functions, $\nu$ and $\mu$, if and only if $\nu=\mu + \sum_{i=1}^n c_i R'_i$ where $R_i$ do not correspond to the Ryser swaps in \autoref{thm:SCgeneral} but the Ryser swaps from \autoref{MainLemmaflow}. For the purposes of this proof, call the Ryser swaps from \autoref{thm:SCgeneral} pseudo-Ryser swaps. We now show that every pseudo-Ryser swap can be generated by a series of Ryser swaps and every Ryser swap can be generated by a series of pseudo-Ryser swaps.

    Consider an unweighted Ryser swap. Given two choice functions $(c_0,c_1)$ it creates choice functions $(c_2,c_3)$ where $c_2$ agrees with $c_0$ strictly above $A_i$ (according to the indexing of sets used in our graphical representation) and agrees with $c_1$ weakly below $A_i$. Further, $c_3$ agrees with $c_1$ strictly above $A$ and agrees with $c_0$ weakly below $A_i$. Now our aim is to construct a pseudo-Ryser swap out of Ryser swaps. We consider a Ryser swap between $c_0$ and $c_1$ at choice set $A_i$. Start with two choice functions $(c_0,c_1)$. Ryser swap these choice functions at choice set $A_i$. This induces choice functions $(c_2,c_3)$. Consider the choice set $A_{i+1}$. Now Ryser swap $(c_2,c_3)$ at choice set $A_{i+1}$. The resulting choice functions $(c_4,c_5)$ satisfy the following.
    \begin{enumerate}
        \item $c_4$ agrees with $c_0$ on every set $B \neq A_i$ and agrees with $c_1$  at $A_i$
        \item $c_5$ agrees with $c_1$ on every set $B \neq A_i$ and agrees with $c_0$ at $A_i$
    \end{enumerate}
    Thus $(c_4,c_5)$ are the resultant choice functions of a pseudo-Ryser swap on $(c_0,c_1)$.

    Now we show that a Ryser swap can be constructed from pseudo-Ryser swaps. A Ryser swap swaps the choices between $c_0$ and $c_1$ at every choice set $A_j$ satisfying $j \geq i$ for some choice set $A_i$. Simply apply a pseudo-Ryser swap at each of these choice sets. This sequence of pseudo-Ryser swaps results in a Ryser swap. The prior equivalence shows that the span of Ryser swaps and the span of pseudo-Ryser swaps is the same. As such, by \autoref{MainLemmaflow}, we get that \autoref{thm:SCgeneral} holds.
\end{proof}

\subsection{Frame Dependent Random Utility}\label{sec:FRUMgraph}\hypertarget{frumgraph}{}
As mentioned in \autoref{sec:Extensions}, there is a more general version of the frame-dependent random utility model. Instead of an alternative being framed and not framed, there are now multiple frames. We label these frames as $\{1,\dots,M\}$. In addition, there is a second version of the model where the frames are labeled as $\{0,1,\dots,M\}$ and the zero frame corresponds to an alternative not being available. For now, we will focus on the version of the model without the zero frame.

Let $x^m$ denote alternative $x$ under frame $m$. Mirroring our notation from \autoref{sec:Extensions}, define $X^*=\{x_1^1,\dots,x_K^1,\dots,x_1^M,\dots,x_K^M\}$ denoting the set of alternative-frame pairs. A menu corresponds to some subset $A \subsetneq X^*$ satisfying, for each $i\in \{1,\dots,K\}$, there is exactly one $m$ such that $x_i^m\in A$. Thus a menu contains every alternative $x_i$ with each alternative under some alternative-dependent frame $m$.

The frame-dependent random utility model considers strict preferences over the set $X^*$ under one additional assumption. The model restricts to preferences $\succ$ such that, if $m>m'$, then $x_i^m \succ x_i^{m'}$. Given a preference $\succ$, observe that if, for $m>1$ and arbitrary $i\neq j$, $x_j^1 \succ x_i^m$, then there is no menu $A$ in which $x_i^m$ is maximal. With this in mind, we now define a truncated preference $\succ^*$. Given a preference $\succ$, let $\succ^*$ be the ranking which agrees with $\succ$ up to, and including, the best alternative under frame $1$. That is, $\succ^*$ is formed by taking $\succ$ and removing all alternatives $x_i^m$ which rank below the highest alternative of the form $x_j^1$. Let $\mathcal{L}^*$ denote the set of truncated strict preferences.

The data are compatible with the frame-dependent random utility model if there exists some distribution $\mu \in \Delta(\mathcal{L}^*)$ such that: $$\rho(x,A)=\sum_{\succ^*\in\mathcal{L}^*}\mu(\mathcal{L}^*)\mathbbm{1}_{\{x \textrm{ is $\succ^*$-maximal in $A$}\}}$$ for all menus $A$ and each $x \in A$. Here, $x$ corresponds to some alternative $x_i^m$ and menu $A$ satisfies the assumption we have made in this section. As before, we say that two truncated preferences $\succ^*_1,\succ_2^*\in\mathcal{L}^*$ are $k$-compatible if they agree on their $k$-best alternatives, though not necessarily their ranking. A Ryser swap for the frame-dependent model is simply a signed measure in $\mathbb{R}^{\mathcal{L}^*}$ which places unit mass on a $k$-compatible pair $\succ_1^*$ and $\succ_2^*$ and mass negative one on the pair of truncated preferences obtained by swapping their $k$-initial segments. Let $\mathcal{R}^{fd} \subset \mathbb{R}^{\mathcal{L}^*}$ denote the linear span of these vectors. \autoref{thm:FRUMcor} still holds under this formulation of the frame-dependent random utility model and $\mathcal{R}^{fd}$.

The graphical representation of the frame dependent random utility model is due to \citet{Cheung2024FRUM} although we make a slight alteration to fit exactly into our framework. Let $\mathcal{X}$ denote the collection of all menus $A \subsetneq X^*$ satisfying, for each $i\in \{1,\dots,K\}$, there is exactly one $m$ such that $x_i^m\in A$. Consider the directed acyclic graph where $\mathcal{N}=\mathcal{X} \cup\{t\}$ and the edge set is described as follows.
\begin{itemize}
    \item For each $A \in \mathcal{X}$ and each $x_i^m \in A$ with $i>1$, there is an edge from $A$ to $(A \setminus \{x_i^m\})\cup\{x_i^{m-1}\}$, $A\rightarrow(A \setminus \{x_i^m\})\cup\{x_i^{m-1}\}$
    \item For each $A \in \mathcal{X}$ and each $x_i^1 \in A$, there is an edge from $A$ to $t$, $A\rightarrow t$
\end{itemize}
Here, the source node corresponds to the node $A^*$ where $A^*$ is the menu given by $\{x_1^M,\dots,x_K^M\}$. In order to describe our flow function, first observe that each menu $A$ can be represented as a vector in $\bar{A} \in \mathbb{R}^X$. Each dimension of $\bar{A}$ is indexed by some element $x \in X$. The dimension of $\bar{A}$ corresponding to alternative $x$ is given by the frame associated with $x$ in menu $A$. Thus, if $x_i^m\in A$ then the $i$-th dimension of $\bar{A}$ is $m$. Under this observation, the vectors $\bar{A}$ can be partially ordered by the greater than or equal to relationship. As an abuse of notation, we rank menus $A \geq B$ if $\bar{A}\geq \bar{B}$. Given the frame-dependent random choice rule $\rho(x,A)$, we can implicitly define a new object:
$$\rho(x,A)=\sum_{A \leq B: \bar{A}_x=\bar{B}_x}y(x,B),$$
where $\bar{A}_x$ denotes the entry in vector $\bar{A}$ in the dimension indexed by $x$. \citet{Cheung2024FRUM} provides a closed form expression for $y(x,A)$, but, to keep things simple, we only present the recursive definition here. For our graphical representation, we use $y(x,A)$ as the flow assigned to the edge leaving node $A$ which corresponds to alternative $x$. \citet{Cheung2024FRUM} shows that, whenever the frame-dependent random choice rule has a frame-dependent random utility representation, this constitutes a flow as in \autoref{flowdef}.

Our interpretation of the frame-dependent random utility graph relies on our notion of a truncated preference. Consider the path given by $A^*\rightarrow B \rightarrow \dots \rightarrow t$. Observe that $B$ differs from $A^*$ by a single element and only by the frame of that element. Specifically, $A^*$ has $x_1^M$ while $B$ has $x_1^{M-1}$. This generalizes to successive nodes in our path. We now rewrite our path with superscripts on each $\rightarrow$ corresponding to the difference between these two successive nodes: $$A^*\rightarrow^{x_1^M} B \rightarrow^{x_i^m} \dots \rightarrow^{x_j^1} t.$$ This path is then associated with the truncated preference $x_1^M \succ^* x_i^m\succ^*\dots \succ^*x_j^1$. Just as in the previous cases, a path decomposition putting mass on a path corresponds to a probability distribution putting mass on its corresponding truncated preference. \autoref{thm:FRUMcor} is then recovered by applying \autoref{thm:flowryserspacetheorem} to the frame-dependent random utility graph.

As we mentioned earlier, there is a second version of the frame-dependent random utility model which allows for variation in which alternatives are available. This can be encoded as a zero frame which places sufficiently negative utility on an alternative in order to make it never be chosen. The graphical representation which \citet{Cheung2024FRUM} provides for this version of the model satisfies the assumptions we make on the graphs we consider in \autoref{thm:flowryserspacetheorem}. As such, our results can be applied directly out of the box to the frame-dependent random utility model with availability variation. In this case, types correspond to full, strict preferences (i.e. not truncated) over $X^*$.

\subsection{Dynamic Discrete Choice}\label{sec:DDCgraph}
In this section, we extend the setup we described in \autoref{sec:DDC} to allow each period's menu to depend on the previous period's choice. There is a sequence $t = 1, \ldots, T$ of discrete time periods. In each, an agent chooses from some menu $\varnothing \subsetneq A_t \subseteq X$. We assume $A_1$ is exogenously given and, for all $2 \le t \le T-1$, that the menu $A_t$ faced by the agent at time $t$ is determined by:
\[
    A_t(x_{t-1}) \equiv g_{t-1}(x_{t-1}),
\]
where $x_{t-1} \in X$ denotes the agent's choice at $t-1$, and $g_{t-1}: X \to 2^X \setminus \{\varnothing\}$ is a fixed evolution function, describing how the agent's choice at $t-1$ determines their choice set at time $t$.

We consider an empiricist who observes a system of conditional choice probabilities. Formally, this consists of (i) a distribution of time-one choice probabilities $\rho_1 \in \Delta(X)$ supported on $A_1$, and (ii) a family of functions $\rho_t: X \times X \to [0,1]$, for $2 \le t \le T$, such that:
\[
        \sum_{a \in A_t(b)} \rho_t(a \vert b) = 1,
\]
whenever $\rho_{t-1}(b \vert \, \cdot \,)$ is not everywhere zero, and $\rho_t(a \vert b) = 0$ for all $x \in X$ otherwise. Given such data, we seek to describe the set of compatible distributions over choice \emph{histories}.

To represent this problem as one of describing flow functions, we define  $\mathcal{N}$ recursively, as the set of all pairs in $(t,x)$ such that either (i) $t=1$ and $x \in A_1$, or (ii) $2 \le t \le T$, and $x \in A_t(x')$ for some $(t-1, x') \in \mathcal{N}$, as well as an abstract source and sink, $s$ and $t$. Similarly, we let $\mathcal{E}$ consist of all directed edges of the form $(t-1, x') \to (t,x)$, where $x \in A_t(x')$, as well as edges from the source $s$ to each pair in $\{1\} \times A_1$, and edges from each pair in $\mathcal{N}$ with $t=T$ to the sink $t$.

Given such a graph, we recursively define the flow from the data $\rho_t$. For the edge connecting the source node to $(1,x)$, we assign $\rho_1(x)$ as the flow assigned to that edge. For edges connecting $(t-1,x')$ to $(t,x)$, we normalize $\rho_{t}(x|x')$ by multiplying by the total in-flow into node $(t-1,x')$ and assign this new value as the flow for this edge. Finally, for the edge connecting $(T,x)$ to the sink, we simply assign the total in-flow into node $(T,x)$ as the flow. It is straightforward to verify that our construction defines a flow. In fact, our construction recovers the absolute frequency of conditional choices in each period. Each path in our constructed graph corresponds to a full history of choices and a flow function of this graph assigns frequencies to each full history.

\autoref{thm:DDCObsEq} can be recovered by applying \autoref{MainLemmaflow} to this graphical representation under the assumption that the choice set is fixed across time periods and $g_{t-1}(\cdot)$ is equal to our fixed choice set. A version of \autoref{thm:DDCObsEq} can be recovered for the more general setup described here. We simply need to restrict ourselves to considering swaps between histories which are possible given $g_{t-1}(\cdot)$.

\subsection{Correlated Random Utility}
In addition to the three types of models we propose in \autoref{sec:Extensions}, our techniques can also be used to study identification in the correlated random utility model of \citet{chambers2024correlated} as well as the consumption dependent random utility model of \citet{turansick2024consumption}. Suppose that an analyst has access to a random joint choice rule, $\rho: X^2 \times (2^X\setminus\{\varnothing\})^2\rightarrow [0,1]$, which satisfies, for all non-empty $A,B \subseteq X$, $$\sum_{x\in A}\sum_{y \in B} \rho(x,y,A,B)=1.$$ The \textbf{correlated random utility model} asks that there is distribution $\mu \in \Delta(\mathcal{L}^2)$ such that $$\rho(x,y,A,B)=\sum_{\succ \in \mathcal{L}} \sum_{\succ' \in \mathcal{L}} \mu(\succ,\succ') \mathbbm{1}_{\{x \; \succ \,A \setminus x,\; y \; \succ' B\setminus y\}}.$$
The correlated random utility model represents choice in two periods where choices in the two periods may be correlated through the distribution over preference pairs, but, conditional on this draw, the agent's choice in the first period has no impact on their choice in the second period. On the other hand, the \textbf{consumption dependent random utility model} asks that there is a distribution $\mu \in \Delta(\mathcal{L})$ as well as distributions for each alternative preference pair, $(x,\succ)$, $\nu_{(x,\succ)}\in \Delta(\mathcal{L})$ such that $$\rho(x,y,A,B)=\sum_{\succ \in \mathcal{L}}\mu(\succ) \mathbbm{1}_{\{x \; \succ \,A \setminus x\}}\sum_{\succ' \in \mathcal{L}} \nu_{(x,\succ)}(\succ')\mathbbm{1}_{\{y \; \succ' \;B \setminus y\}}.$$
The consumption dependent random utility model represents choice in two periods where an agent's preference, and thus choice, in the second period depends both on their preference and choice in the first period. In each of these models, first period choices are well defined independently of the second period's choice set. As such, we use $\rho_1(x,A)=\sum_{y\in B}\rho(x,y,A,B)$ for any choice of non-empty $B$. 

Both of these models share the same graphical representation, but, while every path on this graph corresponds to some type in the consumption dependent random utility model, the correlated random utility model constitutes a support restriction on this graph. The graphical representation for these two models consists of a series of graphs, one representing choice in the first period and the rest, one for each edge of the first period graph, representing choice in the second period. These series of graphs can then be made into a single graph by replacing each edge of the first period graph with the corresponding second period graph.

To begin, the first period graph is exactly the random utility graph with flows coming from $\rho_1$ and \autoref{bm}. Each second period graph has the same node and edge set as the random utility graph. That is $\mathcal{N} =  2^{X}$ and $\mathcal{E}=\big \{(A,B)\ \vert \; B=A\setminus \{a\}, a \in A\big \}$, where $s = X$ and $t = \varnothing$. The second period graphs differ from the first period graphs in their quasi-flow function. For the second period graph associated with the edge $(A \rightarrow A \setminus \{a\})$ in the first period graph, the random joint choice rule $\rho$ induces a unique flow as follows:
\begin{equation}\label{bmcrum}
    f(B \rightarrow B \setminus \{b\})=\sum_{A \subseteq A'} \sum_{B \subseteq B'}(-1)^{|A'\setminus A| + |B'\setminus B|} \rho(a,b,A',B').
\end{equation}
The function $f$ given in \autoref{bmcrum} defines a quasi-flow as shown in \citet{chambers2024correlated}.

In the consumption dependent random utility model, a type corresponds to a preference in the first period and $|X|$ preferences in the second period. Each of these $|X|$ preferences correspond to the type's preference in the second period conditional on choosing alternative $x$ in the first period. As is the case with the random utility graph, in the first period graph, as well as each of the second period graphs, each path is bijectively associated with a preference. Recall that this bijection is given by regarding a path:
\[
    X \to X \setminus \{x_1\} \to \cdots \to X \setminus \{x_1, \ldots, x_{N-1}\} \to \varnothing
\]
as a sequence of nested lower contour sets, which uniquely defines the preference $x_1 \succ \cdots \succ x_N$. With this bijection in mind, the path on the first period graph defines the preference which dictates the agent's first period choice. Given this path and edge $(A,A\setminus\{a\})$ in this path, a path on the second period graph associated with edge $(A,A\setminus\{a\})$ on the first period graph corresponds to the agent's second period preference conditional on choosing $a$ in the first period. The consumption dependent random utility model puts no restriction on which paths are allowed in the second period graph while the correlated random utility model asks that, for a given type, all paths on the second period graphs are the same.

This graphical construction is most easily understood via a first period graph and a series of second period graphs. We can directly apply our result about path decomposition to each of these graphs to discuss uniqueness in the correlated and consumption dependent random utility models. As mentioned earlier, we can alternatively replace each edge in the first period graph with its corresponding second period graph to get a single graph which acts as a graphical representation for these two models. In this case, we need only apply our graphical results to this single graph.

While our focus in this section was to discuss how to extend the graphical representation of the random utility model in order to capture multidimensional analogues of the random utility model, this process generalizes to other models which can be represented by path decompositions of quasi-flows on directed acyclic graphs. This allows for discussion of both correlation of types/paths across dimension, as in the correlated random utility model, as well as dependence of later types/paths on earlier types-choice/path-edge pairs, as in the consumption dependent random utility model.

\subsection{Maximization of Other Types of Orders}
In the main body of the paper, we focused on agents who maximize linear orders. However, an analyst may be wary of this assumption and instead only ask that agents maximize a weak order, an interval order, or a semiorder. Under differing data assumptions than what we use for the random utility model, \citet{davis2018extended} provides graphical representation for models which consider a distribution over agents maximizing each of these three types of orders. \citet{doignon2023adjacencies} also studies the graphical representation of these models in order to characterize adjacency of extreme points in their corresponding polytopes. The techniques we develop here can be applied to discuss uniqueness in each of these models.

\section{Further Results}

\subsection{Technical Results}

\begin{lemma}\label{rationalLem}
    Let $\mathcal{R}$ denote the subspace of $\mathbb{R}^\mathcal{P}$ spanned by vectors of the form:
    \[
        \mathbbm{1}_{\{P, P'\}} - \mathbbm{1}_{\{P'', P'''\}}
    \]
    where $(P,P')$ are compatible and $(P'', P''')$ are their corresponding conjugate, and by minor abuse of notation let $\Delta(\mathcal{P}')$ denote the face of $\Delta(\mathcal{P})\subseteq \mathbb{R}^\mathcal{P}$ spanned by the abstract simplex $\varnothing \subsetneq \mathcal{P}' \subseteq \mathcal{P}$.  Suppose there exists $t \in \mathbb{R}^\mathcal{P}$ such that:
    \[
       \dim \bigg[\big(\mathcal{R} + t\big) \cap \Delta(\mathcal{P}')\bigg] > 0.
    \]
    Then there exists $t' \in \mathbb{R}^\mathcal{P}$ such that $\big(\mathcal{R} + t'\big) \cap \Delta(\mathcal{P}')$ contains a pair of distinct points in $\mathbb{Q}^\mathcal{P}$.
\end{lemma}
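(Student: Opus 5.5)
The key observation is that every generator of $\mathcal{R}$ has integer entries, so $\mathcal{R}$ is a rational subspace. It is therefore cut out by rational linear equations, and the polytope $\big(\mathcal{R}+t\big)\cap\Delta(\mathcal{P}')$ varies nicely as $t$ moves. The plan is to move from the given translate to one passing through a rational point, then to find a second rational point in the same translate along a rational direction.

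\emph{Step 1: rational equations.} Since the Ryser swaps are integer vectors, Gaussian elimination over $\mathbb{Q}$ produces a rational matrix $M$ with $\mathcal{R} = \ker M$. For any $x$, the set $(\mathcal{R}+x)\cap\Delta(\mathcal{P}')$ is then
\[
\{y \in \mathbb{R}^\mathcal{P} : My = Mx,\; y \ge \mathbf{0},\; y_P = 0 \ (P\notin\mathcal{P}'),\; \textstyle\sum_P y_P = 1\}.
\]

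\emph{Step 2: choose the translate and a nonzero direction.} By hypothesis the set $C_t=(\mathcal{R}+t)\cap\Delta(\mathcal{P}')$ is a polytope of positive dimension. Hence its relative interior contains a point $\pi$ together with a nonzero $d \in \mathcal{R}\cap\mathbb{R}^\mathcal{P}_{\mathcal{P}'}$ with $\sum_P d_P = 0$. Indeed, $\mathcal{R}\cap\mathbb{R}^\mathcal{P}_{\mathcal{P}'}\cap\{\sum_P d_P=0\}$ is the direction space of $C_t$, since every Ryser swap has zero total mass. This direction space is the kernel of a rational matrix, so it has a rational basis, and we may take $d\in\mathbb{Q}^\mathcal{P}$. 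Let $S=\textrm{supp}(\pi)\subseteq\mathcal{P}'$. Because $\pi$ lies in the relative interior of $C_t$, we have $\pi+\varepsilon d\in C_t$ for all small $|\varepsilon|$, which forces $\textrm{supp}(d)\subseteq S$.

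\emph{Step 3: replace $\pi$ by a rational point.} The set
\[
\{y : y_P>0 \text{ for } P\in S,\; y_P=0 \text{ for } P\notin S,\; \textstyle\sum_P y_P=1\}
\]
is a relatively open subset of a rational affine subspace containing $\pi$. Rational points are dense in rational affine subspaces, so we can pick a rational $q$ in this set arbitrarily close to $\pi$. Set $t'=q$. Then $q\in(\mathcal{R}+t')\cap\Delta(\mathcal{P}')$. Since $q$ has strictly positive entries exactly on $S\supseteq\textrm{supp}(d)$, for a sufficiently small rational $\varepsilon>0$ the point $q+\varepsilon d$ is nonnegative, rational, sums to one, is supported in $\mathcal{P}'$, and differs from $q$ by an element of $\mathcal{R}$. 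So $q$ and $q+\varepsilon d$ are two distinct rational points of $(\mathcal{R}+t')\cap\Delta(\mathcal{P}')$.

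The main obstacle is Step 2: obtaining a direction $d$ that is simultaneously rational, lies in $\mathcal{R}$, and is supported inside the support of an interior point. Rationality of $d$ follows from the direction space being the kernel of a rational system, namely the equations defining $\mathcal{R}$, the vanishing of coordinates outside $\mathcal{P}'$, and zero total mass. The support condition follows from choosing $\pi$ in the relative interior of $C_t$. Once both are secured, Step 3 is routine density.
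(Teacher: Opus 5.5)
Your strategy is sound, and once repaired it is the paper's: find a nonzero rational vector in $\mathcal{R}\cap\mathcal{V}$, where $\mathcal{V}=\mathbb{R}^{\mathcal{P}}_{\mathcal{P}'}\cap\{\sum_P d_P=0\}$, then add a small rational multiple of it to a rational point of $\Delta(\mathcal{P}')$ that is strictly positive wherever that vector is nonzero. However, Step 2 relies on a false identification. Let $\pi$ lie in the relative interior of $C_t$ and let $S=\textrm{supp}(\pi)$. Then every point of $C_t$ is supported in $S$, so the direction space of $C_t$ is $\mathcal{R}\cap\mathbb{R}^{\mathcal{P}}_{S}\cap\{\sum_P d_P=0\}$. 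This is strictly smaller than the space $\mathcal{R}\cap\mathcal{V}$ you name whenever $C_t$ sits in a proper face of $\Delta(\mathcal{P}')$.

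For a concrete failure, take $X=\{a,b,c,d\}$, $\mathcal{P}'=\mathcal{L}$, and $t$ uniform on $\{abcd,badc\}$. Then $C_t$ is the segment joining $t$ to the uniform distribution on $\{abdc,bacd\}$, and $S=\{abcd,badc,abdc,bacd\}$. Yet $\mathcal{R}\cap\mathcal{V}$ contains the Ryser swap $d=\mathbbm{1}_{\{acdb,\,cabd\}}-\mathbbm{1}_{\{acbd,\,cadb\}}$, which lives entirely off $S$. For this $d$, $\pi+\varepsilon d$ has negative entries for every $\varepsilon\neq 0$. So the assertion $\pi+\varepsilon d\in C_t$ fails, the conclusion $\textrm{supp}(d)\subseteq S$ fails with it, and in Step 3 $q+\varepsilon d\notin\Delta(\mathcal{P}')$.

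The repair is immediate, and there are two ways to do it.
\begin{itemize}
\item Replace $\mathcal{P}'$ by $S$ in your description of the direction space. It is still cut out by rational linear equations, so a nonzero rational $d$ supported in $S$ exists, and Step 3 goes through.
\item More simply, drop $\pi$ and the density argument entirely. Since $t'$ is yours to choose, take $t'=q=\frac{1}{|\mathcal{P}'|}\mathbbm{1}_{\mathcal{P}'}$, which is rational and strictly positive on all of $\mathcal{P}'$. Then $q+\varepsilon d\in(\mathcal{R}+q)\cap\Delta(\mathcal{P}')$ for any nonzero rational $d\in\mathcal{R}\cap\mathcal{V}$ and small rational $\varepsilon>0$. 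Such a $d$ exists because $\mathcal{R}\cap\mathcal{V}$ contains the nonzero direction space of $C_t$.
\end{itemize}
The second option is exactly the paper's proof.

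Even on your route, the closeness of $q$ to $\pi$ is never used: any rational point with support exactly $S$, such as the uniform distribution on $S$, would do. The remaining ingredient is that $\mathcal{R}\cap\mathcal{V}$ has a rational basis. The paper obtains this from the rational null space of $[Q_{\mathcal{R}}\mid -Q_{\mathcal{V}}]$, and you obtain it by writing the space as the kernel of a rational matrix; the two are equivalent.
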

\begin{proof}
    Let $\mathcal{V}$ denote the subspace of $\mathbb{R}^\mathcal{P}$ given by:
    \[
        \textrm{span } \big\{e^{P} - e^{P'}\big\}_{P,P' \in \mathcal{P}'},
    \]
    where $e^i$ denotes the $i$th standard Euclidean  basis vector.  By hypothesis, 
    \[
        K = \textrm{dim } \mathcal{R} \cap \mathcal{V} > 0.
    \]
    From their definitions, both $\mathcal{R}$ and $\mathcal{V}$ admit bases $\{q_{\mathcal{R}}^i\}_{i = 1}^{\textrm{dim}(\mathcal{R})}$ and $\{q_{\mathcal{V}}^j\}_{j = 1}^{\textrm{dim}(\mathcal{V})}$ which belong to $\mathbb{Q}^\mathcal{P}$. Define the $\vert \mathcal{P}\vert  \times \textrm{dim}(\mathcal{R})$ and $\vert \mathcal{P}\vert \times \textrm{dim}(\mathcal{V})$ matrices:
    \[
        Q_\mathcal{R} = \begin{bmatrix} q^1_\mathcal{R} & \cdots & q^{\textrm{dim}(\mathcal{R})}_\mathcal{R} \end{bmatrix}
    \]
    and
    \[
        Q_\mathcal{V} = \begin{bmatrix} q^1_\mathcal{V} & \cdots & q^{\textrm{dim}(\mathcal{V})}_\mathcal{V} \end{bmatrix},
    \]
    respectively, and let:
    \[
        Q = \begin{bmatrix} Q_\mathcal{R} & \aug & -Q_\mathcal{V} \end{bmatrix}
    \]
    Since each of the above matrices consists exclusively of rational elements, their respective column spaces trivially admit a bases in $\mathbb{Q}^\mathcal{P}$. This implies that, by Gaussian elimination, the annihilators of their column spaces admit a bases in $\mathbb{Q}^\mathcal{P}$; in particular the annihilator of the column space of $Q$ admits a rational basis $\{r^K\}_{k=1}^K$, where each vector $r^k = [r^k_\mathcal{R} \; \vert \; r^k_\mathcal{V}]$.\medskip

    Let $\bar{q}^k = Q_\mathcal{R} r^k_\mathcal{R} \, \big( = Q_\mathcal{V} r_\mathcal{V}^k\big)$.  By construction, $\{\bar{q}^k\}_k \subset \mathbb{Q}^\mathcal{P}$ and, by construction $\{\bar{q}^k\}_k$ form a basis for $\mathcal{R} \cap \mathcal{V}$. Since $\textrm{dim}(\mathcal{R} \cap \mathcal{V}) > 0$, there exists some non-zero $\bar{q}^k \in \mathcal{R} \cap \mathcal{V} \cap \mathbb{Q}^\mathcal{P}$. Then letting:
    \[
        t' = \frac{1}{\vert \mathcal{P}'\vert} \mathbbm{1}_{\mathcal{P}'},
    \]
    we have $t'$ and $t' + \alpha \bar{q}^k$, for small enough $\alpha \in \mathbb{Q}_{++}$, are distinct rational vectors in $\big(\mathcal{R} + t'\big) \cap \Delta(\mathcal{P}')$ as desired.
\end{proof}

\subsection{Order-Based Restrictions and Extreme Points}\label{sec:orderedExtpoints}\hypertarget{sec:orderedExtpoints}{}
We begin by noting that every swap-progressive path decomposition of a quasi-flow is an extreme point of the set of path decompositions for the given quasi-flow.
\begin{proposition}\label{prop:swapProgisExt}
    Consider a graph with quasi-flow function $f$. Each swap-progressive path decomposition of $f$ is an extreme point of the set of path decompositions of $f$.
\end{proposition}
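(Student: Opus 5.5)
The plan is to reduce extremality directly to the uniqueness half of \autoref{thm:OrderedFlows}, with no new combinatorics. Let $\pi$ be a swap-progressive path decomposition of $f$, with support ordered $P_1,\dots,P_N$ as in the definition. Suppose $\pi$ is not extreme in the set of path decompositions of $f$. Then there are distinct path decompositions $\pi_1,\pi_2$ of $f$ and some $\lambda\in(0,1)$ with $\pi=\lambda\pi_1+(1-\lambda)\pi_2$. (By \autoref{thm:flowryserspacetheorem}, this set of decompositions is the same as $\Delta(\mathcal{P})\cap(\pi+\mathcal{R})$ when $f$ is a flow, so the two formulations agree.)

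First, I would note that both $\pi_1$ and $\pi_2$ are non-negative and $\lambda,1-\lambda>0$. Hence $\textrm{supp}(\pi_1)\subseteq\textrm{supp}(\pi)$ and $\textrm{supp}(\pi_2)\subseteq\textrm{supp}(\pi)$.

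Second, I would check that swap-progressivity passes to subsets of the support. Order $\textrm{supp}(\pi_i)$ by restricting the ordering $P_1,\dots,P_N$. The defining condition only involves pairs of paths in the support: whenever $P_a,P_b$ are compatible at a node $n$ and $a>b$, the edge of $P_a$ leaving $n$ is $\trianglerighteq$ the edge of $P_b$ leaving $n$. If this holds for every pair in $\textrm{supp}(\pi)$, it holds for every pair in any subset of it. So each $\pi_i$ is a swap-progressive path decomposition of the same quasi-flow $f$.

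Third, the uniqueness part of \autoref{thm:OrderedFlows} then forces $\pi_1=\pi=\pi_2$. This contradicts $\pi_1\neq\pi_2$, so $\pi$ is an extreme point.

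I expect no real obstacle. The only point needing care is that the decomposition set here is that of path decompositions of the fixed quasi-flow $f$, so that $\pi_1$ and $\pi_2$ decompose exactly $f$ and the uniqueness statement applies verbatim. As a cross-check, one could instead verify condition (3) of \autoref{thm:Graphextremepoints}, that $\textrm{supp}(\pi)$ contains no distinct rearrangement-equivalent sequences. That route is more cumbersome, and the averaging argument above suffices.
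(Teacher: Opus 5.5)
Your argument is correct. Its key ingredients match the paper's proof: uniqueness in \autoref{thm:OrderedFlows}, together with the observation that swap-progressivity is inherited by any decomposition whose support lies in $\textrm{supp}(\pi)$. The difference is in how you conclude. The paper turns uniqueness into linear independence of the incidence vectors $\{\mathbbm{1}_{\{e\in P\}}\}_{P\in \textrm{supp}(\pi)}$ and then invokes \autoref{cor:LinInd} and \autoref{thm:Graphextremepoints}, which rest on Winkler's characterization of extreme points of moment sets. You instead argue directly from the definition of an extreme point. The only fact you need is that a convex combination of non-negative measures with weights in $(0,1)$ cannot have support outside $\textrm{supp}(\pi)$.

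Your route is more elementary and self-contained. It also avoids a step the paper leaves implicit: passing from uniqueness of non-negative decompositions on $\textrm{supp}(\pi)$ to linear independence requires $\pi$ to be strictly positive on its support, so that small perturbations along a linear dependence stay non-negative. In exchange, the paper's route records a structural fact: the support of a swap-progressive decomposition has linearly independent incidence vectors, and hence contains no distinct rearrangement-equivalent sequences, as in \autoref{thm:extremepoints}. By \autoref{thm:Graphextremepoints} that property is equivalent to extremality anyway, so your shortcut loses nothing.
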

\begin{proof}
    Suppose we have a swap-progressive path decomposition $\pi$ for some quasi-flow function $f$. Note that any path decomposition with the same support is also a swap-progressive path decomposition as swap-progressivity is a property of the support. By \autoref{thm:OrderedFlows}, we know that a swap-progressive path decomposition is unique. This means that no other path decomposition whose support is contained within $\textrm{supp}(\pi)$ decomposes the same quasi-flow. Thus, the set of flows $\{\mathbbm{1}_{\{e\in P\}}\}_{P\in \textrm{supp}(\pi)}$ is linearly independent. By \autoref{cor:LinInd} and \autoref{thm:Graphextremepoints}, we get that $\pi$ is an extreme point of the set of path decompositions inducing quasi-flow $f$.
\end{proof}

Now we note that our notion of swap-progressivity, for both graphs and random utility, is a special case of a more general notion of an ordered restriction. Consider a graph $(\mathcal{N},\mathcal{E},s,t)$ and the set of paths $\mathcal{P}$. Consider some linear order over $\mathcal{P}$, $\trianglerighteq_\mathcal{P}$. For ease, $\trianglerighteq_\mathcal{P}$ is equivalent to an enumeration of $\mathcal{P}$ where $P_i \trianglerighteq_\mathcal{P} P_j \iff i \leq j$. Given an ordering over edges $\trianglerighteq_\mathcal{E}$, we can induce some ordering over paths $\trianglerighteq_\mathcal{P}$. We do this by starting at node $s$ and saying that a path $P \trianglerighteq_\mathcal{P}P'$ if $(s,n) \trianglerighteq_\mathcal{E}(s,n')$ where $(s,n)$ and $(s,n')$ are the first edges of $P$ and $P'$ respectively. Generally, this will not yet leave us with a full linear over paths so we iterate on this construction. Now move onto any node $n$ for which each edge of the form $(m,n)$ has already been considered. At node $n$, each path $P$ which passes through node $n$ will already be partially ordered. Further, by construction, incomparability by this partial order is an equivalence relation among those paths which pass through node $n$. For each of these equivalence classes, say that $P \trianglerighteq_\mathcal{P}P'$ if $(n,m) \trianglerighteq_\mathcal{E}(n,m')$ where $(n,m)$ and $(n,m')$ are edges in $P$ and $P'$ respectively. By iterating this process through each node, we are left with a linear order over paths, $\trianglerighteq_\mathcal{P}$. Recall that our construction of a swap-progressive proceeds as follows.
\begin{enumerate}
    \item Set $i=1$.
    \item Put as much mass on $P_i$ as possible (i.e. the minimum of $f(e)$ among edges in $P_1$). Call this amount $\pi(P_i)$.
    \item Set $f(e)=f(e)$ if $e \not \in P_i$. Set $f(e)=f(e)-\pi(P_i)$ if $e \in P_i$.
    \item Set $i=i+1$. If $i>|\mathcal{P}|$, terminate the construction. Otherwise, return to step $2$.
\end{enumerate}
Notably, the construction works for any linear order over paths $\trianglerighteq_\mathcal{P}$ and, given a quasi-flow function $f$, the induced path decomposition of the quasi-flow function will be unique given $\trianglerighteq_\mathcal{P}$. We call a path decomposition induced by such a construction an \textbf{ordered path decomposition}. Single-crossing, swap-progressive, and progressive representations are ordered path decompositions of the correctly chosen graph, but there are also ordered path decompositions which are not single-crossing, swap-progressive, or progressive. While ordered path decompositions share many of the nice identification properties of swap-progressivity, ordered path decompositions are not sufficiently general enough to recover every extreme point of the set of path decompositions for every quasi-flow.

\begin{proposition}\label{prop:ExtnotOrdered}
    There exist graphs $(\mathcal{N},\mathcal{E},s,t)$, flows $f$, and extreme points of the set of path decompositions of $f$ such that any ordered decomposition of $f$ does not induce the extreme point. Further, the random utility graph has this property.
\end{proposition}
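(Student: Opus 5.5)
The plan is to exhibit an explicit counterexample on the random utility graph. Take a flow $f$ together with an extreme point $\pi$ of its set of path decompositions such that no linear order $\trianglerighteq_\mathcal{P}$ on paths produces $\pi$ through the greedy construction. The key observation concerns how an ordered path decomposition is built. Whichever path of $\textrm{supp}(\pi)$ comes first in $\trianglerighteq_\mathcal{P}$, or more precisely the first path in the order that receives positive mass, receives mass equal to $\min_{e\in P} f(e)$. As a result, some edge is saturated by a single path. More generally, every path in the support of an ordered decomposition, at the moment it is processed, saturates at least one edge of the residual quasi-flow. Therefore it suffices to find an extreme point $\pi$ in which, after any ordering of its support, some path fails to exhaust the residual capacity of every one of its edges. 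The cleanest target is stronger: no edge is carried by exactly one support path with the full flow of that edge, in the sense required by the first greedy step.

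I would first work with the $|X|=4$ random utility graph, to make computation easy. I would look for a set $\mathcal{S}$ of preferences with linearly independent path-incidence vectors; by \autoref{thm:Graphextremepoints} this is equivalent to containing no distinct rearrangement-equivalent sequences. The set should also have the property that every edge used by some path in $\mathcal{S}$ is used by at least two paths of $\mathcal{S}$, apart from edges entering $t$ or leaving $s$, which I would also need to check. Then let $\pi$ be uniform, or have generic positive weights, on $\mathcal{S}$, and set $f=\sum_P \pi(P)\mathbbm{1}_{\{e\in P\}}$. By \autoref{thm:Graphextremepoints}, $\pi$ is an extreme point of the identified set.

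Suppose some ordered decomposition equals $\pi$. Let $P$ be the first path in $\trianglerighteq_\mathcal{P}$ with $\pi(P)>0$. Every path earlier in the order receives zero mass, so those paths must each already have an edge of zero residual flow; but the residual is still $f$ at that point, so each earlier path uses an edge outside $\textrm{supp}(f)$. Hence at $P$ the residual is still $f$, and the construction assigns $\pi(P)=\min_{e\in P} f(e)$. This requires some edge $e\in P$ with $f(e)=\pi(P)$, meaning that $e$ is used by no other support path. That contradicts the design condition, so no ordered decomposition yields $\pi$.

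The main obstacle is to find such an $\mathcal{S}$: linearly independent, yet with every edge doubly covered. The two requirements pull against each other, because heavy edge overlap tends to create rearrangement-equivalent pairs. Edges out of $s$ are the crucial ones: each first-choice alternative appearing in $\mathcal{S}$ must be shared by at least two preferences. Edges into $t$ are automatically shared, since the only such edge is $\{x\}\to\varnothing$ for each $x$, so the last-ranked alternative must be shared as well. With $|X|=4$ I would first try a small search over supports of five to seven preferences. If none works, I would move to $|X|=5$, using the example-style constructions, such as that of \autoref{sixaltexample}, as templates that overlap heavily while blocking swaps. An alternative with more slack is to weaken the design condition to non-generic weights chosen so that no path's weight equals the minimum of the residual flow along it under any order. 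In that case I would verify the finite set of possible orders of the support directly.
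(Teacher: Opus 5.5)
\paragraph{Same reduction as the paper, but the example is missing.} Your reduction is the one the paper uses. If $\mathcal{S}$ has linearly independent path-incidence vectors, then any $\pi$ with support $\mathcal{S}$ is extreme by \autoref{thm:Graphextremepoints}. If, in addition, every edge used by a path of $\mathcal{S}$ is used by at least two of them, then the first positive-mass path $P$ in any order receives $\min_{e\in P}f(e)>\pi(P)$, so $\pi$ is not ordered. Your treatment of the zero-mass paths that precede $P$ in the order is, if anything, more careful than the paper's single line.

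The genuine gap is that the proposition is purely existential, and its whole content is the existence of such an $\mathcal{S}$ in the random utility graph. You state the requirements and a search strategy, and you call finding $\mathcal{S}$ the main obstacle, but you never exhibit or verify a set. As it stands, nothing has been proved. The paper closes this step by importing an explicit example: the eight paths of Example 4.1 (graph in Figure 2) of \cite{chambers2024limits}. Their incidence vectors are shown there to be linearly independent, every edge is shared by another path in the set, and the graph is a support restriction of the random utility graph.

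\paragraph{Problems with the search plan.} Your first search stage cannot succeed. Take $\vert X\vert=4$ and suppose $x_1x_2x_3x_4\in\mathcal{S}$.
\begin{itemize}
\item The edge $X\setminus\{x_1\}\to X\setminus\{x_1,x_2\}$ is shared only with $x_1x_2x_4x_3$, so that preference must be in $\mathcal{S}$.
\item The edge $\{x_3,x_4\}\to\{x_4\}$ is shared only with $x_2x_1x_3x_4$, so it must be in $\mathcal{S}$ too.
\item Applying the same reasoning to $x_1x_2x_4x_3$ forces $x_2x_1x_4x_3$ into $\mathcal{S}$.
\end{itemize}
But then $\mathbbm{1}_{\{x_1x_2x_3x_4,\,x_2x_1x_4x_3\}}-\mathbbm{1}_{\{x_1x_2x_4x_3,\,x_2x_1x_3x_4\}}$ is a nonzero Ryser swap, i.e.\ a linear dependence. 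With three alternatives the second edge already determines the path, so sharing is impossible there too. You therefore need at least five alternatives, and there the tension you identify between overlap and independence makes the search genuinely nontrivial.

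The fallback of tuning weights also buys no slack. Fix an order on paths. The greedy construction reproduces $\pi$ if and only if two conditions hold when each path is reached:
\begin{itemize}
\item each support path has an edge not used by the support paths still unprocessed;
\item each non-support path has an edge not used by those unprocessed support paths.
\end{itemize}
This is because a support path with such an edge automatically receives exactly $\pi(P)$. The condition depends only on $\textrm{supp}(\pi)$, and listing a peelable support first, in peeling order, reproduces $\pi$ for any weights. So an independent set in which every edge is doubly covered is unavoidable; any non-peelable independent support contains one.

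Finally, edges into $\varnothing$ are not automatically shared. Your sentence asserting that they are contradicts itself.
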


\begin{proof}
    Consider the graph in Figure 2 of \citet{chambers2024limits} and the uniform distribution over the eight paths considered in Example 4.1 of the same paper, i.e. putting $\frac{1}{8}$ mass on each of the eight paths. Chambers \& Turansick show the indicators of these eight paths form a linearly independent set of flows. As such, by \citet{winkler1988extreme}, the uniform distribution is an extreme point of the set of edge decompositions inducing the corresponding flow. Now, fix any ordered path decomposition $\pi$. By definition (i.e. the construction in the paragraph preceding \autoref{prop:ExtnotOrdered}), the highest ranked path in $\textrm{supp}(\pi)$ must contain some edge unique to it among paths in $\textrm{supp}(\pi)$. However, as every path in Example 4.1 of \citet{chambers2024limits} shares each of its edges with at least one other path in the example, the uniform distribution over these paths cannot arise as an ordered path decomposition, for any choice of order $\trianglerighteq_\mathcal{E}$; as the graph in this example is obtained as a support restriction of the random utility graph, we are done.
\end{proof}

\section{Examples and Computations Omitted From Text}

\subsection{Failure of Identification in \autoref{chriscex}}\hypertarget{chriscexworked}{}

Let $X=\{x_0, x_1, x_2\}$ and consider any random choice rule $\rho$ which satisfies:
\[
    \textrm{(i)} \quad \rho(x_1,x_0x_1)=\rho(x_2,x_0x_2)=\frac{1}{11} \quad \textrm{ and } \quad \textrm{(ii)} \quad \rho(x_1,x_0x_1x_2)=\rho(x_2,x_0x_1x_2).
\]
(Recall here we only consider menus which include the outside option $x_0 \in X$, hence this completely specifies $\rho$ up to the choice frequency of $x_0$ from $X$.) Consider again the submodel $s$ defined in \autoref{chriscex}. For $i = 1,2$ we have:
\[
    s^i(v_1,c_1, v_2,c_2) = \frac{1+ v_i}{v_i (1+v_ic_i)} = 10,
\]
or, simplifying:
\[
    v_i^2 c_i = 10
\]
for all $i = 1,2$. Similarly, by hypothesis we have:
\begin{equation}\label{keqn}
    s^{i+2}(v_1,c_1,v_2,c_2) =  \frac{1+v_i+v_{i+1}}{v_i+v_iv_{i+1}+ c_iv_i^2} = K
\end{equation}
for each $i = 1,2$ and some $K > 0$, hence simplifying yields:
\[
    v_1 = v_2.
\]
Thus, there are only two pieces of information we have not determined yet: the value of $v_1$ (without loss), and the value of $K$ in \eqref{keqn}, which pins down the choice probability of the outside option from the menu $X$.  From \eqref{keqn}, we obtain:
\[
    \frac{1+2v_1}{v_1 + v_1^2 + 10} = K,
\]
or:
\[
    K\big(10+v_1 + v_1^2) = 1+2v_1.
\]
When, e.g., $K = 0.3$ this has multiple solutions, i.e.\ $v_1 = \frac{5}{3}$ and $v_1 = 4$.\hfill $\blacksquare$

\subsection{A Non-Single Crossing Example}\hypertarget{OrderedFish}{}

Let $X=\{a,b,c,d\}$, and consider the random choice rule $\rho$, which chooses uniformly from any menu. For any linear ordering $\trianglerighteq$ of alternatives, $\rho$ fails to satisfy the centrality axiom of \cite{apesteguia2017single}, and hence does not admit a single-crossing representation.  However, $\rho$ is clearly consistent with the random utility model: for example it is rationalized by the uniform distribution on $\mathcal{L}$.\medskip

By \autoref{thm:OrderedRUM}, the random choice rule $\rho$ admits a swap-progressive rationalization for any order $\trianglerighteq$. To see how swap-progressivity pins down a unique representation, first note that the uniform distribution on $\mathcal{L}$ places positive (and equal) mass on six distinct, non-overlapping collections of four preferences, each consisting of a 2-compatible pair and its 2-conjugate.\footnote{These collections each consist of, for any $\{x,y\} \subset X$, the four preferences whose initial strings are $xy$ or $yx$.}  Given $\trianglerighteq$, the unique swap-progressive representation places (equal) mass on only two of the four preferences in each collection, namely those satisfying:
\[
    \textrm{(i)} \quad \big\{x^1 \trianglerighteq x^2  \text{ and } x^3 \trianglerighteq x^4\big\} \quad \textrm{ or  } \quad \textrm{(ii)} \quad \big\{x^2 \trianglerighteq x^1 \text{ and } x^4 \trianglerighteq x^3\big\}.
\]
In other words, on each collection, the swap-progressive representation uses the order $\trianglerighteq$ to select one of the two compatible pairs to receive full mass.\footnote{See also \autoref{ex:OrderedFishConstruction}.} Concretely, if $a\trianglerighteq b \trianglerighteq c \trianglerighteq d$, the unique swap-progressive representation places equal mass on:
\[
    \succ_1: dcba \quad \succ_2: dbca \quad \succ_3: dabd \quad \succ_4: cdab \quad \succ_5: cbda \quad \succ_6: cadb
\]
\[
    \succ_7: bdac \quad \succ_8: bcad \quad \succ_9: badc \quad \succ_{10}: adbc \quad \succ_{11}: acbd \quad \succ_{12}: abcd.
\]
Thus, even when a single-crossing representation may fail to exist, swap-progressivity provides a natural criteria for selecting rationalizations in a manner that incorporates the natural ordering of the environment.\hfill $\blacksquare$

\end{document}